\pgfplotsset{compat=1.18}
\newcolumntype{P}[1]{>{\centering\arraybackslash}p{#1}}
\newcommand\blfootnote[1]{%
  \begingroup
  \renewcommand\thefootnote{}%
  \NoHyper\footnote{#1}\endNoHyper
  \addtocounter{footnote}{-1}%
  \endgroup
}
\newtheorem{theorem}{Theorem}[section]
\newtheorem{theoremnatural}{Theorem}
\newtheorem{proposition}[theorem]{Proposition}
\newtheorem{lemma}[theorem]{Lemma}
\newtheorem{claim}[theorem]{Claim}
\newtheorem{corollary}[theorem]{Corollary}
\newtheorem{definition}[theorem]{Definition}
\newtheorem{observation}[theorem]{Observation}
\newtheorem*{theorem*}{Theorem}
\newcommand{\reals}{\mathbb{R}}
\newcommand{\indicator}{\textbf{1}}
\DeclarePairedDelimiter{\ceil}{\lceil}{\rceil}
\newcommand{\eps}{\varepsilon}
\DeclareMathOperator*{\argmax}{arg\,max}
\newcommand{\cmmnt}[1]{}
\newcommand{\agents}{A}
\newcommand{\goodobj}{BEST}
\newcommand{\MAX}{\textsc{Max-}}
\newcommand{\maxrevenue}{\textsc{Max-Profit}}
\newcommand{\nash}{\mathsf{NE}}
\newcommand{\noindexcontract}{\boldsymbol{\alpha}}
\newcommand{\icontract}{\alpha_i}
\newcommand{\someindexcontract}{\alpha}
\newcommand{\setofcontracts}{\mathcal{C}(B)}
\newcommand{\noindexsubcontract}[1]{\noindexcontract|_{#1}}
\newcommand{\noindexsubcontractstar}[1]{\noindexcontract^\star|_{#1}}
\newcommand{\epsilonvec}{\boldsymbol{\eps}}
\newcommand{\multiInstance}{\langle \agents, \{T_i\}_{i \in \agents}, f, c \rangle}
\newcommand{\multimlr}{\textsc{Max-Reward-Bounded}}
\newcommand{\bestsingle}[2]{\textsc{Best-Single}_#1\textsc{-}#2}
\newcommand{\specialAgents}{A'}
\newcommand{\goodaction}{\mathcal{G}}
\newcommand{\badaction}{\mathcal{B}}
\newcommand{\actions}{T}
\title{One Action Too Many:\\ Inapproximability of Budgeted Combinatorial Contracts}
\author{ 
Michal Feldman$^\ast$
\quad
Yoav Gal-Tzur$^\dagger$
\quad 
Tomasz Ponitka$^\ddagger$
\quad
Maya Schlesinger$^\mathsection$
}
\date{\today}
\begin{document}

\maketitle

\blfootnote{
This is the full version of an ITCS 2026 paper.
This project has been partially funded by the European Research Council (ERC) under the European Union's Horizon 2020 program (grant agreement No.~866132), by the European Union's Horizon Europe Program (grant agreement No.~101170373), by an Amazon Research Award, by the Israel Science Foundation Breakthrough Program (grant No.~2600/24), and by a grant from TAU Center for AI and Data Science (TAD), and by the NSF-BSF (grant number 2020788).}
\blfootnote{$^\ast$Tel Aviv University and Microsoft ILDC, Israel. Email: \texttt{mfeldman@tauex.tau.ac.il}}
\blfootnote{$^\dagger$Tel Aviv University, Israel. Email: \texttt{yoavgaltzur@mail.tau.ac.il}}
\blfootnote{$^\ddagger$Tel Aviv University, Israel. Email: \texttt{tomaszp@mail.tau.ac.il}}
\blfootnote{$^\mathsection$Tel Aviv University, Israel. Email: \texttt{mayas1@mail.tau.ac.il}}

\begin{abstract}
We study multi-agent contract design with combinatorial actions, under budget constraints, and for a broad class of objective functions, including profit (principal's utility), reward, and welfare. 
Our first result is a strong impossibility: For submodular reward functions, no randomized poly-time algorithm can approximate the optimal budget-feasible value within \textit{any finite factor}, even with demand-oracle access.
This result rules out extending known constant-factor guarantees from either (i) unbudgeted settings with combinatorial actions or (ii) budgeted settings with binary actions, to their combination. The hardness is tight: It holds even when all but one agent have binary actions and the remaining agent has just one additional action. 
On the positive side, we show that gross substitutes rewards (a well-studied strict subclass of submodular functions) admit a deterministic poly-time $O(1)$-approximation, using only value queries. Our results thus draw the first sharp separation between budgeted and unbudgeted settings in combinatorial contracts, and identifies gross substitutes as a tractable frontier for budgeted combinatorial contracts. 
Finally, we present an FPTAS for additive rewards, demonstrating that arbitrary approximation is tractable under any budget. This constitutes the first FPTAS for the multi-agent combinatorial-actions setting, even in the absence of budget constraints.
\end{abstract}
\setcounter{page}{0}
\thispagestyle{empty}

\newpage
\section{Introduction}

Contract design is a central area in microeconomics that studies how to incentivize strategic agents to exert costly effort in their tasks \cite[e.g.,][]{holmstrom1979moral,grossman1992analysis,ross1973economic,innes1990limited,holmstrom1982moral}. The significance of this field to modern economics was highlighted by the 2016 Nobel Prize awarded to Hart and H\"olmstrom for their contributions to contract theory \cite{Nobel2016}. In recent years, the rapid emergence of computerized markets for services has fueled increasing interest in the algorithmic aspects of contract design \cite[e.g.,][]{dutting2019simple,dutting2022combinatorial,babaioff2006combinatorial,castiglioni2023multi}; see \cite{DuttingFT24,Feldman25} for recent surveys.

In this work, we focus on the algorithmic challenges of incentivizing teamwork \cite[e.g.,][]{holmstrom1982moral,babaioff2006combinatorial,duetting2022multi}. Consider the following illustrative example of a startup hiring a team of engineers. The startup seeks to build a successful product, whose probability of success depends both on the team's composition (for instance, the degree to which the engineers' expertise overlaps or complements each other), and on each individual's effort (for example, whether every engineer carefully reviews code). To encourage engineers to exert costly effort, the startup offers them equity: each employee receives a fraction of ownership in the company.

Scenarios of this type are captured by the multi-agent combinatorial-actions model for contract design~\cite{duetting2022multi,multimulti}.
In this model a principal (she, the startup founder) delegates a binary-outcome project to a pool of $n$ potential agents (engineers). The project yields the principal a reward of $1$ upon success and $0$ if it fails. 
Each agent $i \in [n]$ has a set $T_i$ of available actions, and may choose any subset $S_i \subseteq T_i$ of them.
Each action $j \in T_i$ has a non-negative cost $c_j$, and the agent $i$ pays the total cost of their chosen actions, $\sum_{j \in S_i} c_j$.
A set function $f$ maps every profile of actions $S = S_1\sqcup \ldots \sqcup S_n$ to the project's success probability $f(S)$. 
Since the reward is normalized to $1$, $f(S)$ also denotes the \textit{expected reward}. To incentivize effort, the principal offers each agent a \emph{linear contract} $\alpha_i \in [0,1]$, 
denoting the fraction of the reward that goes to agent $i$ if the project succeeds.
So agent $i$'s utility, under action profile $S$, is $\alpha_i f(S) - \sum_{j \in S_i} c_j$.
Every contract $\noindexcontract=(\alpha_1, \ldots, \alpha_n)$ induces a game among the agents. An action profile $S$ is a {\em Nash equilibrium} of the game if, given the contract offered, no agent can increase his utility by deviating.
In this case, we say that the contract incentivizes action profile $S$.

The principal's objective in \cite{multimulti} is to incentivize an action profile $S$ maximizing her \emph{profit}, defined as her expected reward minus the total expected payments to the agents, i.e., $f(S) \cdot (1 - \sum_{i \in [n]} \alpha_i)$. 
They present an efficient algorithm that, for submodular $f$, attains a constant-factor approximation to the optimal profit using demand queries to $f$ (see \Cref{sec:model} for definitions).

However, practical applications may introduce additional objectives and constraints.
Principals may pursue objectives that go beyond profit maximization; for example, maximizing the probability of success $f(S)$, or maximizing total welfare, $f(S) - \sum_{i \in [n]} \sum_{j \in S_i} c_j$.
At the same time, they often face constraints, such as budget constraints (e.g., a given portion of the company's equity must be allocated to investors).

Polynomial-time approximation algorithms for multi-agent contract settings, under key objectives, including reward and welfare, and budget constraints, have been obtained by \cite{feldman2025budget,aharoni2025welfare}, but only within the \emph{binary-actions} model, where each agent either exerts effort or shirks.
In this work, we extend the multi-agent \emph{combinatorial-actions} model of \cite{multimulti} to incorporate budget constraints and to accommodate a broader range of objectives.

\subsection{Our Results}
We study the design of near-optimal contracts in the multi-agent combinatorial-actions model of \cite{multimulti}, under budget constraints, for a broad class of objective functions.
In this setting, each agent $i \in [n]$ can select any subset from a given action set, and the principal is restricted by a budget constraint $B$, i.e., agents may receive at most a $B$-fraction of the reward. 
The case of $B=1$ corresponds to the unbudgeted setting. 
The class of objectives we consider---termed  \goodobj\ (BEyond STandard) by \cite{ feldman2025budget}---includes reward, welfare, profit, or any convex combination thereof (see \Cref{def:goodobj_multi_multi}). In what follows we present our main results (also illustrated in \Cref{fig:intersection-triangle}).

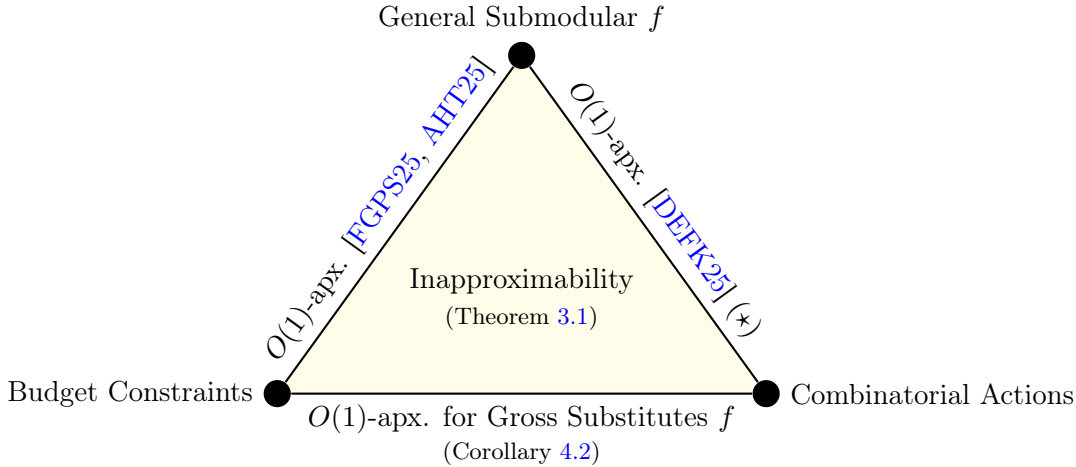
\begin{figure}
\begin{center}
\begin{tikzpicture}

  \fill[yellow!10] (-3.25,0) -- (3.25,0) -- (0,4.5) -- cycle;

  \node[fill, circle, minimum size = 5pt] (A) at (-3.25,0) {};
  \node[fill, circle, minimum size = 5pt] (B) at (3.25,0) {};
  \node[fill, circle, minimum size = 5pt] (C) at (0,4.5) {};
  \node (O) at (0,3) {};

  \node[anchor=east] at (A.west) {Budget Constraints};
  \node[anchor=west] at (B.east) {Combinatorial Actions};
  \node[anchor=south] at (C.north) {General Submodular $f$};
  \node at (0,1.5) {{Inapproximability}};
  \node at (0,1) {\footnotesize (\Cref{thm:multi-multi-inapprox})};

  \draw[thick] (A) -- (B) node[midway, below] {{$O(1)$-apx. for Gross Substitutes $f$}};
    \node[anchor=north] at (0,-0.5) {\footnotesize (\Cref{cor:gsapprox})};
  \draw[thick] (B) -- (C) node[midway, sloped, above] {$O(1)$-apx. \cite{multimulti} $(\star)$};
  \draw[thick] (C) -- (A) node[midway, sloped, above] {$O(1)$-apx. \cite{feldman2025budget,aharoni2025welfare}};

\end{tikzpicture}
\end{center}
\caption{
The three vertices of the triangle represent the dimensions along which the settings we consider differ: (i) the structure of $f$ (general submodular vs. gross substitutes), (ii) the presence of budget constraints, and (iii) the type of the agents' action space (binary vs. combinatorial).
Any pair of properties admits a constant-factor approximation, as indicated along each edge of the triangle. 
The figure illustrates that the impossibility arises only from the \emph{combination of all three} properties: the interior inapproximability region corresponds to submodular $f$ with budgets and combinatorial actions simultaneously.
$(\star)$ All results shown hold for all BEST 
objectives, except for the $O(1)$-approximation for the multi-agent combinatorial-actions setting 
without budget constraints of \cite{multimulti}, 
which holds only for profit maximization.
}
    \label{fig:intersection-triangle}
\end{figure}

\paragraph{Inapproximability for Submodular Instances.} 

Our first main result establishes a strong impossibility: when the reward function $f$ is submodular, there is no efficient approximation algorithm to any \goodobj\ objective.
This impossibility is information theoretic, and does not rely on any computational conjectures.

\begin{theoremnatural}[Inapproximability for Submodular Instances; \Cref{thm:multi-multi-inapprox}]\label{main_thm:inapproxourresults}
For the class of instances with submodular $f$, any \goodobj\ objective $\varphi$ (including profit, reward, and welfare),
any budget $B \in (0,1)$,
and any approximation guarantee $K : \mathbb{N} \to [1,\infty)$, any randomized algorithm can achieve a $K(n)$-approximation with respect to $\varphi$ under budget $B$ only with exponentially small probability in $n$, assuming it makes at most polynomially many value and demand queries to $f$.

\end{theoremnatural}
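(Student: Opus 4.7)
The plan is to prove the theorem via Yao's minimax principle: exhibit a distribution over hard submodular instances such that every deterministic algorithm with polynomially many value and demand queries fails with overwhelming probability. The reduction to the extreme case from the abstract (all but one agent binary, one agent with a single extra action) tells us what the instance skeleton looks like. Let agents $1,\ldots,n-1$ each have a single action of cost $c$, and let agent $0$ have two actions $\{a,b\}$, each of cost $c$. Pick a hidden subset $H\subseteq\{1,\ldots,n-1\}$ of size $k=\Theta(n)$ uniformly at random, and consider the ``good profile'' $S_H^{\star}=H\cup\{b\}$.

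The key construction is a family $\{f_H\}$ of submodular functions, built by planting a hidden bonus on top of a harmless baseline $f_{\emptyset}$ (for instance, $f_{\emptyset}(S)=\min(|S|,t)/n$ for a suitable cap $t$). The function $f_H$ agrees with $f_{\emptyset}$ almost everywhere and carries a ``spike'' at $S_H^{\star}$, so that $f_H(S_H^{\star})\ge \Omega(1)$ while $f_H(S)\le o(1/K(n))$ for any $S$ not containing all of $H\cup\{b\}$. Submodularity must be preserved by the spike; I would implement it as a matroid-rank-based gadget on $H\cup\{b\}$ (or, equivalently, as a capped coverage function whose ``unlock'' is gradual rather than an all-or-nothing indicator), following the Svitkina--Fleischer / Dobzinski--Vondrak template. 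Crucially, because $S_H^{\star}$ has size $k+1=\Theta(n)$ and $H$ is uniform over an exponentially large family, no polynomially many value queries can hit a profile whose answer depends on $H$ except with probability $2^{-\Omega(n)}$; and the gadget must be calibrated so that for almost every price vector, the demand oracle returns the same answer on $f_H$ as on $f_{\emptyset}$. This is the standard but delicate step: the spike needs to be only ``visible'' to a demand query whose price vector already encodes essentially the set $H\cup\{b\}$, and there are only polynomially many queries to cover an exponential number of possible $H$'s.

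Once indistinguishability is established, the gap step is straightforward. Under $f_H$, the principal who knows $H$ can offer a contract that gives a small positive share to each agent in $H\cup\{0\}$ (together consuming budget less than $B$, since $k$ is constant in the right scale or since costs are small), incentivize the profile $S_H^{\star}$ at Nash equilibrium, and obtain any \goodobj\ objective $\varphi$ of value $\Omega(1)$. On the other hand, an algorithm whose query transcript is (with high probability) independent of $H$ produces a contract $\noindexcontract$ that is also essentially independent of $H$; by a union/counting argument, no single $\noindexcontract$ can incentivize $S_H^{\star}$ for more than an exponentially small fraction of choices of $H$, because incentivizing $S_H^{\star}$ requires distributing at least the cost of $|H|+1$ distinct agents within budget $B$, and such a contract ``commits'' to a specific subset of agents. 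Hence the objective achieved by the algorithm on a $1-2^{-\Omega(n)}$ fraction of $H$'s is at most $o(1/K(n))$, beating the $K(n)$-approximation bound.

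The main obstacle is the construction of $f_H$ and, in particular, making the spike invisible to \emph{demand} queries rather than merely to value queries: demand oracles can evaluate $\max_S (f(S)-\sum_{j\in S} p_j)$ and thus see the function at its maximizer, so a carelessly placed spike becomes visible under well-chosen prices. Ruling this out requires that any price vector isolating $S_H^{\star}$ as the unique demand must itself depend strongly on $H$, so that polynomially many adaptive demand queries cannot cover the exponential family of hidden sets. The remaining steps---the Yao averaging to a fixed instance, verification of submodularity, and the explicit objective gap for each of profit, reward and welfare---are routine given the gadget.
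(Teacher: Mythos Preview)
Your high-level framework (Yao's principle over a uniformly random hidden subset, then argue indistinguishability and a value gap) matches the paper, but the concrete gadget you sketch has a real gap and differs from the paper's mechanism in a way that matters.

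First, a positive ``spike'' at $S_H^\star$ cannot be hidden from demand queries in the way you need. Once $f_H(S_H^\star)=\Omega(1)$ while the baseline is $o(1)$, a single demand query with a uniform small price (or prices tuned to the baseline increments) maximizes $f(S)-p\,|S|$ at (a set revealing) $H$; the Svitkina--Fleischer template you cite hides a set from \emph{value} queries only, and the line ``the gadget must be calibrated so that for almost every price vector the demand oracle returns the same answer on $f_H$ as on $f_\emptyset$'' is precisely the missing idea, not a routine step. Relatedly, by monotonicity every superset of $S_H^\star$ also has value $\Omega(1)$, so your claim that an algorithm's contract ``commits to a specific subset'' is not enough: you must also rule out that incentivizing some easy superset (or the full profile) within budget $B$ already yields $\Omega(1)$, and your sketch does not do this.

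The paper avoids both issues by a different mechanism. High reward is \emph{not} hidden: any profile containing the special agent's ``good'' action $\mathcal{G}$ has value $\ge 1/2$. What is hidden is a tiny \emph{negative} perturbation $f_3(S)=(\varepsilon/2)\cdot\mathbf{1}[S=\{\mathcal{B}\}\cup A']$ that only lowers the marginal of the competing ``bad'' action $\mathcal{B}$ when exactly the hidden set $A'$ exerts effort. Costs are set so that incentivizing $\mathcal{G}$ alone exhausts the budget, and the special agent strictly prefers to deviate to $\mathcal{B}$ \emph{unless} $A'$ is active; this is an engineered violation of best-response monotonicity, which is the conceptual heart of the proof and is absent from your plan (you treat the two actions of agent $0$ symmetrically and never use the second one). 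Because the hidden signal is a single, $\varepsilon/2$-sized dent rather than an $\Omega(1)$ bump, the paper can show directly that every demand query to $f^{(A')}$ is answered by one of a constant number of candidate sets determined by the sorted prices, hence simulable with $O(1)$ value queries; the standard hide-a-set argument over value queries then finishes the proof.
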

This impossibility is particularly striking in light of known positive results for multi-agent settings with submodular $f$: 
(i) constant-factor approximation to profit for  combinatorial actions without budget constraints \cite{multimulti}, and (ii) constant-factor approximation to any \goodobj\ objective in the binary-actions setting with budget constraints \cite{aharoni2025welfare,feldman2025budget}.
Our result shows that combinatorial actions and budget constraints \textit{together} rule out any approximation.
Moreover, our hardness construction constitutes the \textit{minimal} extension needed to obtain such an impossibility: All agents have binary actions, except for one agent which has one extra action.\footnote{In fact, we do not use the combinatorial nature of the agent's actions, just that there is more than one.}

Beyond establishing hardness of approximation, this result provides two important separations. First, between binary and combinatorial actions: Under budget constraints, the optimal-contract problem admits constant-factor approximation guarantees for binary actions but becomes inapproximable once agents have combinatorial actions. 
Second, between budgeted and unbudgeted settings: 
For combinatorial actions, profit maximization admits a constant-factor approximation without budgets, yet becomes inapproximable with budgets.

\paragraph{Constant-Factor Approximation for Gross Substitutes Instances.} On the positive side, our second main result shows that efficient constant-factor approximations are possible when $f$ lies in the class of gross substitutes (GS), an important and well-studied
subclass of submodular functions \cite{kelso1982job,gul1999walrasian,PaesLeme17}.
\begin{theoremnatural}[Constant-Factor Approximation for Gross Substitutes Instances; \Cref{cor:gsapprox}]\label{main_thm:GS}
For the class of instances with gross substitutes $f$ and any \goodobj\ objective $\varphi$ (including profit, reward, and welfare), there exists a deterministic polynomial-time algorithm that achieves an $O(1)$-approximation to $\varphi$ under any budget $B\in[0,1]$. This algorithm requires only value oracle access to $f$ and $\varphi$.
\end{theoremnatural}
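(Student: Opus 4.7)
The plan exploits two classical facts about gross substitutes (GS) functions: (i) GS is closed under restriction $f \mapsto f(\cdot\,\cup T)$ for any fixed $T$; and (ii) a demand query on a GS function, $\max_{S}\, \alpha f(S) - \sum_{j \in S} p_j$, can be solved in polynomial time using only value queries, via the Kelso--Crawford ascending greedy algorithm. Combining the two, for any contract $\alpha_i$ and any fixed profile $S_{-i}$ of the other agents, agent $i$'s best response $S_i^\star = \argmax_{S \subseteq T_i} \alpha_i f(S \cup S_{-i}) - c(S)$ is computable exactly in polynomial time using only value queries to $f$.

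Given this subroutine, the algorithm is natural. I would enumerate over a polynomial-sized family $\mathcal{F}$ of candidate contracts, chosen in the spirit of the binary-actions budgeted algorithms of \cite{feldman2025budget, aharoni2025welfare}: dyadic splits of $B$ among subsets of agents, together with contracts that concentrate the entire budget on a single agent. For each $\noindexcontract \in \mathcal{F}$, compute a pure Nash equilibrium $S(\noindexcontract)$ via best-response dynamics starting from the empty profile; since $f$ is submodular, best responses are monotonically non-decreasing along these dynamics, so they terminate after at most $\sum_i |T_i|$ updates, each carried out by the value-query subroutine above. Finally, evaluate the \goodobj\ objective $\varphi$ at each $(\noindexcontract, S(\noindexcontract))$ and output the best pair; because $\varphi$ is a nonnegative combination of profit, reward, and welfare, it suffices that $\mathcal{F}$ simultaneously approximates each component.

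The crux of the proof is a structural lemma showing that some $\noindexcontract \in \mathcal{F}$ attains a constant-factor approximation of the optimal value $\varphi^\star$. I would prove this by decomposing an optimum $(\noindexcontract^\star, S^\star)$ through the telescoping expansion $f(S^\star) = \sum_i \big(f(S_{\leq i}^\star) - f(S_{< i}^\star)\big)$ of per-agent marginals. For GS $f$ each marginal is itself the value of a GS function restricted to $T_i$, so one can match each marginal to either a single-agent contract (when one agent's marginal dominates) or a dyadic contract (when the contributions are diffuse), and argue via the GS exchange property that best-response dynamics under the matched contract recover a constant fraction of the targeted marginal, bounding both reward and (after subtracting the total payment, which is at most $B$ times the reward) profit and welfare.

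The main obstacle is this last step: lifting the binary-actions decomposition of \cite{feldman2025budget, aharoni2025welfare} from single-action agents to the combinatorial-action setting. The inapproximability result of \Cref{main_thm:inapproxourresults} shows that submodularity alone is insufficient, because it permits ``threshold'' equilibria in which slightly underpaying an agent causes them to drop all of their actions at once. The stronger matroid-like exchange property of GS is what rules out this failure mode --- small perturbations of a contract can change an agent's optimal response only by matroid exchanges, so equilibria degrade continuously rather than catastrophically. Making this robustness quantitative, and plugging it into the dyadic/single-agent enumeration framework, is where the real technical work lies.
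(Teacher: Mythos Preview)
Your proposal has genuine gaps, and the paper's route is quite different.

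\textbf{Errors and gaps in the proposal.} First, the claim that best-response dynamics from the empty profile are monotone non-decreasing ``since $f$ is submodular'' is backwards: submodularity of $f$ makes the game one of strategic \emph{substitutes} (more effort by others lowers each agent's marginal), so once agent~$2$ moves, agent~$1$'s best response can shrink. Monotone ascending dynamics are a hallmark of strategic complements, not substitutes. (This is fixable: a Nash equilibrium under contract $\noindexcontract$ is exactly a demand set for prices $p_a=c_a/\alpha_i$, so for GS you can compute one with a single demand query implemented via value queries; but your stated argument does not work.) Second, ``dyadic splits of $B$ among subsets of agents'' is exponential in $n$ if taken literally, and you do not say what polynomial subfamily you actually mean. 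Third, and most importantly, your ``structural lemma''---the telescoping decomposition plus GS-exchange robustness argument---is precisely the heart of the matter and is left as a sketch; you yourself flag it as the main obstacle. Finally, BEST objectives are not defined as convex combinations of profit, reward, and welfare (they are defined axiomatically by sandwiching, decomposability, and monotonicity), so approximating those three separately does not automatically cover all of BEST.

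\textbf{How the paper proceeds instead.} The paper does not enumerate contracts or build a new algorithm from scratch. It proves that for GS $f$, \emph{all} BEST objectives under \emph{all} budgets are equivalent up to constant factors (\Cref{thm:multi_multi_objective_equivalence}), and then simply invokes the existing $O(1)$-approximation of \cite{multimulti} for profit at $B=1$. The equivalence is driven by two lemmas: (i) \emph{best-response monotonicity} (\Cref{lem:bestresponsemonotonicity}), which formalizes exactly the GS robustness you gesture at---if other agents drop actions, each agent's best response can only grow---proved by recasting best responses as demand sets and applying the GS definition directly; and (ii) a \emph{downsizing lemma} (\Cref{lem:multi_payment_scaling}) that, given any budget-feasible $(\noindexcontract,S)$, produces either a single-agent contract or a contract using an $O(1/M)$-fraction of the payments while keeping an $\Omega(1/M)$-fraction of the reward, using best-response monotonicity together with the doubling lemma of \cite{multimulti}. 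These feed a decomposition $\MAX\varphi(B)\le 2\cdot\multimlr(B)+\max_i \bestsingle{i}{\varphi}(B)$, after which two short reductions close the loop. Your intuition that the GS exchange property is what prevents catastrophic equilibrium collapse is exactly right; the paper's best-response monotonicity lemma is the clean, quantitative statement of that intuition, and it replaces the ad hoc enumeration-plus-telescoping machinery you propose.
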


Theorems~\ref{main_thm:inapproxourresults} and~\ref{main_thm:GS} together establish a separation between approximation guarantees under gross substitutes and general submodular functions.
Interestingly, a parallel (yet not directly related) separation is known for the problem of computing the \emph{exact} optimal contract in the single-agent combinatorial-actions setting: this problem admits a poly-time algorithm for GS rewards, whereas it is NP-hard for submodular rewards \cite{dutting2022combinatorial}.\footnote{
Whether a separation between GS and submodular rewards also appears in the multi-agent binary-actions setting remains an open problem: While constant-factor approximations are known for all submodular $f$ \cite{duetting2022multi}, it remains unknown whether better approximation (e.g., PTAS/FPTAS) can be achieved for GS. We note that computing the exact optimal contract is hard even for additive $f$.
}

Finally, we highlight two other aspects of tightness. Our inapproximability result applies even when the algorithm has access to both value and demand oracles, while our positive approximation guarantee relies only on value oracle access.\footnote{{This follows from the fact that when $f$ is GS, a demand query can be computed with poly-many value queries.}} Moreover, our hardness result applies to any randomized algorithm, whereas our proof for gross substitutes $f$ uses a deterministic algorithm.

\paragraph{Discussion: A Three-Way Barrier to Approximation.}
The multi-agent settings considered in this work differ along three dimensions: 
(i) the structure of $f$ 
(general submodular vs.\ gross substitutes), 
(ii) the presence of budget constraints, and (iii) the richness of the agents' action space
(binary vs.\ combinatorial).
Combined with previous literature, our results show that the inapproximability arises only from the \emph{combination of all three} properties, whereas any pair admits a constant-factor approximation. This is illustrated in \Cref{fig:intersection-triangle}.

\paragraph{FPTAS for Special Cases.} 
We give an FPTAS for two special cases of interest: instances with additive $f$ and instances with a single agent. 

\begin{theoremnatural}[FPTAS for Additive Instances; \Cref{thm:fptas}]\label{main_thm:fptas}
For any budget $B \in [0,1]$, each of the objectives profit, reward, and welfare admits a deterministic FPTAS with value queries, when $f$ is additive.
\end{theoremnatural}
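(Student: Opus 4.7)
The plan is to exploit the decomposition of best responses that additivity of $f$ forces. Writing $f(S) = \sum_{j \in S} f_j$, the utility of agent $i$ under contract $\alpha_i$ equals $\alpha_i \sum_{k \in S} f_k - \sum_{j \in S_i} c_j$; since $\alpha_i \sum_{k \in S \setminus T_i} f_k$ does not depend on $S_i$, agent $i$'s unique best response to any $\alpha_i$ is $S_i(\alpha_i) = \{j \in T_i : \alpha_i f_j \geq c_j\}$, regardless of the other agents' actions. In particular, $S_i(\alpha_i)$ takes at most $|T_i|+1$ distinct values as $\alpha_i$ varies over $[0,1]$, obtained by sorting the actions in $T_i$ by $c_j/f_j$ and taking prefixes. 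So without loss of generality each $\alpha_i$ lies in the finite set of thresholds $(\{0\} \cup \{c_j/f_j : j \in T_i,\, f_j > 0\}) \cap [0,1]$, yielding at most $|T_i|+1$ candidate configurations $(\alpha_{i,\ell}, S_{i,\ell})$ per agent; these are recoverable using only value queries (via singleton queries for each $f_j$).

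This reduces the contract-design problem to a \emph{multiple-choice knapsack}: pick one configuration per agent subject to the budget constraint $\sum_i \alpha_{i,\ell_i} \leq B$, so as to maximize the objective at the induced action profile. Because $f$ is additive, both reward $f(S) = \sum_i \sum_{j \in S_{i,\ell_i}} f_j$ and welfare $\sum_i \sum_{j \in S_{i,\ell_i}} (f_j - c_j)$ decompose additively across agents, and for these two objectives the standard value-scaling-plus-dynamic-programming FPTAS for multiple-choice knapsack directly yields a $(1-\eps)$-approximation in time polynomial in $n$, $\sum_i |T_i|$, and $1/\eps$.

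For profit, the objective $f(S)\cdot\bigl(1 - \sum_i \alpha_i\bigr)$ factors as a product of an additive reward term and an affine term in the total payment. The plan is to wrap an outer loop around the reward-FPTAS that enumerates a guess $\hat T$ for the optimal total payment; for each $\hat T$, invoke the reward-FPTAS with modified budget $\hat T$ to obtain a near-optimal reward $\hat R$, and return the configuration maximizing $\hat R\bigl(1-\hat T\bigr)$. The main obstacle will be choosing the right grid for $\hat T$: a naive additive grid is insufficient when $1-\hat T$ can be tiny, so we will use a multiplicative grid on the factor $(1-\hat T)$, e.g., $1-\hat T \in \{(1-\eps)^k\}$ for $k = 0,1,\ldots,O(\log(n/\eps)/\eps)$, ensuring that some grid point approximates $1-\sum_i \alpha_i^\star$ within a factor of $(1-\eps)$ using only polynomially many guesses. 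A careful composition of the inner knapsack error with the outer discretization error then yields an FPTAS for profit as well, completing the proof for all three objectives.
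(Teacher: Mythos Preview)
Your reduction to multiple-choice knapsack is exactly right and matches the paper: additivity makes each agent's best response an $\alpha_i$-determined prefix of $T_i$ (sorted by $c_j/f_j$), so there are at most $|T_i|+1$ relevant configurations per agent. For reward and welfare your argument is correct and is essentially the paper's: the paper builds the table $A^{(\varphi)}(j,x)=$ minimum total payment achieving discretized $\varphi$-value $\geq x$ using agents $1,\ldots,j$, with $\varphi \in \{f,\, f-c\}$, which is precisely the value-scaled multiple-choice knapsack DP you invoke.

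The gap is in your treatment of profit. Your outer loop grids $(1-\hat T)$ multiplicatively down to $(1-\eps)^{O(\log(n/\eps)/\eps)} = (\eps/n)^{O(1)}$, but $1-\sum_i \alpha_i^\star$ is not lower-bounded by any function of $n$, $|T|$, $\eps$ alone. Take a single agent with one action $j$ having $f_j=1$ and $c_j=1-\delta$: then $\alpha^\star = 1-\delta$ and the optimal profit is $\delta$. For $\delta$ below your grid's finest resolution, every $\hat T$ you try fails to incentivize $j$, and your algorithm returns~$0$. A correct grid would have to depend on the bit-complexity of the costs (or be anchored at an instance-dependent quantity, as the paper's single-agent FPTAS in its appendix does via $SW_B/(c_j+SW_B)$); this is fixable but is not what you wrote.

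The paper sidesteps this entirely: rather than gridding the payment axis, it reuses the same DP table $A^{(f)}(n,x)$ and simply outputs $\argmax_x\, (1-A^{(f)}(n,x))\cdot x$ over the budget-feasible portion of the reward grid. Since the reward grid already has $O(|T|^2/\eps)$ points and $A^{(f)}(n,x)$ is the \emph{exact} minimum payment for each target, no discretization of the $(1-\hat T)$ factor is needed; the $(1-\eps)$ loss comes solely from rounding~$f$.
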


This result establishes the first FPTAS for the multi-agent combinatorial-actions model, even without budgets. It generalizes the FPTAS of \cite{duetting2022multi,feldman2025budget} for the multi-agent binary-actions model, where the latter also considers budgets.
We note that, whereas our previous results apply to all \goodobj\ objectives, \Cref{main_thm:fptas} applies only to profit, reward, and welfare.

We also generalize the FPTAS of \cite{multimulti} for maximizing profit in the \textit{single-agent} setting with combinatorial actions to the budgeted setting.\footnote{Maximizing welfare and reward can be achieved by paying the agent the entire budget upon success.}
\begin{theoremnatural}[FPTAS for Single-Agent Instances; \Cref{thm:single-agent-fptas}]
    In the single-agent setting, there exists an FPTAS for maximizing the principal's profit under any budget $B \in [0,1]$, when $f$ is monotone. This algorithm requires access to a demand oracle to $f$.
\end{theoremnatural}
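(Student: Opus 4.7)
The plan is to adapt the single-agent unbudgeted FPTAS of \cite{multimulti} by restricting the contract search to the budget-feasible interval $[0,B]$. I would construct a geometric grid on the principal's retained share $\beta = 1-\alpha$: set $\beta_0 = 1-B$ and $\beta_k = (1-B)(1+\eps)^k$ for $k = 1, \dots, K$ with $K = \lceil \log_{1+\eps}(1/(1-B)) \rceil$, so every $\alpha_k = 1-\beta_k$ lies in $[0,B]$. For each $\alpha_k$, the algorithm issues one demand query to obtain the agent's best response $S_k$, evaluates $(1-\alpha_k) f(S_k) = \beta_k f(S_k)$, and returns the grid point achieving the maximum profit. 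The grid has $O(\eps^{-1}\log(1/(1-B)))$ points, polynomial in $n$ and $1/\eps$ whenever $1-B \geq 1/\mathrm{poly}(n)$; in the boundary regime where $B$ is closer to $1$ than this, I would fall back on the unbudgeted FPTAS of \cite{multimulti} and discard contracts exceeding $B$, which loses at most $o(1/\mathrm{poly}(n))$ of the feasible range.

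The approximation analysis hinges on two observations. First, $f(S_\alpha)$ is monotone non-decreasing in $\alpha$: summing the revealed-preference inequalities $\alpha f(S_\alpha) - c(S_\alpha) \geq \alpha f(S_{\alpha'}) - c(S_{\alpha'})$ and $\alpha' f(S_{\alpha'}) - c(S_{\alpha'}) \geq \alpha' f(S_\alpha) - c(S_\alpha)$ for $\alpha \leq \alpha'$ yields $(\alpha'-\alpha)(f(S_{\alpha'}) - f(S_\alpha)) \geq 0$. Second, for an optimal $\alpha^* \in [0,B]$ with response $S^*$, profit $P^* = (1-\alpha^*) f(S^*)$, and $\beta^* = 1-\alpha^*$, the grid always contains a point $\beta_k$ with $\beta^*/(1+\eps) \leq \beta_k \leq \beta^*$: take $k=0$ when $\beta^* \in [1-B,(1-B)(1+\eps))$, otherwise the unique intermediate index. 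Because $\beta_k \leq \beta^*$ means $\alpha_k \geq \alpha^*$, monotonicity gives $f(S_k) \geq f(S^*)$, and therefore $\beta_k f(S_k) \geq \beta_k f(S^*) \geq (\beta^*/(1+\eps)) f(S^*) = P^*/(1+\eps)$.

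The main obstacle is the direction of discretization. A naive geometric grid on $\alpha$ picks a grid point $\alpha_k \leq \alpha^*$, in which direction monotonicity only gives the unhelpful bound $f(S_k) \leq f(S^*)$, and one cannot lower-bound the agent's contribution without additional structure on $f$. Gridding on $\beta$ sidesteps this: the nearest grid point necessarily satisfies $\alpha_k \geq \alpha^*$, aligning monotonicity with the desired lower bound. The remaining subtleties---covering the low end of the $\beta$-range via the anchor $\beta_0 = 1-B$, ensuring demand queries are well-defined under any tie-breaking (monotonicity holds regardless), and handling the degenerate $B \to 1$ regime by reducing to \cite{multimulti}---are routine.
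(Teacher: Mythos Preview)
Your core argument---a geometric grid on $\beta = 1-\alpha$ anchored at $\beta_0 = 1-B$, combined with the revealed-preference monotonicity of $f(S_\alpha)$---is correct and yields a valid FPTAS for every $B \in [0,1)$. The approximation step is exactly right: the grid contains a point $\beta_k \in [\beta^\star/(1+\eps),\, \beta^\star]$, so $\alpha_k \geq \alpha^\star$, hence $f(S_k) \geq f(S^\star)$ by monotonicity, and $\beta_k f(S_k) \geq P^\star/(1+\eps)$.

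This is a genuinely different and more elementary route than the paper's. The paper does not anchor the grid at $1-B$; instead it invokes the social-welfare bound of \cite{dutting2019simple}, adapted to the budget-feasible action set, to trap $1-\alpha^\star$ in the interval
\[
\left[\frac{SW_B}{m\cdot 2^m(c_{j^\star}+SW_B)},\ \frac{SW_B}{c_{j^\star}+SW_B}\right],
\]
whose multiplicative width $m\cdot 2^m$ is independent of $B$. It then guesses $j^\star=\argmax_{j\in S^\star} c_j$ by iterating over all $j \in T$ and grids the corresponding interval, capping each candidate $\alpha_{j,k}$ at $B$. This costs $O(m^2/\eps)$ demand queries and works uniformly for all $B \in [0,1]$, including $B=1$. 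Your approach avoids the welfare lemma and the enumeration over $j$, at the price of a query count $O(\eps^{-1}\log\tfrac{1}{1-B})$ that depends on $B$ and degenerates at $B=1$.

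The one flaw is your boundary discussion. The fallback ``run the unbudgeted FPTAS and discard contracts exceeding $B$'' is not a valid approximation argument: losing a small measure of the $\alpha$-range says nothing about profit ratios, and the discarded region may contain every grid point on which the unbudgeted analysis relies. But you do not need the fallback. Since $B$ is part of the input, $\log\tfrac{1}{1-B}$ is bounded by the bit-length of $B$, so your grid is already polynomial in the input size for every $B < 1$; for $B = 1$ exactly, the unbudgeted FPTAS of \cite{multimulti} applies directly with nothing to discard.
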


\subsection{Our Techniques}

\paragraph{Inapproximability for Submodular Instances.} A key property in the binary-actions setting with submodular $f$ is the \emph{best-response monotonicity}: given a contract and a Nash equilibrium $S$, 
any agent in $S$ continues to find effort a best response, even if a subset of the remaining agents in $S$ shirk.
This property is central to achieving constant-factor approximations in the binary-actions setting, both with and without budget constraints \cite{duetting2022multi,aharoni2025welfare,feldman2025budget}. In contrast, best-response monotonicity \emph{fails} in the combinatorial-actions setting with submodular $f$: under a fixed contract and an equilibrium, if some agents reduce their action sets, the best response of an agent $i$ may also be to reduce his; see, e.g., \cite[Example 1.1]{multimulti}. 
We identify non-monotonicity of the best response as a core aspect of the complexity of the combinatorial-actions setting and our proof of \Cref{main_thm:inapproxourresults} leverages a carefully engineered violation of this property.

Our construction consists of $n-1$ binary-action (work or shirk) agents and one special agent with two non-shirking actions: a good action $\mathcal{G}$ and a bad action $\mathcal{B}$. The reward function is at least ${1}/{2}$ when the special agent takes $\goodaction$, and infinitesimally small otherwise.
Incentivizing the special agent to take $\mathcal{G}$ alone, however, would necessarily violate the budget constraint.
By carefully constructing the reward function $f$, we ensure that there exists a budget-feasible contract that incentivizes the special agent to take $\goodaction$, but only if it simultaneously incentivizes a unique subset $A'$ of exactly half of the remaining agents to exert effort. 
Since $\mathcal{G}$ can be incentivized only in combination with other agents, this constitutes a violation of best-response monotonicity. 
The ``good'' equilibrium---$A'$ together with $\goodaction$---yields a profit of at least $(1-B) \cdot ({1}/{2})$, while any other budget-feasible contract fails to incentivize the action $\goodaction$ and thus achieves only an infinitesimally small reward.
Therefore, an algorithm must incentivize the ``good'' equilibrium to achieve any desirable approximation. 
Yet the number of candidate subsets $A'$ is exponential, and we prove that no efficient method can identify $A'$ using poly-many value or demand queries, establishing our inapproximability result.

Crucially, this construction breaks down for profit maximization in the unbudgeted setting ($B=1$), where \cite{multimulti} provide an approximation algorithm. Indeed, when $B=1$, the ``good'' equilibrium in our construction yields a profit of $(1-B)\cdot ({1}/{2}) = 0$, nullifying the argument.

\paragraph{Constant-Factor Approximation for Gross Substitutes Instances.} 
The first step toward establishing our positive result is {proving} that 
when $f$ is gross substitutes, best-response monotonicity holds even when agents have combinatorial actions. 
The key idea in the proof of this observation is  to connect  agents' best responses with demand bundles (see \Cref{sec:model} for formal definitions). 
More precisely, we observe that agent $i$'s best response for a given profile of actions of the other agents is a demand bundle with respect to a carefully chosen price vector.
We then show that when the set of actions taken by the other agents is reduced to any subset, the corresponding price vector increases coordinate-wise. Therefore, by the gross-substitutes property, agent $i$ is incentivized to take a (weak) superset of the original actions, establishing best-response monotonicity.

We find that best-response monotonicity, combined with tools from \cite{multimulti}, 
yields
an {up-to-a-constant-factor} equivalence between any two \goodobj\ objectives under any two budgets,
paralleling the result of \cite{feldman2025budget} for the binary-actions case. This equivalence, together with the $O(1)$-approximation for profit from \cite{multimulti}, implies a constant-factor approximation for any \goodobj\ objective under any budget.

\subsection{Related Work}\label{subsec:RelatedWork}

\paragraph{Combinatorial Contracts.}
A combinatorial model for contracting agents with binary actions was introduced by \cite{babaioff2006combinatorial,BabaioffFNW12}, where the principal selects a subset of agents to incentivize, and the reward function $f$ is Boolean. This foundational model was later extended to allow mixed strategies \cite{babaioff2006mixed}, and subsequent work studied free-riding in this setting \cite{babaioff2009free}.

\cite{duetting2022multi} generalized this framework by considering richer reward functions drawn from the complements-free hierarchy \cite{lehmann2001combinatorial}, showing that for XOS functions, constant-factor approximations are achievable using demand and value queries. In subsequent work, the same authors extended this to a multi-action setting \cite{multimulti}, where agents choose arbitrary subsets of actions, and demonstrated that for submodular $f$, poly-many demand queries suffice for approximation.
\cite{alon2025multi} considered a multi-project setting where each project corresponds to a binary-action problem, and showed constant-factor approximations for XOS valuations. \cite{castiglioni2025fair} examined project allocation under fairness constraints, proving inapproximability of envy-freeness in the general case and tractability in constant-size markets.

In the single-agent variant, \cite{dutting2022combinatorial} introduced a model where the agent selects subsets of $n$ costly actions, and the principal's reward is given by a set function $f$. They showed tractability for gross-substitutes functions using value queries. Follow-up works \cite{deo2024supermodular, dutting2024query, ezra2023Inapproximability, contractsBeyondGS, feldman2025ultraefficient} further refined the complexity landscape in this combinatorial-action setting.
Another flavor of the single agent setting, studied in \cite{contractsSequential}, considers a scenario where the agent takes actions sequentially, rather than simultaneously, observing intermediate outcomes. They provide algorithms and hardness results for independent and correlated actions.

The recent work of \cite{contractsInspection} relaxes the hidden-action assumption by introducing a combinatorial inspection model, where the principal can pay to verify whether the agent's chosen action belongs to a given set, and the inspection cost is combinatorial.

\paragraph{Linear Contracts.}
Linear contracts are those in which the principal pays each agent a fixed fraction of the reward.
The robustness of linear contracts was first established by \cite{carroll2015robustness}, who proved that linear contracts are max-min optimal when the principal is unaware of the agent's full action set. This result was later generalized to randomized actions and contracts by \cite{peng2024optimal}. 
Robustness was also established in a different model of uncertainty by \cite{dutting2019simple}, who additionally provided approximation guarantees for linear contracts relative to the optimal ones.
Another desirable trait of linear contracts is ambiguity-proofness. That is, unlike general contracts, the principal cannot benefit from introducing ambiguity into linear contracts \cite{dutting2024ambiguous, duetting2025succinct}.

\paragraph{Contracts under Budget Constraints.}
There has been growing interest in the study of budgets in algorithmic contract design.
\cite{hann2024optimality} studied the budgeted multi-agent setting where each agent performs an independent task with an observable (binary) outcome. 
\cite{doron2025algorithm} studied a setting where the agent has a budget constraint and show how to convert algorithms for the agent's utility maximization problem to an approximation algorithm with multiplicative and additive guarantees.

Closer to our setting, \cite{feldman2025budget} and \cite{aharoni2025welfare} introduced budget constraints into the binary-action model of \cite{duetting2022multi}. \cite{feldman2025budget} introduced the notion of \goodobj\ objectives and showed that, via a downsizing lemma, any two \goodobj\ objectives and any two budgets are equivalent up to a  constant factor. As a result, their work also yields constant-factor approximations for XOS and submodular objectives via demand and value queries, respectively. \cite{aharoni2025welfare} focused on social welfare maximization and studied its divergence from the principal's profit. They have also given constant factor approximation algorithm for the social welfare when the budget $0 \le B \le 2-\Theta(1)$.

\paragraph{Other Contractual Models.}
Non-combinatorial multi-agent settings with agent externalities are studied in \cite{segal1999contracting,segal2003coordination,bernstein2012contracting}. 
A complexity measure for the implementation of multi-agent contracts was proposed by \cite{babaioff2014contract}. 
More recently, \cite{cacciamani2024multi,castiglioni2023multi} have studied multi-agent contracts beyond binary outcomes. 
Contracts where agents have private types are studied in \cite{alon2021contracts, alon2022bayesian, castiglioni2025reduction, CastiglioniM021, GuruganeshSW023, guruganesh2021contracts, castiglioni2022designing}. 
Several recent works explore the intersection of contracts and learning \cite{ZhuBYWJJ23, cohen2023learning, ho2014adaptive, BacchiocchiC0024,chen2024boundedcontractslearnableapproximately, duetting2025pseudodimensioncontracts}.

\section{Model and Preliminaries}\label{sec:model}

In this section we present the multi-agent combinatorial-actions model of \cite{multimulti}, the family of \goodobj\ (BEyond STandard) objectives of \cite{feldman2025budget}.

\subsection{The Multi-Agent Combinatorial-Actions Setting}
We consider a contractual setting involving a single principal and a set $A$ of $n$ agents.
The principal is the owner of a project, which can either succeed or fail.
Upon success, the project yields a reward of $1$ for the principal, and $0$ otherwise.
The success probability of the project is determined by the (costly) actions that the agents perform, as specified below. 
The principal does not observe these actions, only the outcome. In order to incentivize the agents, she offers each $i \in A$ a \textit{linear contract} $\icontract \in [0,1]$ that specifies their payment upon success, while the payment upon failure is always $0$.\footnote{
One can also view this model as a multiple-outcome project, where the principal only observes the reward and is restricted to using only linear contracts.}
As we formally establish in \Cref{sec:linear}, for all objectives considered in this paper, it is without loss of generality to restrict attention to linear contracts.

Each agent $i \in A$ has a set of actions $T_i$ from which they may choose any subset. In particular, $T_i = \{j\}$ corresponds to the binary-actions case \cite{duetting2022multi, feldman2025budget}, as the agent may choose either $j$ or $\emptyset$.
We assume that any two agents have disjoint sets of actions, i.e., $T_i \cap T_{i'} = \emptyset$ for any $i \ne i'$.
Each action $j \in T_i$ is associated with a non-negative cost $c_j \geq 0$, and the cost of a set of actions is additive, i.e., $c(S_i) = \sum_{j \in S_i} c_j$ for any $S_i \subseteq T_i$.
In particular, $c(\emptyset) = 0$.
The disjoint union of all possible actions is denoted by $T = \bigsqcup_{i \in \agents} T_i$.
Given a profile of actions $S\subseteq T$, we denote by $S_i$ the actions in $S$ taken by agent $i$, i.e., $S_i = S\cap T_i$. Similarly, we use $S_{-i} = \bigsqcup_{i' \in A \setminus \{i\}} S_{i'}$ to denote the actions taken by all agents \textit{except for} agent $i$, in the profile of actions $S$.

As mentioned, we consider a binary-outcome project that yields a reward of 1 to the principal upon success and 0 otherwise.
{At the heart of the model is a} monotone combinatorial set function $f: 2^T \to [0,1]$, which maps every profile of actions, $S = \bigsqcup_{i \in A} S_i$, chosen by the agents to the project's success probability. Since the principal's reward upon success is normalized to $1$, $f(S)$ also represents the principal's expected reward. We assume that $f(\emptyset) = 0$.

A problem instance is a tuple $\multiInstance$. 
Here, $A$ is the set of agents, $T_i$ denotes the set of available actions for each agent $i \in A$, $f$ is the reward function, and $c = \{c_j\}_{j \in T}$ represents the costs associated with each action.
{Recall that} the principal offers a (linear) contract $\noindexcontract \in [0,1]^\agents$, where $\icontract$ is the transfer from the principal to agent $i \in \agents$, if the project succeeds.

\paragraph{Utilities and Equilibria.}
For a given contract $\noindexcontract$ and a profile of actions $S = (S_1,\dots,S_n)$, agent $i$'s expected utility is given by $\icontract \cdot f(S_i \sqcup S_{-i})-c(S_i)$, i.e., by the expected payment minus cost.
We say that a contract $\noindexcontract$ \textit{incentivizes} the profile $S$ if $S$ forms a Nash equilibrium (NE) under $\noindexcontract$. Namely, for every agent $i \in A$ and every alternative set of actions $S'_i \subseteq T_i$, the following holds:
$$
\icontract \cdot  f(S_i \sqcup S_{-i}) - c(S_i) \ge \icontract \cdot  f(S'_i \sqcup S_{-i}) - c(S'_i).
$$
It was shown in \cite{multimulti} that every contract $\noindexcontract$ incentivizes at least one pure Nash equilibrium.
A contract $\noindexcontract$ may incentivize multiple equilibria, and we denote this collection by $\nash(\noindexcontract)$.

Given a contract $\noindexcontract$ and an equilibrium $S \in \nash(\noindexcontract)$, the principal's utility{, or \textit{profit},} is defined to be the  expected reward minus payment, i.e.,
$$
u_P(\noindexcontract,S) = \left(1 - \sum_{i\in \agents} \icontract\right)f(S).
$$

\paragraph{{Maximizing Profit Under Budget Constraints.}}
Given a budget $B$, we call a contract $\noindexcontract$ \emph{budget-feasible} if $\sum_{i\in \agents} \icontract \le B$.
For convenience, we denote the set of all pairs of budget-feasible contracts and incentivized equilibria\footnote{A contract may incentivize multiple equilibria, in which case it will appear in $\setofcontracts$ more than once.} by $\setofcontracts$, namely,
$$
\setofcontracts = \{ (\noindexcontract,S) \mid \sum_{i\in \agents} \icontract \le B \text{ and } S \in \nash(\noindexcontract) \}.
$$
The standard goal of contract design is to maximize the profit, $u_P(\noindexcontract,S)$, which we use to illustrate the budgeted contract design problem:
Given a budget $B$ we seek a contract and an equilibrium which approximately maximize the profit., i.e., $(\noindexcontract,S) \in \setofcontracts$ such that
$\gamma \cdot u_P(\noindexcontract,S) \ge \max_{(\noindexcontract',S') \in \setofcontracts} u_P(\noindexcontract',S')$, for some constant $\gamma \ge 1$. In this work, we also consider additional important objectives, beyond profit maximization, which we elaborate on below.

In light of the observation made in \cite[Proposition 3.13]{feldman2025budget}, we cannot guarantee to find a contract $\noindexcontract$ for which \emph{every} incentivized equilibrium $S \in \nash(\noindexcontract)$ yields a $\gamma$-approximation to the optimal profit, so we must settle for a contract $\noindexcontract$ for which \emph{some} incentivized equilibrium yields a $\gamma$-approximation.
Therefore, our algorithms output a contract along with an equilibrium which provide the desired guarantees.

\paragraph{Subset Stability and the Doubling Lemma.}
The notion of \emph{subset stability} and the \emph{doubling lemma} play a crucial role in the analysis of the algorithm of \cite{multimulti}, which computes a near-optimal contract with respect to profit without budget constraints.
Subset stability is a relaxation of the Nash equilibrium condition: A profile of actions $S = \bigsqcup_{i \in \agents} S_i$ is subset-stable with respect to a contract $\noindexcontract$ if no agent $i \in \agents$ strictly benefits from deviating to a subset of $S_i$.
\begin{definition}[Subset Stability, Definition 3.2 of \cite{multimulti}]\label{def:subsetstable}
    A set of actions $S$ is subset-stable with respect to contract $\noindexcontract$, if for every agent $i$, every subset of his actions $S'_i \subseteq S_i$ satisfies
    $$
    \icontract \cdot f(S_i \sqcup S_{-i}) - c(S_i) \ge \icontract \cdot f(S'_i \sqcup S_{-i})- c(S'_i).
    $$
\end{definition}

The doubling lemma shows how a subset-stable profile $S$ with respect to a contract $\noindexcontract$ can be used to incentivize a Nash equilibrium that guarantees at least a half of the expected reward.
\begin{lemma}[Doubling Lemma, Lemma 3.3 of \cite{multimulti}]\label{lem:doubling}
    Let $f$ be a submodular function and let $\eps > 0$. 
    If $S$ is a subset-stable set of actions with respect to a contract $\noindexcontract$, then any equilibrium $S'$ with respect to the contract $2\noindexcontract + \epsilonvec$ satisfies $f(S') \ge (1/2) \cdot f(S)$, where $\epsilonvec = (\eps,\dots,\eps)$.
\end{lemma}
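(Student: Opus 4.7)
The plan is to compare $S'$ and $S$ via a per-agent marginal bound and then combine these bounds using the submodularity of $f$. Concretely, for each agent $i$ I would upper-bound the marginal $f((S'_i \cup S_i) \sqcup S'_{-i}) - f(S')$ (i.e.\ how much $i$ could boost the reward by adding her $S$-actions on top of $S'_i$) in terms of the marginal $f(S) - f(S_{-i})$, and then telescope over $i \in \agents$.

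First, I would instantiate the subset-stability of $S$ at $\noindexcontract$ with the subset $S'_i = \emptyset$, which yields
$$
\icontract \cdot \bigl[f(S) - f(S_{-i})\bigr] \ge c(S_i).
$$
Second, using the fact that $S'$ is a Nash equilibrium under the contract $2\noindexcontract + \epsilonvec$, applied to the deviation $S'_i \to S'_i \cup S_i$, I would get
$$
(2\icontract + \eps)\bigl[f((S'_i \cup S_i) \sqcup S'_{-i}) - f(S')\bigr] \le c(S'_i \cup S_i) - c(S'_i) = c(S_i \setminus S'_i) \le c(S_i).
$$
Chaining these two inequalities gives, for every $i$ with $\icontract > 0$, the per-agent bound
$$
f((S'_i \cup S_i) \sqcup S'_{-i}) - f(S') \le \tfrac{1}{2}\bigl[f(S) - f(S_{-i})\bigr].
$$
The case $\icontract = 0$ is handled separately but trivially: subset-stability then forces $c(S_i) = 0$, so the Nash inequality alone (using $\eps > 0$) yields a nonpositive marginal, and the same bound holds.

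The final step is to sum this per-agent inequality over $i \in \agents$ and invoke submodularity twice. For the left-hand side, a telescoping along the chain $S' \subseteq S' \cup S_1 \subseteq S' \cup S_1 \cup S_2 \subseteq \cdots \subseteq S' \cup S$ together with the decreasing-marginals property (marginals of $S_k$ are larger when added to a smaller base set, and here $S'$ is a subset of every intermediate set) gives
$$
\sum_{i\in\agents} \bigl[f((S'_i \cup S_i) \sqcup S'_{-i}) - f(S')\bigr] \ge f(S' \cup S) - f(S').
$$
For the right-hand side, a symmetric telescoping along $\emptyset \subseteq S_1 \subseteq S_1 \cup S_2 \subseteq \cdots \subseteq S$ shows that $\sum_{i} [f(S) - f(S_{-i})] \le f(S)$. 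Combining these two bounds yields $f(S' \cup S) - f(S') \le f(S)/2$, and monotonicity of $f$ gives $f(S') \ge f(S' \cup S) - f(S)/2 \ge f(S)/2$, which is what we want.

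The main obstacle I expect is keeping the direction of the two submodularity telescopings straight: both rely on the decreasing-marginals characterization, yet one is used to \emph{lower-bound} a sum (on the LHS, where the base sets dominate $S'$, so marginals shrink relative to the base $S'$) and the other to \emph{upper-bound} a sum (on the RHS, where the base sets are subsets of $S_{-i}$, so marginals grow relative to the base $S_{-i}$); it would be easy to invert one of the directions. The role of $\eps$ in the contract $2\noindexcontract + \epsilonvec$ is essentially just to cover the corner case $\icontract = 0$ (and to ensure that deviations with zero cost strictly break ties); it plays no quantitative role in the factor $1/2$.
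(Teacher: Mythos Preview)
Your proposal is correct. The paper itself does not prove this lemma; it simply quotes it as Lemma~3.3 of \cite{multimulti}, so there is no in-paper proof to compare against. Your argument---subset-stability at $\emptyset$ to bound $c(S_i)$, the Nash inequality at $S'_i\cup S_i$ to bound the marginal over $S'$, chaining via $\icontract/(2\icontract+\eps)<1/2$, and the two submodular telescopes---is the standard route and matches the proof in the cited source.
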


\paragraph{Restricted Contracts.}
For any contract $\noindexcontract\in \reals^n_+$ and any set of agents $G \subseteq A$,  we denote by $\noindexsubcontract{G}$ the contract obtained by restricting payments to the agents in $G$. Namely, we let
$$\noindexsubcontract{G}=\begin{cases}
\icontract & \text{if } i \in G, \\    0 & \text{otherwise}.
\end{cases}$$
When $G=\{i\}$ is a singleton, we often omit the brackets and write $\noindexsubcontract{i}$.

\paragraph{Reward Functions and Access Oracles.}
In this work we focus on monotone reward function $f$.
We denote by $f(S' \mid S)$ the marginal contribution of $S'$ to $f$, given $S$, i.e.,
$f(S' \mid S) = f(S' \cup S) - f(S)$.
We also use the notation $f_S(i) = f(\{i\} | S)$.
We consider reward functions that belong to one of the following (nested) classes:
\begin{itemize}
    \item Additive: $f(S) = \sum_{a \in S} f(\{a\})$ for any $S \subseteq T$.
    \item Gross-Substitutes: For any two price vectors $p,q \in \reals^{|T|}$, such that $p \le q$ coordinate-wise, and any $S^\star \in \argmax_{S \subseteq T} \{ f(S) - \sum_{a \in S} p_a \}$, there exists a \textit{demand bundle} with respect to $q$, i.e., $S' \in \argmax_{S \subseteq T} \{ f(S) - \sum_{a \in S} q_a \}$, such that $S^\star \cap \{j \mid p_j = q_j\} \subseteq S'$.
    \item Submodular: For any $S' \subseteq S \subseteq T$ and any $a \notin S$, it holds that
    $f{_{S'}}(a) \ge f_{{S}}( a )$.
    \item Subadditive: For any $S', S \subseteq T$, it holds that $f(S) + f(S') \ge f(S \cup S')$.
\end{itemize}
It is well-known that,
$\text{additive} \subsetneq \text{gross-substitutes} \subsetneq \text{submodular} \subsetneq \text{subadditive}$ \cite{lehmann2001combinatorial}.
As $f$ may have an exponential representation in $n$ (and $|T|$) we assume {algorithms} access $f$ via two standard models:
\begin{itemize}
    \item Value oracle: accepts a set $S \subseteq T$ and returns $f(S)$.
    \item Demand oracle: accepts a vector $p \in \reals^{|T|}$ and returns  $S^\star \in \argmax_{S \subseteq T} \{ f(S) - \sum_{a \in S} p_a \}$.
\end{itemize}
Generally, demand oracles are strictly stronger that value oracles \cite{blumrosen2005computational}. However, for gross-substitutes $f$ a demand query can be computed in poly-time with value oracle access \cite{PaesLeme17}.

\subsection{Objectives and Maximization Problems}\label{sec:best_multi}

Our results apply to a variety of objectives beyond the standard goal of maximizing profit.
Analogously to \cite{feldman2025budget}, we consider a family of natural objectives that are defined with respect to a contract $\noindexcontract$ and the set of actions $S=\bigsqcup_{i \in A} S_i$ taken by the agents.
We begin by presenting a precise definition of an objective.

\begin{definition} [Objectives in the Multi-Agent Combinatorial-Actions Model]\label{def:obj_multi_multi}
    An \emph{objective} $\varphi$ is defined by a poly-time algorithm that, given a problem instance $\multiInstance$, a contract $\noindexcontract$, and a subset of actions $S\subseteq T$, outputs a non-negative real number, denoted $\varphi_{\multiInstance}(\noindexcontract,S)$. This algorithm has value oracle access to $f$. 
    We omit the subscript if the instance is clear from context.
\end{definition}

Observe that the standard goal of maximizing profit adheres to this definition: Given a contract $\noindexcontract$ and a set of actions $S$ a simple poly-time algorithm can compute $u_P(\noindexcontract,S) = (1-\sum_{i \in A} \icontract) \cdot f(S)$ using a single value query to $f$.

For a given objective $\varphi$ and budget $B\in (0,1]$, we denote by $\MAX\varphi(B)$ the problem of finding a budget-feasible contract and equilibrium $(\noindexcontract,S) \in \setofcontracts$, that maximize $\varphi$. We sometimes abuse notation and use $\MAX\varphi(B)$ to denote the maximal value of $\varphi$ achievable under budget $B$.

Below we define the class of \goodobj\ objectives, which naturally extends the definition for the binary-actions setting of \cite{feldman2025budget}.

\begin{definition}[{BEST Objectives}]  \label{def:goodobj_multi_multi}
    An objective $\varphi$ belongs to the class of  beyond standard (BEST) objectives if, for any instance $\multiInstance$, it is:
    \begin{enumerate}[label=(\roman*)]
        \item \emph{Sandwiched between profit and reward:} For any $\noindexcontract$ and $S \subseteq \actions$,
        $u_P(\noindexcontract,S)\le \varphi(\noindexcontract, S) \le f(S)$. \label{def:sandwichproperty}
        \item \emph{Decomposable:} For any $\noindexcontract$, any $S \subseteq \actions$, and any $i\in \agents$,  $\varphi(\noindexcontract, S) \le f(S_{-i})+\varphi(\noindexsubcontract{i}, S_i)$.\label{def:decompositionproperty} 
        \item \emph{Weakly increasing in $S$:} For any $\noindexcontract$ and any $S\subseteq S'\subseteq \actions$, $\varphi(\noindexcontract, S) \le \varphi(\noindexcontract, S')$. \label{def:weakly_increasing_property}
        \item \emph{Weakly decreasing in $\noindexcontract$:} For any  $\noindexcontract \le \noindexcontract'$ (coordinate-wise) and $S\subseteq \actions$,  $\varphi(\noindexcontract, S) \ge \varphi(\noindexcontract', S)$. \label{def:weakly_decreasing_property}
    \end{enumerate}
\end{definition}
The first two conditions generalize the definition of \goodobj\ objectives from \cite{feldman2025budget}.
The richer combinatorial-actions setting considered here also requires conditions $(iii)$ and $(iv)$, which are natural, and clearly hold for common objectives such as profit, welfare and reward.

In \Cref{sec:proofs_BEST} we show that the class of \goodobj\ objectives includes the standard objectives of profit (maximizing the principal's utlity), welfare ($\max_{\noindexcontract, S \in \nash(\noindexcontract)} f(S)-c(S)$) and reward ($\max_{\noindexcontract, S \in \nash(\noindexcontract)} f(S)$); see \Cref{obs:common_objs}. We also show that this class is closed under convex combinations; see \Cref{obs:convex_comb_objs}.

\section{Inapproximability Result for Submodular Instances}\label{sec:inapprox_submodular}
In this section we show that when $f$ is submodular, for any budget $B\in (0,1)$, and any \goodobj\ objective $\varphi$, it holds that $\MAX\varphi(B)$ cannot be efficiently approximated with demand query access. This presents a significant departure from the unbudgeted case, where \cite{multimulti} present a poly-time constant-factor approximation algorithm for profit, i.e., for $\maxrevenue(1)$. 

\begin{theorem}[{Inapproximability for Submodular Instances}]\label{thm:multi-multi-inapprox}
    Fix any \goodobj\ objective $\varphi$ and any budget $B\in (0,1)$.
    For any approximation guarantee $K:\mathbb{N} \to [1,\infty)$, any (randomized) poly-time algorithm with demand oracle access to $f$ may only achieve a $K(n)$-approximation to $\MAX\varphi(B)$ with exponentially-small probability (in $n$).
\end{theorem}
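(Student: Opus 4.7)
My plan is to construct, for each half-sized subset $\specialAgents$ of the ``ordinary'' agents, a hard instance $\mathcal{I}_{\specialAgents}$ on which every contract achieving a finite multiplicative approximation to $\MAX\varphi(B)$ must incentivize a Nash equilibrium that contains both $\goodaction$ and every agent in $\specialAgents$; then apply Yao's minimax principle to the uniform distribution over $\specialAgents$, and show that polynomially many value and demand queries leak only a vanishing amount of information about the hidden $\specialAgents$, so the success probability of any randomized algorithm is exponentially small in $n$.

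\textbf{Construction.} One \emph{special} agent $\ast$ has two non-shirking actions $\goodaction,\badaction$, while each of the remaining $n-1$ ordinary agents has a single binary effort action. For a fixed unknown $\specialAgents \subseteq \agents \setminus \{\ast\}$ with $|\specialAgents|=(n-1)/2$, I design a submodular reward function $f_{\specialAgents}$ with: (a) $f_{\specialAgents}(S) \le \eps$ whenever $\goodaction \notin S$, for some inverse-polynomial $\eps$; (b) $f_{\specialAgents}(S) \ge 1/2$ whenever $\goodaction \in S$; (c) the marginals of $f_{\specialAgents}$ and the costs of $\goodaction,\badaction$ tuned so that, under budget $B$, there is a single budget-feasible contract $\noindexcontract^{\star}$ whose set of incentivized equilibria contains $\goodaction$, and the only such equilibrium is $S^{\star} = \specialAgents \sqcup \{\goodaction\}$; (d) outside of a tiny family of ``probing'' queries adapted to $\specialAgents$, value and demand queries to $f_{\specialAgents}$ return the same answer as to some baseline $f_0$ that is oblivious to $\specialAgents$. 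The existence of $(\noindexcontract^{\star},S^{\star})$ is a deliberate violation of best-response monotonicity: the special agent strictly prefers $\goodaction$ only when all of $\specialAgents$ work; if any member of $\specialAgents$ shirks, his best response collapses to $\emptyset$ or $\badaction$. Submodularity is maintained by defining $f_{\specialAgents}$ as a minimum of $f_0$ with a capped ``bonus'' term $b_{\specialAgents}$ that rewards sets containing $\goodaction$ together with sufficiently many members of $\specialAgents$.

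\textbf{Analysis.} On the positive side, $(\noindexcontract^{\star},S^{\star}) \in \setofcontracts$ together with $f_{\specialAgents}(S^{\star})\ge 1/2$ gives $u_P(\noindexcontract^{\star},S^{\star})\ge (1-B)/2$, and hence, by the sandwich condition of \Cref{def:goodobj_multi_multi}, $\MAX\varphi(B) \ge (1-B)/2$. Conversely, any $(\noindexcontract,S)\in \setofcontracts$ with $\goodaction\notin S$ satisfies $\varphi(\noindexcontract,S)\le f_{\specialAgents}(S) \le \eps$, so any $K(n)$-approximation must output an equilibrium containing $\goodaction$, and by property (c) the only such budget-feasible equilibrium is $S^{\star}$ itself. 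Thus the algorithm must essentially recover $\specialAgents$. For the query-complexity lower bound I invoke Yao's minimax principle against the uniform prior on $\specialAgents$ of size $(n-1)/2$: by a union bound over the transcript of $\mathrm{poly}(n)$ queries, any individual value or demand answer on $f_{\specialAgents}$ differs from the corresponding answer on $f_0$ for at most a $2^{-\Omega(n)}$ fraction of the $\binom{n-1}{(n-1)/2} = 2^{\Omega(n)}$ candidates; the algorithm's posterior after $\mathrm{poly}(n)$ queries is therefore $2^{-\Omega(n)}$-close in total variation to the prior, and so the probability that its output incentivizes $\goodaction$ is $2^{-\Omega(n)}$.

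\textbf{Main obstacle.} The most delicate part is ensuring that \emph{demand} queries at arbitrary price vectors reveal essentially nothing about $\specialAgents$, while the bonus term $b_{\specialAgents}$ is nevertheless sharp enough to make $\goodaction$ a best response at $\noindexcontract^{\star}$. Demand oracles are powerful: a crafty price vector could, in principle, make the demanded bundle ``point'' at $\specialAgents$. To rule this out I keep $b_{\specialAgents}$ \emph{shallow}---small enough that for all but negligibly many price vectors the demanded bundle agrees with that of $f_0$---yet \emph{sharp at the equilibrium payment to the special agent}, so that the threshold behavior which tilts him from $\badaction$ to $\goodaction$ can be engineered. A careful case analysis of demand answers for prices inside and outside this narrow threshold window, together with a verification that the capped construction remains submodular and that $S^{\star}$ is the unique $\goodaction$-containing equilibrium, will complete the proof.
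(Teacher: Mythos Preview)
Your high-level strategy---Yao's principle over a uniformly random half-subset $\specialAgents$, a special agent whose preference for $\goodaction$ hinges on $\specialAgents$ via a deliberate failure of best-response monotonicity, and the conclusion that any approximation must effectively recover $\specialAgents$---matches the paper exactly. The concrete mechanism you propose, however, diverges in a way that leaves a real gap.

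The paper does \emph{not} add a bonus to $\goodaction$-containing sets. It writes $f^{(\specialAgents)}=f_1+f_2-f_3$ where the $\specialAgents$-dependence lives entirely in a \emph{penalty} $f_3(S)=(\eps/2)\cdot\indicator[S=\{\badaction\}\cup\specialAgents]$ concentrated on a \emph{single} set, and that set contains $\badaction$, not $\goodaction$. The mechanism is that the marginal of $\badaction$ drops (from $2\eps$ to $3\eps/2$) precisely when the ordinary agents play $\specialAgents$, and this is what tips the special agent's best response from $\badaction$ to $\goodaction$ there and nowhere else; all $\goodaction$-marginals remain independent of $\specialAgents$. Your scheme instead boosts $f$ on sets containing $\goodaction$ together with ``sufficiently many members of $\specialAgents$.'' That family is exponential, so your property~(d) cannot hold as stated: a value query at $\{\goodaction\}\cup S'$ for $|S'|$ near $|\specialAgents|$ already leaks $|S'\cap\specialAgents|$. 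Moreover, arranging $f(\goodaction\mid\specialAgents)$ to exceed $f(\goodaction\mid S')$ for $S'\subsetneq\specialAgents$ directly fights submodularity (decreasing marginals), whereas a penalty on $\badaction$ does not.

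The single-set penalty is also what dissolves the demand-oracle obstacle you correctly flag as the crux. Because $f^{(\specialAgents)}$ and the baseline $f_1+f_2$ disagree at exactly one point, and only downward, the paper can show (\Cref{lem:demand_query_by_values}) that for every price vector a demand bundle lies among twelve candidate sets determined by the prices alone; a demand query therefore reduces to twelve value queries, and the entire lower bound collapses to the standard ``find the hidden set with value queries'' argument. Your ``shallow-but-sharp'' bonus does not permit such a reduction: the bonus is sharp precisely on sets that a demand query can elicit by pricing near the special agent's threshold, so one well-chosen demand query would reveal $\specialAgents$. In short, the two ideas you are missing are (i)~tie $\specialAgents$ to $\badaction$ rather than $\goodaction$, and (ii)~do it via a penalty on a single set rather than a bonus on a family; with those in place the demand-query difficulty evaporates and the rest of your outline goes through.
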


By setting $K(n) = 2^n$ in \Cref{thm:multi-multi-inapprox}, we derive the following corollary.
\begin{corollary}[{Exponential Lower Bound on Expected Approximation Ratio}]
    Fix any $\goodobj$ objective $\varphi$, and any budget $B\in (0,1)$. 
    Let $\mathcal{A}$ be any (randomized) poly-time algorithm with demand oracle access to $f$, and denote its output by $(\noindexcontract,S)$.
    Then, there exists a family of instances such that
    ${\MAX\varphi(B)}/{\mathbb{E}_{\mathcal{A}}[\varphi(\noindexcontract,S)]} = 2^{\Omega(n)}$.
\end{corollary}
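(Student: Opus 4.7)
The plan is to realize the blueprint from the Techniques subsection: construct a distribution over submodular instances, parameterized by a hidden ``good'' subset $A^\star$ of the binary-action agents, such that only contracts that simultaneously target $A^\star$ and the special agent extract any non-trivial value, while no polynomial-query algorithm can identify $A^\star$. I would take $n-1$ agents with a single binary action each plus one distinguished agent $s$ with two actions $\mathcal{G},\mathcal{B}$, and sample $A^\star\subseteq[n-1]$ uniformly among the $\binom{n-1}{(n-1)/2}$ subsets of size $(n-1)/2$. The costs and the reward $f$ are tailored so that (i) $f(A^\star\sqcup\{\mathcal{G}\})\geq 1/2$ while $f(S)\leq \delta$ whenever $\mathcal{G}\notin S$, with $\delta=\delta(n)$ chosen so that $K(n)\cdot\delta\ll 1-B$, and (ii) a budget-feasible contract can incentivize $\mathcal{G}$ as part of a Nash equilibrium only if the set of working binary agents is precisely $A^\star$: any larger coalition pushes total payments above $B$, and any strict subcoalition fails to make $\mathcal{G}$ a best response for $s$ against the alternative $\mathcal{B}$.

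Building $f$ is the technically delicate step, because submodularity forbids any positive synergy between $\mathcal{G}$ and the agents of $A^\star$ via the marginal $f(\mathcal{G}\mid S_{-s})$ alone. I would therefore encode the gap through the $\mathcal{G}$-vs.-$\mathcal{B}$ comparison for the special agent: make $\mathcal{B}$ cheap and attractive by default, and design $f$ (for instance as a weighted matroid-rank or coverage function plus a small additive $\delta$-baseline) so that $f(\mathcal{G}\sqcup S_{-s})-f(\mathcal{B}\sqcup S_{-s})$ is just large enough to flip $s$'s preference from $\mathcal{B}$ to $\mathcal{G}$ precisely when $S_{-s}\supseteq A^\star$, combined with a cost scheme that makes paying the threshold-incentives to $A^\star\cup\{s\}$ saturate the budget $B$. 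Once this construction is in place, the approximation bounds follow from the BEST properties of \Cref{def:goodobj_multi_multi}: the targeted contract $\alpha^\star$ with equilibrium $(A^\star,\{\mathcal{G}\})$ is budget-feasible, so by the sandwich property, $\varphi(\alpha^\star,A^\star\sqcup\{\mathcal{G}\})\geq u_P(\alpha^\star,A^\star\sqcup\{\mathcal{G}\})\geq (1-B)/2$. Conversely, every other pair $(\alpha,S)\in\mathcal{C}(B)$ either has $\mathcal{G}\notin S$ --- in which case $\varphi(\alpha,S)\leq f(S)\leq \delta$ by the upper sandwich --- or, by item (ii), must agree with $A^\star$ on the binary agents.

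It then suffices to show that identifying $A^\star$ from polynomially many value and demand queries succeeds only with probability $2^{-\Omega(n)}$, which I would obtain via a Yao-style argument. The construction would be engineered so that every value query and every demand query is answered by a fixed, $A^\star$-independent ``baseline'' submodular function, except when the queried set, or the demand-maximizer for the queried price vector, lands in a narrow, $A^\star$-specific target region. A union bound over $\mathrm{poly}(n)$ queries shows that the probability (over the random $A^\star$) of any query becoming $A^\star$-revealing is $2^{-\Omega(n)}$, because the target region has density $2^{-\Omega(n)}$ among the choices of $A^\star$. Conditioned on no revealing query, the algorithm's output is statistically independent of $A^\star$, so the probability that its contract incentivizes an equilibrium whose binary-agent set equals $A^\star$ is $1/\binom{n-1}{(n-1)/2}=2^{-\Omega(n)}$. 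Yao's minimax principle lifts the bound to randomized algorithms.

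The main obstacle is reconciling three competing demands in one submodular $f$: the sharp reward gap between $(A^\star,\{\mathcal{G}\})$ and every other profile, the pinpoint equilibrium-uniqueness requirement of item~(ii) (which exploits the violation of best-response monotonicity for $s$), and indistinguishability against demand queries, which can probe many price vectors and in principle reveal a lot of structure. The crucial trick is that the baseline submodular function must absorb demand queries cleanly: the $A^\star$-specific target region must be small enough in density that even adversarial price vectors are unlikely to land in it, yet the engineered gap at $S_{-s}=A^\star$ must remain just large enough to discriminate $\mathcal{G}$ from $\mathcal{B}$ in the critical equilibrium.
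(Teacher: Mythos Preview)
Your plan closely tracks the paper's approach: the paper also uses $n$ binary-action agents plus one special agent with actions $\{\mathcal{G},\mathcal{B}\}$, hides a uniformly random size-$n/2$ subset $A'$, engineers $f$ so that the unique ``good'' budget-feasible equilibrium is $A'\cup\{\mathcal{G}\}$, and concludes via Yao. In the paper, the corollary itself is a one-line consequence of \Cref{thm:multi-multi-inapprox} (set $K(n)=2^n$); your proposal is really a sketch of that theorem.

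Two places where your sketch is less sharp than the paper's execution and where you would likely get stuck:

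\textbf{The choice of $f$.} You suggest a weighted matroid-rank or coverage function plus an additive baseline. Matroid-rank functions are gross substitutes, and the paper's positive result (\Cref{cor:gsapprox}) shows that GS instances admit an $O(1)$-approximation --- so any $f$ built from matroid ranks alone cannot yield the hardness you want. The paper's $f$ is a sum of a unit-demand function $f_1$ over $\{\mathcal{G},\mathcal{B}\}$ and a uniform $(n/2{+}1)$-demand function $f_2$ over $[n]\cup\{\mathcal{B}\}$, \emph{minus} a single-point indicator $f_3(S)=(\varepsilon/2)\cdot\mathbb{1}[S=A'\cup\{\mathcal{B}\}]$. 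The subtraction of $f_3$ is precisely what breaks gross substitutes while preserving submodularity, and it creates a \emph{single-point} dip in $f(\mathcal{B}\mid S_{-s})$ at $S_{-s}=A'$ rather than a threshold at all supersets of $A'$ as you propose. This single-point notch is what makes both the submodularity verification and the query-hiding argument clean.

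\textbf{Demand queries.} Your ``target region has small density'' idea is vaguer than what the paper does, and it is not clear it can be made to work against arbitrary price vectors. Because the only $A'$-dependent term in $f$ is the single indicator $f_3$, the paper shows directly (\Cref{lem:demand_query_by_values}) that \emph{every} demand query on $f^{(A')}$ can be answered using at most $12$ value queries, independent of the price vector. This collapses demand-oracle access to value-oracle access outright, after which the standard hide-a-set argument (the algorithm's transcript on $f_1+f_2$ is identical to its transcript on $f^{(A')}$ unless it happens to value-query $A'\cup\{\mathcal{B}\}$) finishes the proof. Your density argument would instead have to control, for every price vector the algorithm might submit, whether the demand maximizer lands in the notch --- a much more delicate analysis that the paper's reduction sidesteps entirely.
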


Our result is information-theoretic: We construct an instance with a probability distribution over submodular reward functions and establish an upper bound on the expected performance of any deterministic algorithm with demand oracle access on this randomized input. By Yao's principle, this directly implies the statement of the theorem.

Let us now define the instances used in the proof of \Cref{thm:multi-multi-inapprox}.

\begin{definition}[Parameterized Instances]
Fix a budget $B\in (0,1)$, an approximation guarantee $K:\mathbb{N} \to [1,\infty)$, and any even $n>0$. Fix any $\varepsilon > 0$ such that:
\begin{align*}
\varepsilon <  \min\left( \frac{1-B}{K(n) \cdot (n+4)}, \frac{4 \cdot n}{B}\right).
\end{align*}
For any $A' \subseteq [n]$ with $|A'| = n/2$, we define an instance $\mathcal{I}^{(A')} = \langle A, \bigsqcup T_i, f^{(A')}, c \rangle$ as follows.
\begin{itemize}
        \item The set of agents is $A = \{1, \ldots, n+1\}$. 
        \item  Each agent $i\in [n]$ controls a single action $\actions_i =\{i\}$. Additionally, agent $n+1$ controls two actions $\actions_{n+1} = \{\badaction, \goodaction\}$.
That is, the total set of actions is $\actions = [n] \cup \{\badaction, \goodaction\}$, ($\badaction$ for ``bad'', $\goodaction$ for ``good'').
\item The costs of the actions are:
\begin{align*}
    c_i = \eps^3 \text{ for all } i\in [n], \quad c_{\goodaction} = (1/2) \cdot \left(B-(n/2) \cdot \eps^2\right) \quad\text{and}\quad c_{\badaction}= (3/2) \cdot \eps \cdot B.
\end{align*}
\item The reward function $f^{(A')}$ is defined as a sum of three set functions. Specifically, we let $f^{(A')}(S) = f_1(S) + f_2(S) - f_3(S, A')$, where:
    \begin{align*}
            f_1(S) &= \max\left((1/2) \cdot \indicator[\goodaction\in S], \eps \cdot \indicator[\badaction\in S]\right)\\
            f_2(S) &= \eps \cdot \min \left(|S \setminus \{\goodaction\}|, n/2+1\right)\\
            f_3(S, A') &= (\eps/2) \cdot \indicator[S=
        \{\badaction\} \cup \specialAgents]
        \end{align*}
    \end{itemize}
\end{definition}
Whenever it is clear from the context, we omit $A'$ from the reward function, and write simply $f$.
Observe that $f_1$ is a (weighted) unit-demand function over $\{\goodaction,\badaction\}$ and $f_2$ is a uniform $({n}/{2}+1)$-demand\footnote{{A set function $f: 2^A \to \mathbb{R}_{\geq 0}$ is \emph{unit-demand} if
$f(S) =\; \max_{i \in S} f(\{i\})$. Moreover,
$f$ is \emph{uniform $k$-demand} if there exists $v\in \reals_{\ge 0}$ such that
$f(S) = \min\{\,|S|,\, k\,\} \cdot v$}} over $[n] \cup \{\badaction\}$.

We give a brief proof sketch for \Cref{thm:multi-multi-inapprox}:
First, we observe that obtaining a good approximation for any \goodobj\ objective requires a good approximation to $f$. 
By design, this is only possible when agent $n+1$ chooses action $\goodaction$, since any set containing $\goodaction$ has value greater than $1/2$, whereas any set that excludes $\goodaction$ has value at most $(n/2 + 2) \cdot \varepsilon$. 
However, the only way to incentivize agent $n+1$ to take $\goodaction$ instead of $\badaction$, while complying with the budget, is to incentivize the set $\specialAgents$ to take action. 
This is because the marginal reward of $\badaction$ is reduced when the agents of $\specialAgents$ exerts effort 
(as captured by $f_3$).
Consequently, a good approximation can only be achieved when the equilibrium satisfies $S_{-(n+1)} = \specialAgents$, implying that the algorithm must effectively ``know'' the set $\specialAgents$. 
In the case of value queries alone, a standard ``hide a special set'' argument shows that any algorithm identifying $\specialAgents$ with non-negligible probability must make exponentially many value queries. 
Finally, to complete the proof, we demonstrate that access to demand queries does not help, as any demand query can be simulated using $O(1)$ value queries for our choice of $f$.

We now move to the formal analysis. Before proving \Cref{thm:multi-multi-inapprox}, we present some useful lemmas.
We first show that $f^{(A')}$ is monotone and submodular. The proof is deferred to \Cref{app:missing_proofs}.

\begin{restatable}{lemma}{monotoneAndSubmodular}\label{lem:monotone_and_submodular}
    For any $A' \subseteq [n]$ with $|A'| = n/2$, it holds that $f^{(A')}$ is monotone and submodular.
\end{restatable}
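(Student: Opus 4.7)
The plan is to decompose $f^{(A')}$ as $g - h$, where $g = f_1 + f_2$ is the ``positive part'' and $h = f_3$ is a one-point perturbation of magnitude $\eps/2$ supported on the single set $Z := \{\mathcal{B}\} \cup A'$. I would verify monotonicity and submodularity for $g$ separately, and then argue that subtracting $h$ preserves both properties by showing that the ``slack'' in $g$'s chain/local inequalities at $Z$ always exceeds $\eps/2$.

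For the positive part, I would first observe that $f_1$ is a weighted unit-demand function on $\{\mathcal{G}, \mathcal{B}\}$ (with weights $1/2$ and $\eps$), which is a standard example of a gross-substitutes (hence submodular) function, and is clearly monotone. The function $f_2$ is, up to the scalar $\eps$, the rank function of the uniform matroid of rank $n/2+1$ on the ground set $T \setminus \{\mathcal{G}\}$ (with $\mathcal{G}$ a loop), which is monotone submodular. Hence $g = f_1 + f_2$ is monotone and submodular.

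For the perturbation, note that $h$ equals zero everywhere except on $Z$, where it equals $\eps/2$. To verify monotonicity of $f^{(A')} = g - h$, I need to check the two covering relations touching $Z$: (i) for every $x \in Z$, $g(Z) - g(Z \setminus \{x\}) \ge \eps/2$ so that $f(Z) \ge f(Z \setminus \{x\})$ still holds; (ii) for every $y \notin Z$, $g(Z \cup \{y\}) - g(Z) \ge \eps/2$ so that $f(Z \cup \{y\}) \ge f(Z)$ holds. Both can be read off from the formulas for $f_1, f_2$: removing any $x \in Z$ decreases $g$ by at least $\eps$ (by the additive contribution of $f_2$ on $T \setminus \{\mathcal{G}\}$, or by the unit-demand jump in $f_1$ when $x = \mathcal{B}$), and adding any $y \notin Z$ increases $g$ by at least $\eps$ (since $\mathcal{G}$ activates $f_1 = 1/2$ and any other $y$ gains $\eps$ from $f_2$ before the cap binds).

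For submodularity I would use the local form $f(A + i) + f(A + j) \ge f(A + i + j) + f(A)$ for every $A$ and $i, j \notin A$. Denoting by $\sigma_g(A, i, j)$ the slack of $g$ on this inequality (which is nonnegative by submodularity of $g$), the perturbation contributes $-\bigl(\mathbf{1}[A+i = Z] + \mathbf{1}[A+j = Z] - \mathbf{1}[A+i+j = Z] - \mathbf{1}[A = Z]\bigr)\cdot \eps/2$ to the slack of $f$. The only ``dangerous'' cases are therefore $A + i = Z$ (symmetric with $A + j = Z$): here one must verify $\sigma_g(A, i, j) \ge \eps/2$. The remaining cases ($A = Z$ or $A + i + j = Z$) \emph{gain} $\eps/2$ of slack, and all other configurations have $\sigma_f = \sigma_g \ge 0$. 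I would carry out the dangerous case by splitting on whether $i \in A'$ or $i = \mathcal{B}$, and whether $j = \mathcal{G}$ or $j \in [n] \setminus A'$, tabulating $f_1$ and $f_2$ at the four relevant sets and invoking that the rank cap in $f_2$ activates exactly at the cardinality $|Z \setminus \{\mathcal{G}\}| = n/2 + 1$.

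I expect the main obstacle to be precisely this case analysis at $A + i = Z$. The rank truncation in $f_2$ and the unit-demand structure of $f_1$ interact in a subtle way here, because the transition from $\mathcal{B}$-supported to $\mathcal{G}$-supported in $f_1$ and the saturation of the cap in $f_2$ happen simultaneously at $Z$; verifying that these two sources of slack together account for at least $\eps/2$ is the technical heart of the argument, whereas all other cases reduce to routine arithmetic.
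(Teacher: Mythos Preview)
Your plan and case structure are essentially the paper's: the paper also isolates $f_1$ as an obvious monotone submodular piece and then verifies the local inequality for $f' = f_2 - f_3$ by casing on which of $S$, $S\cup\{a\}$, $S\cup\{b\}$, $S\cup\{a,b\}$ equals $Z := \{\mathcal B\}\cup A'$. Whether $f_1$ is bundled into $g$ or left separate is immaterial, since $f_1$ does not depend on the coordinates in $[n]$ and hence contributes zero slack in the relevant quadruples.

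There is, however, a genuine gap precisely at the point you flag as the ``technical heart''. Take $A+i = Z$ with $i\in A'$ and $j = \mathcal G$, so $A = Z\setminus\{i\}$. Since $\mathcal G$ is ignored by $f_2$, we have $|(A\cup\{\mathcal G\})\setminus\{\mathcal G\}| = |A| = n/2$, so the cap in $f_2$ does \emph{not} bind at $A+j$. A direct computation gives
\[
g(A)=\varepsilon+\varepsilon\cdot\tfrac{n}{2},\quad
g(Z)=\varepsilon+\varepsilon\bigl(\tfrac{n}{2}{+}1\bigr),\quad
g(A{+}j)=\tfrac12+\varepsilon\cdot\tfrac{n}{2},\quad
g(Z{+}j)=\tfrac12+\varepsilon\bigl(\tfrac{n}{2}{+}1\bigr),
\]
and hence $\sigma_g(A,i,j)=0$, not $\ge \varepsilon/2$. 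Consequently $f^{(A')}(i\mid A)=\varepsilon/2 < \varepsilon = f^{(A')}(i\mid A\cup\{\mathcal G\})$: the function $f^{(A')}$ as written is \emph{not} submodular at this quadruple. The paper's own proof has the same blind spot---in the case ``$\{a\}\cup S = A'\cup\{\mathcal B\}$'' it asserts $f'(\{a,b\}\cup S)-f'(\{b\}\cup S)=0$, which is correct for $b\in[n]\setminus A'$ (the cap then saturates at $\{b\}\cup S$) but equals $\varepsilon$ when $b=\mathcal G$. So neither your argument nor the paper's goes through; the construction needs a small repair (for instance, letting $f_3$ trigger on $S\setminus\{\mathcal G\}=\{\mathcal B\}\cup A'$ rather than on $S=\{\mathcal B\}\cup A'$) before the lemma can hold. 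A minor separate point: in your monotonicity case~(ii) the required inequality is only $g(Z\cup\{y\})-g(Z)\ge -\varepsilon/2$, which is immediate from monotonicity of $g$; your stronger claim that ``any other $y$ gains $\varepsilon$ from $f_2$ before the cap binds'' is false, since the cap already binds at $Z$.
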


We next show that demand queries to $f^{(A')}$ are not more powerful than value queries.

\begin{lemma} \label{lem:demand_query_by_values}
For  any $A' \subseteq [n]$  with $|A'| = n/2$, any demand query to $f^{(A')}$ can be computed with $12$ value queries to $f^{(A')}$.
\end{lemma}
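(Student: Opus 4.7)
The plan is to exploit the decomposition $f^{(\specialAgents)}(S) = \tilde f(S) - f_3(S, \specialAgents)$, where $\tilde f := f_1 + f_2$ has a fully transparent structure (a capped sum depending only on $|S \cap [n]|$, plus a max over $\{\badaction, \goodaction\}$) and $f_3$ is nonzero only at the single ``cursed'' set $\{\badaction\} \cup \specialAgents$. Thus $f$ and $\tilde f$ agree on every set other than the cursed one, so an $f$-demand bundle differs from a $\tilde f$-demand bundle only when the cursed set is directly implicated in the maximum. The strategy is to precompute a constant-size list of candidate bundles using only $p$ and the known structure of $\tilde f$, query $f^{(\specialAgents)}$ on each, and return the one maximizing $f(S) - \sum_{a \in S} p_a$.

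First, I would partition candidate sets by $S \cap \{\badaction, \goodaction\} \in \{\emptyset, \{\badaction\}, \{\goodaction\}, \{\badaction, \goodaction\}\}$. Within each of these four categories, $f_1(S)$ is a known constant and $f_2(S)$ depends only on $|S \cap [n]|$ through a cap at $n/2+1$. Consequently, the $\tilde f$-maximizer within each category is readable off $p$: include the cheapest agents of $[n]$ with $p_i \le \varepsilon$, truncated at the applicable cap. Computing this primary candidate for each of the four categories uses no value queries.

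The main obstacle is the cursed-set interaction. The cursed set lies only in the category $\{\badaction\}$, and if the primary candidate for this category coincides with $\{\badaction\} \cup \specialAgents$ (which occurs exactly when the cheapest $n/2$ agents equal $\specialAgents$), the true $f$-maximizer in this category may differ from the primary. To cover this, I would add two alternative candidates in each category: a ``swap'' candidate obtained by replacing the most expensive agent in the primary's $[n]$-support $A^\dagger$ by the cheapest agent of $[n] \setminus A^\dagger$, and an ``extend'' candidate obtained by adding that cheapest outside agent to the primary. Providing three candidates per category uniformly yields at most $12$ candidates, each determined by $p$ alone.

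Finally, I would query $f^{(\specialAgents)}$ on each of these at most $12$ candidates and return the argmax of $f(S) - \sum_{a \in S} p_a$. Correctness reduces to a short case analysis: in the three categories that cannot contain the cursed set, the primary already maximizes $f$ restricted to that category; in category $\{\badaction\}$, either the primary is not cursed (and equals the $f$-maximizer there), or the primary is cursed, and a direct calculation using the $\varepsilon/2$ deduction in $f_3$ shows that the category's $f$-maximizer must be the cursed primary, the swap, or the extend, with the choice determined by whether $p_{k^*} - p_{j^*} < \varepsilon/2$ (where $j^*, k^*$ are the most expensive inside and cheapest outside agents of $A^\dagger$). Since the true demand bundle is among the $12$ candidates in every scenario, the argmax over queried values is a valid demand bundle, giving the bound of $12$ value queries.
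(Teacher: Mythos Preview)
Your overall strategy---split by $S_0=S\cap\{\badaction,\goodaction\}$, write down a constant-size candidate list per category using only $p$ and the known structure of $\tilde f=f_1+f_2$, then query all candidates---is exactly the paper's approach. The gap is in your choice of candidates: the triple \{primary, swap, extend\} does \emph{not} always contain the demand bundle, because you omit the ``drop'' candidate $P\setminus\{j^*\}$.

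Here is a concrete failure. Take $A'=\{1,\dots,n/2\}$, set $p_i=0.7\,\varepsilon$ for $i\le n/2$, $p_i=1.1\,\varepsilon$ for $i>n/2$, $p_{\badaction}=0$, and $p_{\goodaction}$ very large. In category $\{\badaction\}$ your primary is $P=\{1,\dots,n/2\}=A'$, hence cursed. Writing utilities relative to the $\tilde f$-utility of $P\cup\{\badaction\}$, one gets: primary $=-\varepsilon/2$; swap $=p_{n/2}-p_{n/2+1}=-0.4\,\varepsilon$; extend $=-p_{n/2+1}=-1.1\,\varepsilon$; whereas drop $=\{1,\dots,n/2-1\}$ gives $-\varepsilon+p_{n/2}=-0.3\,\varepsilon$. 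Thus drop strictly beats all three of your candidates in that category, and with $p_{\badaction}=0$ and $p_{\goodaction}$ large it also beats the other three categories, so the true demand bundle is $\{1,\dots,n/2-1,\badaction\}$---not on your list of $12$. Your stated justification (``the choice is determined by whether $p_{k^*}-p_{j^*}<\varepsilon/2$'') overlooks precisely this case, which occurs whenever the primary is cursed and $p_{k^*}\ge\varepsilon$ while $p_{j^*}>\varepsilon/2$.

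Two related remarks. First, your ``extend'' candidate is in fact never useful in the cursed category: since extend exceeds the cap, $\tilde f(\text{extend})=\tilde f(\text{swap})$ while extend pays an extra $p_{j^*}\ge 0$, so swap weakly dominates extend. Hence you effectively have only two distinct useful candidates per category. Second, the paper's per-category triple is $\{1,\dots,\tau\}$, $\{1,\dots,\tau-1\}$, $\{1,\dots,\tau-2,\tau\}$, where $\tau=\min(k,n/2+1)$ and $k$ is the number of agents with $p_i<\varepsilon$; the middle set is exactly the ``drop'' you are missing. Replacing your ``extend'' by ``drop'' (i.e., $P\setminus\{j^*\}$) fixes the argument and recovers the paper's proof.
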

\begin{proof}
    Fix a subset $A' \subseteq [n]$ with $|A'| = n/2$.
    Let $p\in \reals_{\ge 0}^{n+1}$ be a price vector. Without loss of generality, assume that $p_1 \le \dots \le p_n$. Let $k$ be a maximal index such that $p_k < \eps$ or $0$ if no such index exists. Let $\tau = \min\{k,n/2+1\}$.
    
    We claim that one of the sets $\{\{1,\dots,\tau\}, \{1,\dots,\tau-1\}, \{1,\dots,\tau-2, \tau\}\}$ combined with one of $\{\emptyset, \{\goodaction\}, \{\badaction\}, \{\goodaction, \badaction\}\}$ is a demand bundle.
    Once the claim is proven, it follows that $12$ value queries suffice to answer a demand query.
    
    To prove the claim, let $S^\star\in \argmax_{S \subseteq T} \{f(S) - \sum_{j \in S} p_j \}$, be a set in the demand with respect to price vector $p$, and let $S_0 = S^\star\cap \{\goodaction,\badaction\}$. 
    The marginal utility 
    of $S' \subseteq [n]$ with respect to $S$ is:
    \begin{align*}
        u(S'\mid S_0) 
        &=
        f(S'\mid S_0) - \sum_{i\in S'}p_i \\
        &=\begin{cases}
        \varepsilon \cdot \min (|S'|, n/2 + 1)-\sum_{i\in S'} p_i & 
        \text{if } S_0 = \emptyset
        \text{ or }S_0 = \{\goodaction\} \\ 
        \varepsilon \cdot \min (|S'|, n/2)-\sum_{i\in S'} p_i & \text{if } S_0 = \{\badaction, \goodaction\} \\
        \varepsilon \cdot \min (|S'|, n/2)-(\eps/2) \cdot \indicator[S'=\specialAgents] - \sum_{i\in S'} p_i &  \text{if } S_0 = \{\badaction\} 
    \end{cases}
    \end{align*}
    We claim that one of $\{\{1,\dots,\tau\}, \{1,\dots,\tau-1\}, \{1,\dots,\tau-2, \tau\}\}$ maximizes $u(S' \mid S_0)$.    
    
    Indeed, if $S_0 = \emptyset$ or $S_0 =\{\goodaction\}$, clearly $S'=\{1,\dots,\tau\}$ maximizes $u(S'\mid S_0)$, and we are done. If $S_0 = \{\badaction, \goodaction\}$, clearly $S=\{1,\dots, \min(k, n/2)\}$ maximizes $u(S' \mid S_0)$, and by definition of $\tau$, we have $\min(k, n/2)\in\{\tau, \tau-1\}$, implying that either $\{1,\dots, \tau\}$ or $\{1,\dots, \tau-1\}$ maximizes $u(S' \mid S_0)$, as needed.
    Otherwise, $S_0 = \{\badaction\}$.
    If $\{1,\dots \tau-1\} = \specialAgents$, then since $|A'| = n/2$, we must have  $\tau={n}/{2}+1$, and so  either
    $\{1,\dots,\tau -2 , \tau\}$ or $\{1,\dots \tau-1\}$ maximizes $u(S' \mid S_0)$.
    If $\{1,\dots \tau\} = \specialAgents$, then since $|A'| = n/2$, we must have  $p_{\tau+1} \ge \eps$.
    We get that either $\{1,\dots, \tau-1\}$ or $\{1,\dots, \tau\}$ maximizes $u(S' \mid S_0)$.
    If $\{1,\dots \tau-1\} \ne \specialAgents$ and $\{1,\dots \tau\} \ne \specialAgents$, then, either $\{1,\dots, \tau-1\}$ or $\{1,\dots, \tau\}$ maximizes $u(S' \mid S_0)$.
\end{proof}

Given the lemma above, it is enough to show that no algorithm can obtain a good approximation using only value queries.

The following lemma shows that it is possible to incentivize the set of actions $A' \cup {\goodaction}$ without violating the budget constraints.

\begin{lemma} \label{lem:instance_equilibriums}
    For any instance $\mathcal{I}^{(A')}$ with $A' \subseteq [n]$ and $|A'| = n/2$, there exists a contract $\noindexcontract$ such that $(\noindexcontract, \specialAgents \cup \{\goodaction\})\in \setofcontracts$. 
\end{lemma}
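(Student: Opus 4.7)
The plan is to exhibit an explicit budget-feasible contract $\noindexcontract$ and to verify by direct computation that $\specialAgents \cup \{\goodaction\}$ is a Nash equilibrium under $\noindexcontract$. Specifically, I would set
\begin{align*}
\alpha_i = \eps^2 \text{ for } i \in \specialAgents, \qquad \alpha_i = 0 \text{ for } i \in [n] \setminus \specialAgents, \qquad \alpha_{n+1} = B - (n/2)\,\eps^2.
\end{align*}
By construction $\sum_{i \in \agents} \alpha_i = B$, so $\noindexcontract$ is budget-feasible, and $\alpha_{n+1} \ge 0$ provided $\eps$ is small enough, which the instance's upper bound on $\eps$ allows.

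The binary-action agents are easy. For $i \in \specialAgents$, the only non-trivial deviation from $\{i\}$ is to $\emptyset$; from the definitions of $f_1,f_2,f_3$, removing $i$ leaves $f_1$ and $f_3$ unchanged (the set still contains $\goodaction$ and is never equal to $\{\badaction\} \cup \specialAgents$), while $f_2$ decreases by exactly $\eps$ (since $|S \setminus \{\goodaction\}|$ drops from $n/2$ to $n/2-1$, both below the cap $n/2+1$). Hence $\alpha_i \cdot \eps - c_i = \eps^3 - \eps^3 = 0$, so taking $\{i\}$ is a (weak) best response. For $i \in [n] \setminus \specialAgents$, we have $\alpha_i = 0$, so any action incurs a strict loss of $\eps^3$ and $\emptyset$ is uniquely best.

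The main obstacle is verifying best response for agent $n+1$, who has four candidate subsets $\emptyset, \{\goodaction\}, \{\badaction\}, \{\goodaction, \badaction\}$. Fixing $S_{-(n+1)} = \specialAgents$, I would evaluate $f$ at the four resulting profiles directly from the formulas, noting that only $\specialAgents \cup \{\badaction\}$ triggers the $f_3$ correction:
\begin{align*}
f(\specialAgents) &= \eps \cdot n/2, \\
f(\specialAgents \cup \{\goodaction\}) &= 1/2 + \eps \cdot n/2, \\
f(\specialAgents \cup \{\badaction\}) &= \eps \cdot (n/2 + 3/2), \\
f(\specialAgents \cup \{\goodaction, \badaction\}) &= 1/2 + \eps \cdot (n/2 + 1).
\end{align*}
Substituting $c_\goodaction = \alpha_{n+1}/2$ (which holds by the choice of $\alpha_{n+1}$), $c_\badaction = (3/2)\eps B$, and $\alpha_{n+1} = B - (n/2)\eps^2$, a short calculation shows the utility of $\{\goodaction\}$ minus that of $\emptyset$, $\{\badaction\}$, and $\{\goodaction,\badaction\}$ equals respectively $0$, $(3n/4)\eps^3$, and $\eps(B/2 + (n/2)\eps^2)$, all non-negative. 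Thus $\{\goodaction\}$ is a best response for agent $n+1$, establishing that $(\noindexcontract, \specialAgents \cup \{\goodaction\}) \in \setofcontracts$ and completing the proof.
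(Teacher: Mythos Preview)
Your proof is correct and follows essentially the same approach as the paper: you propose the identical contract and verify that $\specialAgents \cup \{\goodaction\}$ is a Nash equilibrium by checking each agent's best response. The only cosmetic difference is in the treatment of agent $n+1$: the paper argues via marginals (showing $\alpha_{n+1} \cdot f(\badaction \mid \specialAgents) < c_\badaction$ and invoking submodularity to rule out $\badaction$, then $\alpha_{n+1} \cdot f(\goodaction \mid \specialAgents) = c_\goodaction$), whereas you enumerate all four action subsets and compare utilities directly; both lead to the same conclusion.
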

\begin{proof}
Consider the contract 
    $$
    \icontract = \begin{cases}
        \eps^2 & \text{if } i \in \specialAgents \\
        B-(n/2) \cdot \varepsilon^2 & \text{if } i = n+1 \\
        0 & \text{otherwise},
    \end{cases}
    $$
    and let $S=\specialAgents\cup \{\goodaction\}$. 
    Clearly $f(S) \ge f(\goodaction) \ge 1/2$, we now show that $S\in \nash(\noindexcontract)$.
    Let $j\in [n] \setminus \specialAgents$. Since $\someindexcontract_j = 0$, $S_j = \emptyset$ is a best response for agent $j$.
    Let $i\in \specialAgents$. Note that
    \[
    \icontract \cdot f(i\mid S_{-i}) = \eps^2 \cdot f_2(i\mid S_{-i}) = \eps^2\cdot \varepsilon=c_i,
    \]
    so $S_i = \{i\}$ is a best response for agent $i$ since his only choices are $\{i\}$ and $\emptyset$.
    We now turn to agent $n+1$. Note that 
    \[
    \someindexcontract_{n+1} \cdot f(\badaction \mid S_{-(n+1)}) = \someindexcontract_{n+1}\cdot f(\badaction\mid \specialAgents) = \left(B-(n/2) \cdot \eps^2\right)\cdot (3/2) \cdot (\eps) < c_{\badaction},
    \]
    and so since $f$ is submodular, it holds that agent $(n+1)$'s best response does not contain $\badaction$. It therefore remains to show that agent $(n+1)$'s utility from $\goodaction$ is non-negative. Indeed,
    \[
    \someindexcontract_{n+1} \cdot f(\goodaction \mid S_{-(n+1)}) = \someindexcontract_{n+1}\cdot f(\goodaction\mid \specialAgents) = \left(B- ({n}/{2}) \cdot \eps^2\right) \cdot ({1}/{2}) = c_{\goodaction},
    \]
    as needed.
\end{proof}

Next, we show that incentivizing $A' \cup {\goodaction}$ is necessary to get a non-trivial approximation to $f$.

\begin{lemma} \label{lem:instance_equilibriums_upper_bound}
    For any instance $\mathcal{I}^{(A')}$ with $A' \subseteq [n]$ and $|A'| = n/2$, for any $(\noindexcontract, S) \in \setofcontracts$ with $S\ne \{\goodaction\}\cup \specialAgents$, it holds that $f(S)\le(n/2+2) \cdot \eps$.
\end{lemma}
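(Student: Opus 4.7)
The plan is to do a case analysis on whether $\goodaction \in S$. If $\goodaction \notin S$, the bound is immediate from the structure of $f$: since $f = f_1 + f_2 - f_3$ and $f_3 \ge 0$, we have $f(S) \le f_1(S) + f_2(S) \le \eps \cdot \indicator[\badaction \in S] + \eps(n/2+1) \le \eps(n/2+2)$. The remainder of the proof handles the case $\goodaction \in S$ by showing that any such $(\noindexcontract, S) \in \setofcontracts$ must in fact equal $\{\goodaction\} \cup \specialAgents$, so the hypothesis $S \neq \{\goodaction\} \cup \specialAgents$ yields a contradiction and the conclusion holds vacuously.

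Assume $\goodaction \in S$. Agent $n+1$'s incentive to include $\goodaction$ gives $\alpha_{n+1} \cdot f(\goodaction \mid S_{-(n+1)}) \ge c_\goodaction$, and a direct marginal computation shows $f(\goodaction \mid \cdot) \le 1/2$, so $\alpha_{n+1} \ge B - (n/2)\eps^2$; budget feasibility then leaves at most $(n/2)\eps^2$ for all other agents combined. I would next rule out $\badaction \in S$: keeping $\badaction$ would require $\alpha_{n+1} \cdot f(\badaction \mid S \setminus \{\badaction\}) \ge c_\badaction$, and adding $\badaction$ to a set already containing $\goodaction$ has marginal at most $\eps$ (the $f_1$ marginal vanishes since $\max(1/2,\eps)=1/2$, $f_2$ contributes at most $\eps$, and $f_3$ is unchanged), which forces $\alpha_{n+1} \ge (3/2) B$, violating the budget. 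Hence $\badaction \notin S$ and $S = \{\goodaction\} \cup B'$ with $B' = S \cap [n]$. For each $i \in B'$ the incentive $\alpha_i \cdot f(i \mid S_{-i}) \ge \eps^3$ together with the marginal bound $f(i \mid \cdot) \le \eps$ gives $\alpha_i \ge \eps^2$; summing and comparing to the residual $(n/2)\eps^2$ budget yields $|B'| \le n/2$ (the borderline case $|B'| = n/2+1$ is ruled out by combining the lower bound $\alpha_{n+1} \ge B - (n/2)\eps^2$ with the budget).

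The main obstacle is then to show $B' = \specialAgents$. I would exploit agent $n+1$'s preference $\{\goodaction\} \succeq \{\badaction\}$: $\alpha_{n+1} \cdot [f(\goodaction \cup B') - f(\badaction \cup B')] \ge c_\goodaction - c_\badaction$. Expanding $f = f_1 + f_2 - f_3$ term by term, the marginal gap simplifies to $\tfrac{1}{2} - 2\eps + \tfrac{\eps}{2} \cdot \indicator[B' = \specialAgents]$, where the last correction appears precisely because $f_3(\{\badaction\} \cup \specialAgents) = \eps/2$ while $f_3$ vanishes on every other $\{\badaction\} \cup B'$. If $B' \neq \specialAgents$, substituting the budget bound $\alpha_{n+1} \le B$ and simplifying reduces the inequality to $B \le (n/2)\eps$, which contradicts the instance's upper bound on $\eps$. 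Therefore $B' = \specialAgents$, giving $S = \{\goodaction\} \cup \specialAgents$ and contradicting the hypothesis. The delicate aspect of the argument is the careful bookkeeping of the $f_1, f_2, f_3$ marginal contributions, since the $(\eps/2)$-correction that distinguishes $\specialAgents$ from every other size-$n/2$ subset is the single feature that makes $\{\goodaction\} \cup \specialAgents$ the unique incentivizable equilibrium containing $\goodaction$.
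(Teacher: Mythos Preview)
Your proof is correct and follows essentially the same route as the paper: both arguments (i) dispose of the case $\goodaction\notin S$ directly from $f\le f_1+f_2$, (ii) rule out $\{\goodaction,\badaction\}\subseteq S$ via the marginal-of-$\badaction$ bound, (iii) use the marginal of $\goodaction$ to force $\alpha_{n+1}\ge B-(n/2)\eps^2$, (iv) use the residual budget to cap $|S\cap[n]|\le n/2$, and (v) obtain the contradiction by comparing agent $n{+}1$'s utility from $\{\goodaction\}$ versus $\{\badaction\}$ when $S_{-(n+1)}\neq A'$. The only cosmetic difference is that you package step (v) as a single gap computation $f(\{\goodaction\}\cup B')-f(\{\badaction\}\cup B')=\tfrac12-2\eps+\tfrac{\eps}{2}\indicator[B'=A']$ and then invoke $\alpha_{n+1}\le B$, whereas the paper computes the two utilities separately using $f(\badaction\mid S_{-(n+1)})=2\eps$ and the lower bound $\alpha_{n+1}\ge B-(n/2)\eps^2$; the resulting arithmetic is equivalent.
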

\begin{proof}
    Fix a subset $A' \subseteq [n]$ with $|A'| = n/2$.
    Let $(\noindexcontract,S)$ be such a budget-feasible contract and equilibrium. 
    We first observe that it cannot be that $\{\goodaction, \badaction\}\subseteq S$. 
    This is because if $\{\goodaction, \badaction\}\subseteq S$, then $S$ is not budget-feasible due to submodularity of $f$:
    \begin{align*}
        \someindexcontract_{n+1} \geq {c_{\badaction}}/{f_S(\badaction)} \geq  {c_{\badaction}}/{f(\badaction \mid  \goodaction)} \geq {c_{\badaction}}/{\varepsilon} = (3/2) \cdot B > B.
    \end{align*}
    Note that for any $S \subseteq [n]\cup \{\badaction\}$, it holds that $f(S) \le f_1(S)+f_2(S) \le(n/2+2) \cdot \eps$, so proving  $\goodaction\notin S$ is sufficient.
    Assume towards contradiction that $\goodaction\in S$. Since $\{\goodaction, \badaction\} \subsetneq S$, we must have $S_{n+1} = \{\goodaction\}$, and
    \begin{align*}
        \someindexcontract_{n+1} \ge {c_\goodaction}/{f_S(\goodaction)} \ge {c_{\goodaction}} / {f(\{\goodaction\})} 
        =
        B-(n/2) \cdot \eps^2.
    \end{align*} 
    Observe that $|S \cap [n]| \le n/2$ since incentivizing any agent $i\in [n]$ to exert effort takes at least ${c_i}/{f(\{i\})}=\varepsilon^2$, and therefore we can only incentivize ${n}/{2}$ such agents, as the remaining budget is $B-\someindexcontract_{n+1} \le ({n}/{2}) \cdot \eps^2$.
    Since $|S\cap [n]|\le n/2$, $S_{n+1} = \{\goodaction\}$, and $S \ne \{\goodaction\}\cup \specialAgents$, we have $f(\badaction\mid S_{-(n+1)}) =2\eps$.
    Then, we have
    \[
    \someindexcontract_{n+1} \cdot f(\badaction\mid S_{-(n+1)}) - c_{\badaction} \ge \left(B-(n/2) \cdot \eps^2\right) \cdot (2\eps) - (3/2) \cdot B \cdot \eps  \ge  (B/2) \cdot \eps - n \cdot \varepsilon^3 > (B/2) \cdot \varepsilon - n \cdot \varepsilon^2.
    \]
    Moreover, by budget feasibility we have $\someindexcontract_{n+1} \leq B$, and by the definition of $f$ we have $f(\goodaction \mid S_{-(n+1)}) \leq 1/2$, which gives:
    \[
    \someindexcontract_{n+1} \cdot f(\goodaction\mid S_{-(n+1)}) - c_{\goodaction} \le B \cdot (1/2) - (1/2) \cdot \left(B-(n/2) \cdot \eps^2\right) = (n/4) \cdot \eps^2 < n \cdot \varepsilon^2.
    \]
    By our choice of $\varepsilon$, we have $\varepsilon < 4n/B$, and agent $n+1$ would therefore benefit from deviating to $S_{n+1} = \{\badaction\}$, which gives a contradiction.
\end{proof}

We are now ready to prove \Cref{thm:multi-multi-inapprox}.
\begin{proof}[Proof of \Cref{thm:multi-multi-inapprox}]
    Fix a \goodobj\ objective $\varphi$ and a budget $B \in (0,1)$.
    By Yao's principle, it suffices to prove the statement for a deterministic algorithm against a randomized input.
    We consider a randomized instance $\mathcal{I}^{(A')}$, where $A' \subseteq [n]$ is chosen uniformly at random from all subsets of size $n/2$.
    By \Cref{lem:monotone_and_submodular}, this defines a distribution over monotone and submodular instances.
    Now, consider a polynomial-time deterministic algorithm with access to value and demand oracles on this randomized input.
    
    By \Cref{lem:instance_equilibriums} and the definition of \goodobj\ objectives, the optimal value of $\varphi$ is at least a $(1-B)$-fraction of the profit from the budget-feasible contract $(\noindexcontract, \specialAgents \cup {\goodaction}) \in \setofcontracts$, meaning that:
    \begin{align*}
       \MAX\varphi(B) \geq \varphi(\noindexcontract, \specialAgents \cup {\goodaction}) \geq u_P(\noindexcontract, \specialAgents \cup {\goodaction}) \geq (1-B) \cdot f(\specialAgents \cup {\goodaction}) \geq (1-B) \cdot (1/2).
    \end{align*}
Moreover, by \Cref{lem:instance_equilibriums_upper_bound}, the value of $\varphi$ for a contract and equilibrium $(\noindexcontract, S) \in \setofcontracts$ with $S\ne \{\goodaction\}\cup \specialAgents$ is:
\begin{align*}
    \varphi(\noindexcontract, S) \leq f(S) \leq (n/2+2) \cdot \varepsilon.
\end{align*}
Thus, unless the algorithm outputs the equilibrium $\specialAgents \cup {\goodaction}$, it achieves at best an approximation of $(1-B) \cdot (1/2) / ((n/2 + 2) \cdot \varepsilon) > K(n)$, since by our choice of $\varepsilon$ we have $\varepsilon < (1-B) / (K(n) \cdot (n+4))$.

By \Cref{lem:demand_query_by_values}, a polynomial-time algorithm with access to value and demand queries can be simulated using polynomially many value queries.
Thus, it remains to show that any algorithm making only polynomially many value queries cannot output ${\goodaction} \cup \specialAgents$ with better than exponentially small probability.

We assume, without loss of generality, that the algorithm queries the value of the set $S_{-(n+1)} \cup \{\badaction\}$ (where $S$ is the output equilibrium). This is without loss because any algorithm can be modified to perform one additional value query before terminating, without affecting its polynomial query complexity.
We will upper bound the probability that the algorithm queries $A' \cup \{\badaction\}$, thereby establishing the same upper bound on the probability that it achieves a $K(n)$-approximation.

Let $S_1,\dots,S_\ell$ be the sequence of value queries that the (deterministic) algorithm makes on the instance 
$\mathcal{J} = \langle A, \bigsqcup T_i, f_1 + f_2, c \rangle$.
Unless the algorithm queries $\specialAgents\cup \{\badaction\}$, this instance $\mathcal{J}$ is indistinguishable from $\mathcal{I}^{(A')}$. Thus, the probability that the algorithm queries $\specialAgents\cup \{\badaction\}$ is upper bounded by the probability that $\specialAgents\cup \{\badaction\} \in \{S_1,\dots, S_\ell\}$. 
Therefore, by the union bound, the probability of querying $\specialAgents \cup \{\badaction\}$ is at most ${\ell} / \binom{n}{n/2}$.
Since $\ell$ is polynomial in $n$, this probability is exponentially small in $n$, as needed.
\end{proof}

\section{Constant-Factor Approximation for Gross Substitutes Instances} \label{sec:multi-multi-objective-equivalence}
In this section we establish an up-to-a-constant-factor equivalence between any two \goodobj\ objectives, generalizing the result of \cite{feldman2025budget} to the multi-agent combinatorial-actions setting under gross substitutes $f$. 
This equivalence is cast in the following theorem.
\begin{theorem} [Equivalence of All \goodobj\ Objectives and Budgets] \label{thm:multi_multi_objective_equivalence}
Fix any two \goodobj\ objectives $\varphi, \varphi'$ and any two budget $B,B'\in (0,1]$.
For gross substitutes $f$, there exists a poly-time reduction  from $\MAX\varphi(B)$ to $\MAX\varphi'(B')$ that loses only a constant factor in the approximation. This reduction requires value oracle access to $f$.
\end{theorem}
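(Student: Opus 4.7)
My plan is to parallel the argument of \cite{feldman2025budget} for the binary-actions case. Their proof hinges on a \emph{best-response monotonicity} property, which the introduction notes (and the inapproximability of \Cref{sec:inapprox_submodular} makes concrete) fails for general submodular $f$ in the combinatorial-actions setting, but does hold under the gross-substitutes assumption. Once this property is established, I would feed it, together with the doubling lemma (\Cref{lem:doubling}) and the sandwich/decomposability properties of \goodobj\ objectives, into a downsizing-based reduction from $\MAX\varphi(B)$ to $\MAX\varphi'(B')$.

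\textbf{Step 1 (best-response monotonicity via a demand reformulation).} The technical lemma to prove is: if $f$ is gross-substitutes, $S \in \nash(\noindexcontract)$, and $S^{\mathrm{new}}_{-i} \subseteq S_{-i}$, then there exists a best response of agent $i$ to $S^{\mathrm{new}}_{-i}$ (under the same contract $\icontract$) containing $S_i$. I would recast agent $i$'s best-response problem as a demand query to $f$: define $p \in \reals^{|T|}$ by $p_j = c_j/\icontract$ on $T_i$, $p_j = -M$ on $S_{-i}$, and $p_j = +M$ elsewhere. For $M$ large enough, demand bundles of $f$ at $p$ are exactly the sets $S_{-i} \sqcup B_i$ with $B_i$ a best response of agent $i$ to $S_{-i}$; hence $S \in \argmax_{X \subseteq T}\{f(X) - \sum_{j \in X} p_j\}$. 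Defining $p'$ analogously for $S^{\mathrm{new}}_{-i}$, the two vectors differ only on the removed coordinates $R = S_{-i}\setminus S^{\mathrm{new}}_{-i}$ (from $-M$ to $+M$), so $p \le p'$ coordinate-wise with $p = p'$ on $T_i$. The gross-substitutes axiom then produces a demand bundle at $p'$ containing $S \cap \{j : p_j = p'_j\} \supseteq S_i$; the structure of $p'$ forces this bundle to have the form $S^{\mathrm{new}}_{-i} \sqcup X_i$ with $X_i \supseteq S_i$, giving the desired best response.

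\textbf{Step 2 (reduction).} Given Step 1, I would carry over the downsizing-plus-doubling framework of \cite{feldman2025budget}. Starting from the optimum $(\noindexcontract^\star, S^\star) \in \setofcontracts$ of $\MAX\varphi(B)$, the decomposability and sandwich properties of $\varphi$ would be used to extract a subset $G^\star \subseteq A$ with $\sum_{i \in G^\star} \icontract^\star \le B'/2$ such that $\varphi(\noindexsubcontractstar{G^\star}, S^\star_{G^\star}) = \Omega(\MAX\varphi(B))$. Best-response monotonicity, combined with the submodularity implied by gross-substitutes, ensures that $S^\star_{G^\star}$ is subset-stable under $\noindexsubcontractstar{G^\star}$, so the doubling lemma yields a Nash equilibrium $\tilde S$ under a contract $\tilde{\noindexcontract} \in \mathcal{C}(B')$ with $f(\tilde S) \ge f(S^\star_{G^\star})/2$. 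The sandwich property gives $\varphi'(\tilde{\noindexcontract}, \tilde S) = \Omega(\MAX\varphi(B))$, so $\MAX\varphi'(B') = \Omega(\MAX\varphi(B))$; the converse direction is symmetric. The reduction itself applies the same downsizing-doubling transformation to the output of the given approximation algorithm for $\MAX\varphi'(B')$, using the sandwich once more to translate a $\varphi'$-guarantee back into a $\varphi$-guarantee.

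\textbf{Main obstacle.} I expect the main obstacle to be Step 1 --- engineering the demand reformulation so that reducing $S_{-i}$ to $S^{\mathrm{new}}_{-i}$ corresponds precisely to a coordinate-wise price increase leaving agent $i$'s prices unchanged, and verifying that the resulting gross-substitutes demand bundle is in fact a best response rather than some irrelevant set. A secondary issue is the choice of $G^\star$ in Step 2 when $B \gg B'$ and a single agent's payment already exceeds $B'/2$; this likely requires invoking a single-agent subroutine (e.g., \Cref{thm:single-agent-fptas}) in parallel with the downsizing for the remaining ``light'' agents, mirroring the light/heavy decomposition used in \cite{feldman2025budget}.
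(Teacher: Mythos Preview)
Your Step~1 is correct and essentially identical to the paper's argument (\Cref{lem:bestresponsemonotonicity}): the paper uses prices $-1$ and $2$ rather than $\pm M$ (valid since $f\le 1$), but the demand reformulation and the application of the gross-substitutes axiom are the same. Contrary to your expectation, this is not where the real difficulty lies.

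The genuine gap is in Step~2. You assert that $\MAX\varphi'(B') = \Omega(\MAX\varphi(B))$ holds as a structural inequality \emph{on the same instance}, and then plan to run the $\varphi'$-oracle directly on that instance. This inequality is simply false when $B' < B$. Take a single agent with a single action $a$ of cost $c_a = B\cdot f(\{a\})$: under budget $B$ the contract $\alpha_1 = B$ incentivizes $a$, so $\MAX\varphi(B) \ge u_P = (1-B)f(\{a\}) > 0$; under any budget $B' < B$ the action cannot be incentivized at all, so $\MAX\varphi'(B') = 0$. More generally, your ``extract $G^\star$ with $\sum_{i\in G^\star}\alpha_i^\star \le B'/2$'' step, done by direct downsizing, must partition the agents into $\Theta(B/B')$ groups and hence loses a $\Theta(B/B')$ factor in the reward, not an absolute constant. (A smaller issue: decomposability only yields $\varphi(\noindexcontract,S)\le f(S_{-i}) + \varphi(\noindexsubcontract{i},S_i)$, i.e.\ an $f$-term plus a \emph{single-agent} $\varphi$-term, not a bound on $\varphi$ restricted to an arbitrary subset $G^\star$.)

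The paper's remedy is to change the instance rather than downsize across budgets. It routes through an auxiliary problem $\multimlr(B)$ (maximize $f(S)$ over $(\noindexcontract,S)\in\setofcontracts$ with every $\alpha_i\le 3B/4$): first $\MAX\varphi(B)$ is reduced to $\multimlr(B)$ via the decomposition \Cref{lemma:goodobj_upper_bound_multi_mutli} together with the downsizing lemma and the exact single-agent solver of \Cref{lem:gs_single_agent} (this is where your heavy/light split enters, and where best-response monotonicity is actually used); then $\multimlr(B)$ is reduced to $\MAX\varphi'(B')$ by \emph{scaling all costs} by $\Theta(B'/B)$ (\Cref{lem:reduction_from_mrl_multi_multi}). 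Cost scaling bijects budget-$B$ equilibria in the original instance with budget-$B'$ equilibria in the scaled instance while leaving $f$ unchanged, which is exactly what allows the reduction to move between arbitrary budgets with only an absolute-constant loss. Your plan never invokes this scaling, and without it Step~2 cannot deliver the claimed constant.
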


The following corollary follows directly from combining the above equivalence with the poly-time algorithm of \cite{multimulti} for maximizing profit with budget $B=1$ for submodular (and hence also gross substitutes) $f$. 
\begin{corollary}[Constant-Factor Approximations Under Budget Constraints]\label{cor:gsapprox}
    When $f$ is gross substitutes, for any \goodobj\ objective $\varphi$ and any budget $B \in [0,1]$, there exists a polynomial-time algorithm that achieves $O(1)$-approximation  to $\MAX\varphi(B)$ using value queries.
\end{corollary}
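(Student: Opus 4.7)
The plan is to derive \Cref{cor:gsapprox} as a direct composition of three ingredients: \Cref{thm:multi_multi_objective_equivalence} (the up-to-constant equivalence between any two \goodobj\ objectives under any two positive budgets for gross substitutes $f$), the constant-factor approximation of \cite{multimulti} for $\maxrevenue(1)$ under submodular $f$, and the classical fact that a demand query on a gross substitutes function can be answered using poly-many value queries in poly-time \cite{PaesLeme17}.

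For $B \in (0,1]$, I would invoke \Cref{thm:multi_multi_objective_equivalence} with target objective $\varphi' = u_P$ (profit, which is a \goodobj\ objective by \Cref{obs:common_objs}) and target budget $B' = 1$. This provides a poly-time, value-query-only reduction from $\MAX\varphi(B)$ to $\maxrevenue(1)$ that loses only a constant factor. To solve the target instance I would then run the $O(1)$-approximation of \cite{multimulti} for $\maxrevenue(1)$, which applies because gross substitutes is contained in submodular. The one subtle point is that the algorithm of \cite{multimulti} is phrased in terms of demand queries, while the corollary promises value-query access only; this is resolved by plugging the poly-time value-query simulation of demand queries for GS functions \cite{PaesLeme17} into the subroutine. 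Composing the three steps yields a poly-time $O(1)$-approximation for $\MAX\varphi(B)$ using only value queries.

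The boundary case $B = 0$ is handled directly: every budget-feasible contract equals $\mathbf{0}$, each agent's utility reduces to $-c(S_i)$, and the Nash equilibria are exactly the profiles in which each agent plays a zero-cost action set. By the sandwich property of \goodobj\ objectives, $\varphi(\mathbf{0},S) = u_P(\mathbf{0},S) = f(S)$, so by monotonicity of $f$ the optimum is attained by the profile $S = \{j \in T : c_j = 0\}$, which can be identified with $O(n)$ value queries. There is no genuine obstacle in any of these steps; the corollary is essentially an assembly of existing results, and the only care needed is to verify that both the reduction and the inner subroutine operate entirely through value queries when $f$ is gross substitutes.
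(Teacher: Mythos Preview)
Your proposal is correct and matches the paper's approach: the paper states that the corollary ``follows directly from combining the above equivalence with the poly-time algorithm of \cite{multimulti} for maximizing profit with budget $B=1$,'' which is exactly your reduction via \Cref{thm:multi_multi_objective_equivalence} to $\maxrevenue(1)$ together with the value-query simulation of demand queries for GS functions. You are in fact a bit more careful than the paper, since you explicitly handle the boundary case $B=0$ (which falls outside the hypothesis $B\in(0,1]$ of \Cref{thm:multi_multi_objective_equivalence}) and spell out why value-query access suffices.
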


To establish~\Cref{thm:multi_multi_objective_equivalence}, we follow a scheme similar to that of~\cite{feldman2025budget}, which proved an analogous result in the binary-action setting. 
In~\Cref{subsec:BR_monotone}, we identify a key property, \emph{best-response monotonicity}, that enables this equivalence.
This property holds both in the binary-action case for submodular $f$ and in the combinatorial-actions case for gross substitutes $f$ (\Cref{lem:bestresponsemonotonicity}). 
Crucially, for combinatorial actions with submodular $f$, best-response monotonicity fails to hold, an observation we exploit in the negative result of~\Cref{sec:inapprox_submodular}.
We use best-response monotonicity to prove a downsizing lemma (\Cref{lem:multi_payment_scaling}) for the combinatorial-actions setting, analogous to the argument in~\cite{feldman2025budget}.

In \Cref{subsc:best_equiv} we define an auxiliary problem, $\multimlr(B)$. 
We show that for any \goodobj\ objective $\varphi$ and any budget $B$, a constant-factor approximation to $\MAX\varphi(B)$ can be obtained either from the solution to $\multimlr(B)$ or from a contract that incentivizes only a single agent to take some actions.

\subsection{Best-Response Monotonicity for Gross Substitutes Instances}\label{subsec:BR_monotone}

In this section we present the main property of gross substitutes that allows us to apply our techniques. 
Roughly speaking, we show that when $f$ is gross substitutes, incentivizing an agent to take a given subset of actions is always cheapest for the principal when all other agents do nothing. This does not hold for submodular $f$, as exemplified by our construction in \Cref{sec:inapprox_submodular}.

\begin{lemma}[Best-Response Monotonicity]\label{lem:bestresponsemonotonicity}
Consider any instance with a gross substitutes $f$.
Fix a contract $\noindexcontract$, an equilibrium $S \in \nash(\noindexcontract)$, and an agent $i \in \agents$.
Take any subset of actions $S'_{-i} \subseteq S_{-i}$.
Then, there exists $S'_i \subseteq T_i$ such that $S_i \subseteq S'_i$ and $S'_i$ is agent $i$'s best response to $S'_{-i}$, i.e., for every alternative set of actions $\tilde S_i \subseteq T_i$, it holds that $\icontract \cdot  f(S'_i \sqcup S'_{-i}) - c(S'_i) \ge \icontract \cdot  f(\tilde{S}_i \sqcup S'_{-i}) - c(\tilde{S}_i)$.
\end{lemma}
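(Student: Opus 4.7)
The plan is to reduce best-response monotonicity to the defining property of gross substitutes, by encoding agent $i$'s best-response problem (at two different profiles of the other agents' actions) as a demand query on the scaled reward $g := \icontract \cdot f$ (which is gross substitutes, since positive scaling preserves the class) at two carefully chosen price vectors $p \le q$ over the full ground set $T$. The role of $p$ is to ``pin'' the non-$T_i$ coordinates to exactly $S_{-i}$; the role of $q$ is to do the same for $S'_{-i}$. Transitioning from $p$ to $q$ then amounts to removing the actions in $S_{-i} \setminus S'_{-i}$ from the context, which is precisely the scenario in the lemma.

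Concretely, I would fix a threshold $M > \icontract$ (so $M$ strictly exceeds the bound on every marginal of $g$) and define
\[
p_j = q_j = c_j \quad \text{for } j \in T_i,
\]
\[
p_j = q_j = -M \quad \text{for } j \in S'_{-i},
\]
\[
p_j = -M, \quad q_j = +M \quad \text{for } j \in S_{-i} \setminus S'_{-i},
\]
\[
p_j = q_j = +M \quad \text{for } j \in T \setminus (T_i \cup S_{-i}).
\]
By construction, $p \le q$ coordinate-wise. A short bounded-marginal argument using $g \in [0,\icontract]$ shows that every $g$-demand bundle at $p$ must contain all actions priced at $-M$ and exclude all actions priced at $+M$; hence it has the form $S_{-i} \cup Z$ for some $Z \subseteq T_i$ maximizing $\icontract \cdot f(Z \sqcup S_{-i}) - c(Z)$, i.e., $Z$ is a best response of agent $i$ to $S_{-i}$. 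Symmetrically, every $g$-demand bundle at $q$ has the form $S'_{-i} \cup Z'$ with $Z'$ a best response of agent $i$ to $S'_{-i}$. Since $S \in \nash(\noindexcontract)$, the piece $S_i$ is itself a best response to $S_{-i}$, so $S^\star := S_{-i} \cup S_i$ is a valid demand bundle at $p$.

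I then apply the gross substitutes axiom directly: because $p \le q$ and $S^\star$ is a demand at $p$, there exists a demand $S'$ at $q$ with $S^\star \cap \{j : p_j = q_j\} \subseteq S'$. By inspection, $\{j : p_j = q_j\} = T_i \cup S'_{-i} \cup (T \setminus (T_i \cup S_{-i}))$, and intersecting with $S^\star = S_{-i} \cup S_i$ gives exactly $S_i \cup S'_{-i}$. Writing the demand bundle $S' = S'_{-i} \cup S'_i$ with $S'_i \subseteq T_i$ a best response of $i$ to $S'_{-i}$, the containment $S_i \cup S'_{-i} \subseteq S'$ collapses to $S_i \subseteq S'_i$, the conclusion of the lemma. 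The only step that really needs verification is the ``forcing'' argument for the extreme prices, which is routine: with $M > \icontract \ge g(S \cup \{j\}) - g(S)$ for every $S$ and $j$, actions priced at $-M$ strictly improve the demand objective when added and those priced at $+M$ strictly worsen it, so every demand bundle agrees with $S_{-i}$ (respectively $S'_{-i}$) on the non-$T_i$ coordinates. The degenerate case $\icontract = 0$ is even simpler: $S_i$ is then any cost-minimizer in $T_i$, and any cost-minimizer $S'_i \supseteq S_i$ works.
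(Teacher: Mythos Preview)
Your proof is correct and follows essentially the same approach as the paper: encode agent $i$'s best-response problem as a demand query with extreme prices that pin the non-$T_i$ coordinates, then invoke the gross-substitutes axiom on $p \le q$. The only differences are cosmetic---you scale $f$ to $g = \icontract \cdot f$ and use prices $c_j$ on $T_i$, whereas the paper keeps $f$ and scales the costs to $c_j/\icontract$; you use $\pm M$ with $M > \icontract$ as the extreme prices while the paper uses $-1$ and $2$---and you additionally handle the edge case $\icontract = 0$ explicitly, which the paper's division-by-$\icontract$ formulation glosses over.
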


\begin{proof}
Fix agent $i \in \agents$, contract $\noindexcontract$, actions $S \in \nash(\noindexcontract)$, and $S'_{-i} \subseteq S_{-i}$ satisfying the above conditions.
Consider two price vectors $\{p_a\}_{a\in \actions}$  and $\{q_a\}_{a\in \actions}$ defined as:
\[
p_a = \begin{cases}
    -1 & \text{if } a\in S_{-i} \\
    2 & \text{if } a \in \actions_{-i} \setminus S_{-i} \\
    (1/\icontract) \cdot c_a & \text{if } a \in \actions_i,
\end{cases}
\qquad
\qquad
    q_a = \begin{cases}
    -1 & \text{if } a \in S'_{-i} \\
    2 & \text{if } a \in \actions_{-i} \setminus S'_{-i} \\
    (1/\icontract) \cdot c_a & \text{if } a \in \actions_i.
\end{cases}
\]

\newpage

We observe that the following equivalences hold for any action profile $S' = S'_{i} \sqcup S'_{-i}$:
\begin{align*}
     & \text{$S'_i$ is agent $i$'s best response to $S'_{-i}$ given $\alpha_i$} \\
    \Longleftrightarrow \quad & S'_i \in \argmax_{U'_i \subseteq \actions_i} \left\{f(U'_i \cup  S'_{-i}) - \sum_{a\in U'_i} (1/\icontract) \cdot c_a\right\} \\
    \Longleftrightarrow \quad & S'=S'_{i} \sqcup S'_{-i} \in \argmax_{V\subseteq \actions} \left\{f(V) - \sum_{a\in V} q_a\right\}.
\end{align*}

The first equivalence is observed by \cite{multimulti}. We next show the second equivalence.
Let $Q \in \argmax_{V \subseteq T} \{f(V) - \sum_{a\in V} q_a\}$. 
We argue that $Q_{-i} = S'_{-i}$; combined with the definition of $q$, this implies the equivalence. 
On the one hand, since $f(U)\le 1$ for all $U\subseteq T$ and $q_a > 1$ for all $a\in T_{-i}\setminus S'_{-i}$, we have $Q_{-i}\subseteq S'_{-i}$.
On the other hand, since $f$ is weakly monotone and $q_a = -1$ for all $a\in S'_{-i}$, it must be that $S'_{-i} \subseteq Q_{-i}$. Thus, $Q_{-i} = S'_{-i}$. 

A similar equivalence also holds with respect to $S$ and $p$. Namely, $U_i \subseteq T_i$ is agent $i$'s best response to $S_{-i}$ given $\alpha_i$ if and only if $U_i\cup S_{-i} \in \argmax_{V\subseteq T} \{f(V)-\sum_{a\in V} p_a\}$. 
Thus, since $S$ is an equilibrium, it is a demand bundle with respect to $f$ and $p$.
Note that $p_a \leq q_a$ for all $a \in T$ by the assumption that $S_{-i}' \subseteq S_{-i}$.
Thus, by the gross substitutes property of $f$, there exists a set $S'_i$ such that $S_i \subseteq S'_i \subseteq T_i$ and 
$S_i'\cup S_{-i}' \in \argmax_{V\subseteq \actions} \{ f(V) - \sum_{a\in V} q_a \}$, as needed.
\end{proof}

\Cref{lem:bestresponsemonotonicity} immediately yields the following important corollary. Recall that $\noindexsubcontract{i}$ is the contract which offers $\icontract$ to agent $i$ and zero to all other agents.

\begin{corollary} 
\label{lem:individual_incentivization}
    Consider an instance with a gross substitutes $f$.
    Fix a contract $\noindexcontract$, an equilibrium $S \in \nash(\noindexcontract)$, and any agent $i \in \agents$.
    Then there exists an equilibrium $S' \in \nash(\noindexsubcontract{i})$ such that $S_i \subseteq S_i'$ and $S'_j = \emptyset$ for any $j \ne i$.
\end{corollary}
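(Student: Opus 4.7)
The plan is to deduce the corollary directly from the best-response monotonicity lemma (\Cref{lem:bestresponsemonotonicity}), with the key observation that under the contract $\noindexsubcontract{i}$ every agent $j \ne i$ is paid nothing and hence finds $\emptyset$ trivially optimal. So the only nontrivial step is to produce a best response of agent $i$ to the empty action profile of the others, and this is exactly what \Cref{lem:bestresponsemonotonicity} supplies.

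Concretely, I would proceed as follows. First, I would invoke \Cref{lem:bestresponsemonotonicity} with the given contract $\noindexcontract$, the equilibrium $S \in \nash(\noindexcontract)$, the agent $i$, and the subset $S'_{-i} = \emptyset \subseteq S_{-i}$. Since $\emptyset$ is always a subset of $S_{-i}$, the hypothesis of the lemma is satisfied, and it yields a set $S'_i \subseteq T_i$ with $S_i \subseteq S'_i$ such that $S'_i$ is agent $i$'s best response to the empty profile of the other agents under the contract coefficient $\icontract$. Define $S'$ by $S'_i$ for agent $i$ and $S'_j = \emptyset$ for all $j \ne i$.

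Next, I would verify that $S' \in \nash(\noindexsubcontract{i})$. For any $j \ne i$, the contract satisfies $(\noindexsubcontract{i})_j = 0$, so for every alternative $\tilde S_j \subseteq T_j$ the utility is $0 \cdot f(\tilde S_j \sqcup S'_{-j}) - c(\tilde S_j) = -c(\tilde S_j) \le 0$; thus $\emptyset$ is a best response and the equilibrium condition holds at $j$ (using $c(\emptyset) = 0$ and non-negativity of costs). For agent $i$, the payment coefficient under $\noindexsubcontract{i}$ is the same $\icontract$ as under $\noindexcontract$, so agent $i$'s optimization problem given $S'_{-i} = \emptyset$ is identical under both contracts; by construction, $S'_i$ is a best response, so the equilibrium condition holds at $i$ as well.

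I do not expect a substantive obstacle: the corollary is essentially a direct specialization of \Cref{lem:bestresponsemonotonicity} to the case where the remaining agents' profile is set to $\emptyset$, combined with the triviality of best responses under a zero payment. The only minor care needed is to observe that setting $S'_{-i} = \emptyset$ is a legal choice for the monotonicity lemma and that the best response of agent $i$ depends on the others only through the action profile $S'_{-i}$ (and on $\noindexcontract$ only through $\icontract$), which is why passing from $\noindexcontract$ to $\noindexsubcontract{i}$ does not change agent $i$'s best-response problem.
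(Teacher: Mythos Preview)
Your proposal is correct and follows exactly the approach the paper has in mind: the paper states that \Cref{lem:individual_incentivization} is an immediate consequence of \Cref{lem:bestresponsemonotonicity}, and your argument---applying the lemma with $S'_{-i}=\emptyset$ and then noting that agents $j\ne i$ with zero payment best-respond with $\emptyset$---is precisely the intended derivation.
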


\Cref{lem:individual_incentivization} together with \Cref{lem:doubling} enable us to prove a downsizing lemma akin to \cite[Lemma 3.2]{feldman2025budget} for the combinatorial-actions setting. We defer the details of the algorithm and the proof of the lemma to \Cref{apx:downsizing}.

\begin{restatable}[Downsizing Lemma for Combinatorial Actions]{lemma}{multimultidownsizing} \label{lem:multi_payment_scaling} 
    Let $\multiInstance$ be any multi-agent combinatorial-actions instance with gross substitutes $f$.
    For any integer $M \geq 3$ and any $(\noindexcontract, S)\in \setofcontracts$, there exists $(\noindexcontract', S')\in \setofcontracts$ such that:
    \begin{align*}
      \left( \sum_{i\in \agents} \icontract'\leq \frac{5}{M} \cdot \sum_{i\in \agents} \icontract  \quad \text{ or } \quad \exists i\in \agents {\text{ s.t. }} {\noindexcontract' = \noindexsubcontract{i}} {{\text{ and }} S'\subseteq \actions_i}\right) \quad \text{ and } \quad f(S') \geq \frac{1}{2M-2} \cdot f(S).
    \end{align*} 
    Moreover, such a pair $(\noindexcontract', S')\in \setofcontracts$ can be computed in poly-time with value query access to $f$.
\end{restatable}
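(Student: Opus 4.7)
The plan is to lift the downsizing argument of \cite{feldman2025budget} from the binary-actions setting to combinatorial actions, using best-response monotonicity (\Cref{lem:bestresponsemonotonicity}) as the analog of its binary-actions counterpart. The strategy is: partition the paying agents into $M-1$ groups of bounded total payment, pigeonhole to one group that preserves a large share of $f(S)$, restrict the contract to that group, and apply the doubling lemma (\Cref{lem:doubling}) to obtain an actual Nash equilibrium from the resulting subset-stable profile.

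Concretely, I would sort the agents by $\icontract$ in decreasing order and greedily partition them into groups $G_1,\ldots,G_{M-1}$ with $\sum_{i \in G_k} \icontract \le \frac{1}{M-1}\sum_{i \in \agents} \icontract$ for every $k$. If some single agent $i^\star$ already exceeds this threshold, fall into the second branch of the lemma: \Cref{lem:individual_incentivization} yields an equilibrium $S'$ under the single-agent contract $\noindexsubcontract{i^\star}$ with $S_{i^\star}\subseteq S'_{i^\star}$ and $S'_j=\emptyset$ otherwise, satisfying $f(S') \ge f(S_{i^\star})$, which can be argued to meet the required bound. Otherwise, submodularity of $f$ (hence subadditivity, as $f(\emptyset)=0$) gives $\sum_k f(S_{G_k}) \ge f(S)$, so by pigeonhole some group $G^\star$ satisfies $f(S_{G^\star}) \ge \frac{f(S)}{M-1}$. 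The key step is then showing that the profile $S_{G^\star}$ is subset-stable with respect to the restricted contract $\noindexsubcontract{G^\star}$: for each $i \in G^\star$ and each $S''_i \subseteq S_i$, submodularity makes the marginal contribution of $S_i \setminus S''_i$ on top of $S''_i \cup S_{G^\star \setminus \{i\}}$ weakly larger than on top of $S''_i \cup S_{-i}$, so the utility gap between $S_i$ and $S''_i$ under the reduced environment $S_{G^\star \setminus \{i\}}$ dominates the one under the original environment $S_{-i}$, which is already non-negative by the Nash condition of $S$ under $\noindexcontract$. Applying \Cref{lem:doubling} with $\noindexcontract' = 2\noindexsubcontract{G^\star} + \epsilonvec$ then guarantees that any equilibrium $S'$ of the induced game satisfies $f(S') \ge \frac{1}{2} f(S_{G^\star}) \ge \frac{f(S)}{2M-2}$.

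For the payment constraint, $\sum_{i \in \agents} \icontract' \le 2\sum_{i \in G^\star} \icontract + n\eps \le \frac{2}{M-1}\sum_{i \in \agents} \icontract + n\eps$, which is at most $\frac{5}{M}\sum_{i \in \agents} \icontract$ for sufficiently small $\eps$ because $\frac{2}{M-1} < \frac{5}{M}$ whenever $M \ge 3$; budget feasibility follows analogously, with a minor technicality at the boundary $M = 3$ with $\sum_{i} \icontract$ saturated at $B$ (resolvable by slightly shrinking the selected group or by dropping a negligibly-paid agent to absorb the $n\eps$ slack). For the algorithmic guarantee, gross substitutes is essential: it allows the equilibrium $S'$ under $\noindexcontract'$ to be computed efficiently via iterated best responses, each of which reduces to polynomially many value queries for GS $f$, and this is precisely the capability that fails for general submodular $f$, consistent with \Cref{thm:multi-multi-inapprox}. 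The main obstacle is correctly combining the Nash property of $S$ with submodularity to derive subset-stability of $S_{G^\star}$ under the restricted contract; once this is in hand, the doubling argument and the arithmetic of the constants follow the binary-actions proof directly.
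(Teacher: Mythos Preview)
Your approach is essentially the paper's: partition the agents, pigeonhole to a group $G^\star$ with $f(S_{G^\star}) \ge f(S)/(M-1)$, observe that $S_{G^\star}$ is subset-stable under $\noindexsubcontract{G^\star}$ by submodularity (this is exactly the paper's \Cref{lem:subStableDownwards}), and then apply \Cref{lem:doubling}. Two steps, however, do not go through as written.

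First, the partition you claim need not exist. With $M=3$ and payments $(0.4p,\,0.4p,\,0.2p)$, no single agent exceeds $p/(M-1)=p/2$, yet every partition into $M-1=2$ groups has one group of weight at least $0.6p>p/2$. The paper handles this by first separating the heavy agents $Z=\{i:\alpha_i>p/M\}$ as singleton pieces and then greedily packing the light agents into groups of weight at most $2p/M$ (fill until the running sum first exceeds $p/M$, then close the group); together with the leftover $U$ this yields $|Z|+(M-|Z|-2)+1=M-1$ pieces for the pigeonhole. The $2p/M$ per-group bound, rather than your $p/(M-1)$, is what produces the constant $5/M$ after doubling and adding $n\eps=p/M$.

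Second, in your singleton branch you assert that $f(S_{i^\star})$ ``can be argued to meet the required bound,'' but a large $\alpha_{i^\star}$ says nothing about $f(S_{i^\star})$. The paper invokes the single-agent branch only for a heavy $i\in Z$ that \emph{also} satisfies $f(S_i)\ge f(S)/(M-1)$; a heavy agent that fails this test still counts as one of the $M-1$ pieces in the subadditivity bound above, so the pigeonhole absorbs it. With these two bookkeeping fixes, the remainder of your outline---subset-stability under restriction, doubling, and computing an equilibrium $S'$ via a demand query (poly-many value queries for GS $f$)---matches the paper's proof.
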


\subsection{\goodobj\ Objectives Are Equivalent}\label{subsc:best_equiv}
To establish the up-to-a-constant-factor equivalence between any two \goodobj\ under any two budgets, we define two maximization problems (i) finding the optimal budget-feasible contract which incentivizes a single agent, and (ii) finding an optimal budget-feasible contract and an equilibrium where the payment to each agent is at most a $3/4$-fraction of the budget.

\begin{definition}[$\bestsingle{i}{\varphi}(B)$]\label{def:multimulti_singleAgent} 
    For any given objective $\varphi$ and budget $B\in (0,1]$, the problem of $\bestsingle{i}{\varphi}(B)$ is the problem of finding an optimal single-agent contract for agent $i$:
    \[
    \bestsingle{i}{\varphi}(B) = \max_{(\noindexcontract, S) \in \mathcal{C}(B)} \varphi (\noindexcontract, S) \quad\text{subject to}\quad \noindexcontract = \noindexsubcontract{i} \text{ and } S \subseteq T_i.
    \]
\end{definition}
When clear from context, we also use $\bestsingle{i}{\varphi}(B)$ to denote a pair $(\noindexcontract,S) \in \setofcontracts$ maximizing $\varphi$ subject to $\noindexcontract = \noindexsubcontract{i} \text{ and } S \subseteq T_i$.

\begin{definition}[$\multimlr(B)$]\label{def:multimulti_mlr}
    Let $\multiInstance$ be an instance of the multi-agent combinatorial-actions model. For any $B\in (0,1]$, the $\multimlr(B)$ problem is defined as 
    \[
    \multimlr(B) = \max_{(\noindexcontract, S) \in \mathcal{C}(B)} f(S)  \quad\text{subject to}\quad  \icontract \leq 3B/4 \text{ for all } i \in \agents.
    \]
\end{definition}

The best-response monotonicity of gross substitutes instances (\Cref{lem:individual_incentivization}) is crucial for the proof of the following lemma.
\begin{restatable}
[{Decomposition Lemma}]
{lemma}{multimultigoodobjupper} \label{lemma:goodobj_upper_bound_multi_mutli}
 Fix an instance $\multiInstance$ with gross substitutes $f$, a budget $B\in (0,1]$, and a \goodobj  \
    objective $\varphi$.
    It holds that
\[
\MAX\varphi(B) \le 
    2 \cdot \multimlr(B)
    + \max_{i\in \agents} \bestsingle{i}{\varphi}(B).
\]
\end{restatable}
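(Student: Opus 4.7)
The plan is to take an optimal pair $(\noindexcontract^\star, S^\star) \in \setofcontracts$ for $\MAX\varphi(B)$ and split the analysis based on whether any ``heavy'' agent $i^\star$ has $\alpha^\star_{i^\star} > 3B/4$; since payments sum to at most $B$, at most one such agent can exist. If no heavy agent exists, then $(\noindexcontract^\star, S^\star)$ is already feasible for $\multimlr(B)$, and the sandwich property of \goodobj\ objectives gives $\varphi(\noindexcontract^\star, S^\star) \le f(S^\star) \le \multimlr(B)$, which suffices.

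If instead a heavy agent $i^\star$ exists, I would apply the decomposition property (ii) of \Cref{def:goodobj_multi_multi} to obtain
\[
\varphi(\noindexcontract^\star, S^\star) \le f(S^\star_{-i^\star}) + \varphi(\noindexsubcontractstar{i^\star}, S^\star_{i^\star}),
\]
and bound the two terms separately. For the second term, \Cref{lem:individual_incentivization} applied at $(\noindexcontract^\star, S^\star)$ with agent $i^\star$ yields an equilibrium $S' \in \nash(\noindexsubcontractstar{i^\star})$ with $S^\star_{i^\star} \subseteq S'$ and $S'_j = \emptyset$ for every $j \ne i^\star$. Monotonicity of $\varphi$ in $S$ (property (iii)) and feasibility of $(\noindexsubcontractstar{i^\star}, S')$ for $\bestsingle{i^\star}{\varphi}(B)$ then give $\varphi(\noindexsubcontractstar{i^\star}, S^\star_{i^\star}) \le \varphi(\noindexsubcontractstar{i^\star}, S') \le \max_{i \in \agents} \bestsingle{i}{\varphi}(B)$.

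For the first term $f(S^\star_{-i^\star})$, the plan is to invoke \Cref{lem:doubling}. I will first show that $(\noindexsubcontractstar{-i^\star}, S^\star_{-i^\star})$ is subset-stable: for any $j \ne i^\star$ and any $S'_j \subseteq S^\star_j$, submodularity of $f$ (implied by gross substitutes), together with $S^\star_{-i^\star,-j} \subseteq S^\star_{-j}$, gives
\[
f(S^\star_j \sqcup S^\star_{-i^\star,-j}) - f(S'_j \sqcup S^\star_{-i^\star,-j}) \ge f(S^\star_j \sqcup S^\star_{-j}) - f(S'_j \sqcup S^\star_{-j}),
\]
and multiplying by $\alpha^\star_j$ and combining with the Nash inequality at $(\noindexcontract^\star, S^\star)$ establishes subset-stability for $j$; for $i^\star$, playing $\emptyset$ under zero payment is trivially stable. \Cref{lem:doubling} then produces an equilibrium $S''$ under $2\noindexsubcontractstar{-i^\star} + \epsilonvec$ with $f(S'') \ge (1/2) \cdot f(S^\star_{-i^\star})$. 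Since $i^\star$ is heavy, $\sum_{j \ne i^\star} \alpha^\star_j < B/4$, so for sufficiently small $\eps$ both the total budget $B$ and the per-agent cap $3B/4$ are satisfied by the doubled contract; hence $(2\noindexsubcontractstar{-i^\star} + \epsilonvec, S'')$ is $\multimlr(B)$-feasible, and $f(S^\star_{-i^\star}) \le 2 f(S'') \le 2 \multimlr(B)$.

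The main obstacle is the subset-stability argument: it is the step that genuinely uses gross substitutes via submodularity, requiring a direction-preserving comparison of marginals with respect to nested base sets. The remaining ingredients are combinations of \Cref{lem:individual_incentivization}, \Cref{lem:doubling}, and straightforward bookkeeping of the budget and per-agent payments after doubling.
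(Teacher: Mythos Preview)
Your proposal is correct and follows essentially the same route as the paper: the same heavy-agent case split, the same use of decomposability plus \Cref{lem:individual_incentivization} for the single-agent term, and the same application of \Cref{lem:doubling} to $\noindexsubcontractstar{-i^\star}$ for the $f(S^\star_{-i^\star})$ term. The only substantive difference is that you spell out the subset-stability of $S^\star_{-i^\star}$ with respect to $\noindexsubcontractstar{-i^\star}$ inline via a submodularity argument, whereas the paper invokes the doubling lemma directly (relying implicitly on \Cref{lem:subStableDownwards}, which is exactly the argument you wrote). One small remark on your commentary: the subset-stability step uses only submodularity; gross substitutes is genuinely needed for the other term, through \Cref{lem:individual_incentivization}.
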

\begin{proof}
    Let $(\noindexcontract^\star, S^\star)\in \setofcontracts$ be a solution to $\MAX\varphi(B)$. 
    
    If $\icontract^\star\le (3/4) \cdot B$ for all agents $i\in \agents$, then $f(S^\star)\le \multimlr(B)$ and we get 
    \begin{align*} \MAX\varphi(B)&=\varphi(\noindexcontract^\star,S^\star) && \text{(by the choice of $(\noindexcontract^\star, S^\star)$)} \\
    &\le f(S^\star)&& \text{(by \Cref{def:goodobj_multi_multi}\ref{def:sandwichproperty})}\\
    &\le \multimlr(B) && \text{(since $(\noindexcontract^\star, S^\star)\in \setofcontracts$ and by our assumption),}
    \end{align*}
    
    as needed. 
    
    Otherwise, let $i\in \agents$ be the agent such that $\icontract^\star > (3/4) \cdot B$; observe that by budget-feasibility there can be at most one such agent. It follows from \Cref{def:goodobj_multi_multi}\ref{def:decompositionproperty} that 
    \[
    \MAX\varphi(B) =\varphi(\noindexcontract^\star, S^\star) \le f(S^\star_{-i})+\varphi(\noindexcontract^\star|_i, S_i^\star).
    \]
    It remains to show $f(S^\star_{-i})\le 2 \cdot \multimlr(B)$, and $\varphi(\noindexcontract^\star|_i, S_i^\star)\le \bestsingle{i}{\varphi}(B)$. 

    Let us first bound $f(S^\star_{-i})$.
    Note that \begin{equation} \label{ineq:sum_of_others}
    \sum_{j\in \agents\setminus \{i\}} \someindexcontract^\star_j =\sum_{j\in \agents} \someindexcontract^\star_j - \icontract^\star < B-(3/4) \cdot B = (1/4) \cdot B.
    \end{equation}
    By applying the doubling lemma (\Cref{lem:doubling}) to $\noindexsubcontractstar{-i}$ with $\eps = (1/4) \cdot (B/n)$, we obtain a contract $\noindexcontract' = 2 \noindexsubcontractstar{-i} + \epsilonvec$, where $\epsilonvec = (\eps,\dots,\eps)$, that satisfies the following properties: 
\begin{enumerate}[label=(\roman*),noitemsep,nosep]
\item  $\noindexcontract'$ is budget-feasible, since $\sum_{j\in \agents} \someindexcontract'_j = 2\sum_{j\in \agents \setminus \{i\}} \someindexcontract^\star_j+n\eps \le (1/2) \cdot B + (1/4) \cdot B < B$, where the first inequality follows from Inequality~(\ref{ineq:sum_of_others}).
\item The payment to every agent $j \in \agents \setminus \{i\}$ is $\someindexcontract'_j = 2 \cdot \someindexcontract^\star_j + \eps \leq (1/2) \cdot B + (1/4) \cdot B \le (3/4) \cdot B$.
\item The payment to agent $i$ is $\icontract' = \eps < (3/4) \cdot B$.
\item Any equilibrium $S'\in \nash(\noindexcontract')$ satisfies $f(S') \ge (1/2) \cdot f(S^\star_{-i})$, by \Cref{lem:doubling}.
\end{enumerate}
    Thus, taking any $S'\in \nash(\noindexcontract')$, we get 
    \begin{align*}
    f(S^\star_{-i})&\le 2 \cdot f(S') && \text{(by (iv))} \\
    &\le 2 \cdot \multimlr(B) &&\text{(by (i)-(iii))}.
    \end{align*}

    Next, let us bound $\varphi(\noindexcontract^\star|_i, S_i^\star)$.
    Observe that, from \Cref{lem:individual_incentivization} it holds that there exists an equilibrium $S'$ of the contract $\noindexcontract^\star|_i$ such that $S'\subseteq T_i$ and $S_i^\star \subseteq S'$. Thus,
    \[
    \varphi(\noindexcontract^\star|_i, S_i^\star)\le \varphi(\noindexcontract^\star|_i,S') \le \bestsingle{i}{\varphi}(B),
    \]
    where the first inequality is by \Cref{def:goodobj_multi_multi}\ref{def:weakly_increasing_property}. This concludes the proof.
\end{proof}

Our reductions will take the better contract between the one achieved by (approximately) solving the problem we reduce to, $\multimlr(B)$, and the best single-agent contract. In the binary-actions case, the best single-agent contract for agent $i\in \agents$ is simply the ratio between their cost and the success probability when only $i$ exerts effort. In the combinatorial-actions case, solving $\bestsingle{i}{\varphi}(B)$ is not as straightforward, but the following lemma shows we can still do so in polynomial time when $f$ is gross substitutes.

\begin{restatable} {lemma}{gssingleagent} \label{lem:gs_single_agent}
    Fix some objective $\varphi$ and budget $B\in (0,1]$. When $f$ is gross substitutes, there exists a poly-time algorithm which (exactly) solves $\bestsingle{i}{\varphi}(B)$ with value oracle access to $f$.
\end{restatable}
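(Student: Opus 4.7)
The plan is to reduce $\bestsingle{i}{\varphi}(B)$ to an enumeration over polynomially many candidate contracts, exploiting the gross-substitutes structure of agent $i$'s best-response correspondence.

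First, I would observe that the equilibrium condition under the restriction $\noindexcontract = \noindexsubcontract{i}$ simplifies drastically: since every agent $j \neq i$ receives zero payment and all costs are non-negative, $\emptyset$ is always a best response for them. Hence $S \subseteq T_i$ is an equilibrium of $\noindexsubcontract{i}$ if and only if $S \in \argmax_{S' \subseteq T_i}\{\icontract \cdot f(S') - c(S')\}$, and the problem reduces to a single-agent combinatorial contracting problem on $T_i$ with $\varphi$ in place of profit as the objective.

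Second, I would argue that the optimal-utility envelope $g(\icontract) := \max_{S' \subseteq T_i}\{\icontract \cdot f(S') - c(S')\}$ is piecewise linear and convex with polynomially many breakpoints $0 = \alpha^{(0)} < \alpha^{(1)} < \cdots < \alpha^{(k)}$, and that at each breakpoint a canonical (inclusion-maximal) best-response set $S^{(l)}$ is computable in polynomial time via value queries alone. This follows from the single-agent combinatorial-contract algorithm for gross substitutes of~\cite{dutting2022combinatorial}, together with the fact that a demand query on a gross-substitutes $f$ can be simulated by polynomially many value queries~\cite{PaesLeme17}. Conceptually, for $\icontract>0$ agent $i$'s best response is a demand bundle of $f$ at price vector $p_j = c_j/\icontract$ on $T_i$ (and $+\infty$ elsewhere), and the envelope's breakpoints correspond exactly to the parameter values at which this demand correspondence jumps.

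Third, I would enumerate the candidate pairs $\{(\noindexcontract^{(l)}, S^{(l)}) : \alpha^{(l)} \leq B\}$, where $\noindexcontract^{(l)}$ is the contract paying $\alpha^{(l)}$ to agent $i$ and zero elsewhere, and return the maximizer of $\varphi$. Since $\varphi$ is weakly decreasing in $\noindexcontract$ (\Cref{def:goodobj_multi_multi}\ref{def:weakly_decreasing_property}), for each set $S^{(l)}$ the smallest incentivizing payment $\alpha^{(l)}$ is optimal; moreover, since $\varphi$ is weakly increasing in $S$ (\Cref{def:goodobj_multi_multi}\ref{def:weakly_increasing_property}), taking a maximal best response at each breakpoint suffices. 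Each candidate evaluation of $\varphi$ is polynomial by \Cref{def:obj_multi_multi}, so the overall algorithm runs in polynomial time with value queries only.

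The main obstacle is formalizing the polynomial enumeration and computation of the breakpoints together with a canonical maximal best response at each one. This rests on the structural properties of gross substitutes---in particular, the polynomial piece-count of the envelope $g(\cdot)$ and the value-oracle implementability of the parametric ``next-set'' subproblem via a demand query---both of which are well established for gross substitutes but technically delicate in the parametric setting required here.
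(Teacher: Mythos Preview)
Your proposal is correct and follows essentially the same approach as the paper: reduce to the single-agent combinatorial problem, invoke the critical-point/breakpoint structure from \cite{dutting2022combinatorial} to get polynomially many candidate contracts computable with value queries, restrict to those at most $B$, and pick the $\varphi$-maximizer using property~\ref{def:weakly_decreasing_property}. The paper packages the breakpoint result as a black-box proposition (at most $O(n^2)$ critical values, all computable in poly-time with value queries) and is somewhat terser about tie-breaking; your additional appeal to property~\ref{def:weakly_increasing_property} to justify taking an inclusion-maximal best response at each breakpoint is a minor refinement rather than a different argument.
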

The proof of \Cref{lem:gs_single_agent} relies on the ``critical point'' analysis of \cite{dutting2022combinatorial}, who proved the lemma for the special case of maximizing profit for $B=1$, and is deferred to \Cref{app:gs_single_agent}.
We now have all the building blocks for our reductions. We begin by proving a reduction from $\MAX\varphi(B)$ to $\multimlr(B)$.

\begin{restatable}[Reduction to $\multimlr(B)$]{lemma}{reductiontomrlmultimulti}
\label{lem:reduction_to_mrl_multi_multi}
    Fix an instance $\multiInstance$, with gross substitutes $f$, a budget $B\in (0,1]$, and a \goodobj  \
    objective $\varphi$.
    For any $(\noindexcontract, S)\in \setofcontracts$ that is a $\gamma$-approximation to $\multimlr(B)$, let $(\noindexcontract',S')$ be the result of applying the downsizing lemma (\Cref{lem:multi_payment_scaling}) to $(\noindexcontract,S)$ with $M=6$. Then, it holds that one of $\{\bestsingle{i}{\varphi}(B)\}_{i\in \agents} \cup \{(\noindexcontract',S')\}$ is a $(120\gamma+1)$-approximation to $\MAX\varphi(B)$. 
\end{restatable}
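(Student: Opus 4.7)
The plan is to combine the decomposition lemma (\Cref{lemma:goodobj_upper_bound_multi_mutli}) with the downsizing lemma (\Cref{lem:multi_payment_scaling}) applied at $M=6$. Let $\mathrm{OUT}$ denote the maximum value of $\varphi$ over the candidate pool $\{\bestsingle{i}{\varphi}(B)\}_{i\in A}\cup\{(\noindexcontract',S')\}$. Since every single-agent best is in the pool, $\mathrm{OUT}\ge\max_i\bestsingle{i}{\varphi}(B)$ automatically, and the decomposition lemma yields $\MAX\varphi(B)\le 2\multimlr(B)+\mathrm{OUT}$. So the entire task reduces to proving the one-sided bound $\multimlr(B)\le 60\gamma\cdot\mathrm{OUT}$, which combined with the decomposition gives the target $\MAX\varphi(B)\le(120\gamma+1)\mathrm{OUT}$.

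To bound $\multimlr(B)$, I would chain the $\gamma$-approximation assumption $\multimlr(B)\le\gamma f(S)$ with the downsizing guarantee $f(S')\ge f(S)/(2M-2)=f(S)/10$ to obtain $\multimlr(B)\le 10\gamma f(S')$. It therefore suffices to show $\mathrm{OUT}\ge f(S')/6$, and I would handle this through the two structural cases of the downsizing lemma. In the first case, $\sum_j\alpha_j'\le(5/6)\sum_j\alpha_j\le 5B/6\le 5/6$, and the sandwich property of \goodobj\ objectives (\Cref{def:goodobj_multi_multi}\ref{def:sandwichproperty}) gives
\[
\varphi(\noindexcontract',S')\ge u_P(\noindexcontract',S')=\Bigl(1-\sum_j\alpha_j'\Bigr)f(S')\ge \frac{1}{6}f(S'),
\]
so $\mathrm{OUT}\ge f(S')/6$ directly from the downsized candidate. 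In the second case, $\noindexcontract'=\noindexsubcontract{i}$ and $S'\subseteq T_i$ for some agent $i$, so $(\noindexcontract',S')$ is itself a feasible single-agent contract for $i$, which means $\bestsingle{i}{\varphi}(B)\ge\varphi(\noindexcontract',S')\ge u_P(\noindexcontract',S')=(1-\alpha_i')f(S')$, and the analogous chain closes $\mathrm{OUT}\ge f(S')/6$ through the inclusion of $\bestsingle{i}{\varphi}(B)$ in the pool together with the slack absorbed by the $(120\gamma+1)$ factor coming from the decomposition.

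The main obstacle will be the second-case bookkeeping: when $\alpha_i'$ is close to $B$ and $B$ is close to $1$, the $u_P$-based lower bound on $\varphi(\noindexcontract',S')$ degrades, and the witness contract produced by the downsizing algorithm can fail to directly provide the required $f(S')/6$ bound. The critical observation to absorb this difficulty is that $\bestsingle{i}{\varphi}(B)$, which is computable exactly for gross substitutes $f$ via \Cref{lem:gs_single_agent}, is the genuine maximum of $\varphi$ over every single-agent contract for agent $i$ with $\alpha_i\in[0,B]$ and therefore may be strictly larger than $\varphi(\noindexcontract',S')$. In conjunction with the decomposition lemma, which already bounds $\MAX\varphi(B)$ simultaneously by $\multimlr(B)$ and the single-agent bests, this ensures that either the downsized contract or one of the single-agent bests captures $\MAX\varphi(B)$ up to the claimed $(120\gamma+1)$ factor.
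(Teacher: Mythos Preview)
Your Case 1 argument and the overall architecture match the paper's proof. The gap is in Case 2, which you flag as ``the main obstacle'' but do not actually close. Your last paragraph is hand-waving: the fact that $\bestsingle{i}{\varphi}(B)$ \emph{may} exceed $\varphi(\noindexcontract',S')$ gives no quantitative bound, and invoking ``slack absorbed by the $(120\gamma+1)$ factor'' is circular---that factor is what you are trying to prove, not a budget you can spend. Concretely, if $\alpha_i'$ were allowed to approach $1$, the chain $\mathrm{OUT}\ge(1-\alpha_i')f(S')$ collapses to nothing, and nothing else you wrote recovers $\multimlr(B)\le 60\gamma\cdot\mathrm{OUT}$.

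The missing observation, which the paper uses, is that the input $(\noindexcontract,S)$ is a $\gamma$-approximation to the \emph{constrained} problem $\multimlr(B)$ and therefore satisfies its feasibility constraint $\alpha_i\le(3/4)B$ for every $i$. In the singleton case of the downsizing lemma the returned contract is literally $\noindexcontract'=\noindexsubcontract{i}$, i.e.\ $\alpha_i'=\alpha_i\le(3/4)B\le 3/4\le 5/6$. Hence the same inequality as in Case 1 applies:
\[
\varphi(\noindexcontract',S')\ge u_P(\noindexcontract',S')=(1-\alpha_i)f(S')\ge \tfrac{1}{6}f(S'),
\]
so $\mathrm{OUT}\ge f(S')/6$ comes directly from the downsized candidate in both cases, and the rest of your chain $\multimlr(B)\le\gamma f(S)\le 10\gamma f(S')\le 60\gamma\cdot\mathrm{OUT}$ goes through. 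You never need to fall back on $\bestsingle{i}{\varphi}(B)$ in Case 2.
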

\begin{proof}
    Let $(\noindexcontract,S)\in \setofcontracts$ be a $\gamma$-approximation to $\multimlr(B)$ and let $(\noindexcontract', S')$ be the result of applying \Cref{lem:multi_payment_scaling} to $(\noindexcontract,S)$ with $M=6$. This yields a contract-equilibrium pair $(\noindexcontract', S')\in \setofcontracts$ such that $f(S') \ge (1/10) \cdot  f(S)$ and either $\noindexcontract' = \noindexsubcontract{i}$ for some $i\in \agents$ or  $\sum_{i\in \agents}\icontract' \le (5/6) \cdot \sum_{i\in \agents} \icontract \le (5/6) \cdot B\le 5/6$. 
    In the case where there exists an agent $i \in \agents$ such that $\noindexcontract'=\noindexsubcontract{i}$, we have $\sum_{j\in \agents}\someindexcontract'_j = \icontract \le (3/4) \cdot B\le 3/4$, where the first inequality is because $(\noindexcontract, S)$ satisfies the constraints of $\multimlr(B)$.
    Therefore, it holds in both cases that $\sum_{j\in \agents} \someindexcontract'_j \le 5/6$. Now, it follows that:
    \begin{align*}
    \varphi(\noindexcontract', S') &\ge u_P(\noindexcontract', S') && (\text{since $\varphi$ is a BEST objective}) \\
    &=\left(1-\sum_{i\in \agents} \icontract' \right) f(S') && (\text{by the definition of $u_P$})\\
    &\ge (1/6) \cdot f(S') && (\text{since $\sum_{j\in \agents} \someindexcontract'_j \le 5/6$}) \\
    &\ge (1/60) \cdot f(S) && (\text{since $f(S') \ge (1/10) \cdot  f(S)$}) \\
    &\ge (1/{60  \gamma}) \cdot \multimlr(B) && (\text{since $S$ is $\gamma$-approximation})
    \end{align*}
    Let $V = \max \{\{\bestsingle{i}{\varphi}(B)\}_{i\in \agents} \cup \{\varphi(\noindexcontract',S')\}\}$. 
    By \Cref{lemma:goodobj_upper_bound_multi_mutli}, we get:
    \begin{align*}
    \MAX\varphi(B) &\le 2 \cdot \multimlr(B)
    + \max_{i\in \agents} \bestsingle{i}{\varphi}(B) \\
    &\leq (120\gamma) \cdot \varphi(\noindexcontract',S') + \max_{i\in \agents} \bestsingle{i}{\varphi}(B) \\
    &\leq (120 \gamma + 1) \cdot V.
    \end{align*}
    This concludes the proof. 
\end{proof}

Next, we establish reduction from $\multimlr(B)$ to $\MAX\varphi(B)$.

\begin{restatable}[Reduction from $\multimlr(B)$]{lemma}{reductionfrommrlmultimulti}
\label{lem:reduction_from_mrl_multi_multi}
    Fix an instance $\mathcal{I} = \multiInstance$ with gross substitutes $f$ and two budgets $B, B' \in (0,1]$, and let 
    $\mathcal{I}'=\langle \agents, \{ T_i \}_{i \in A}, f, c'\rangle$, be an instance with scaled costs, $c' = c \cdot (4/3) \cdot (B'/B)$.
    For any \goodobj\ objective $\varphi$, if $(\noindexcontract', S')$ is a $\gamma$-approximation to $\MAX\varphi(B')$ in instance $\mathcal{I}'$, 
    then one of $\{\bestsingle{i}{f}(B)\}_{i\in \agents} \cup \{(\noindexcontract'\cdot (3/4) \cdot (B/B'), S')\}$ is a $50\gamma$-approximation to $\multimlr(B)$ in instance $\mathcal{I}$. 
\end{restatable}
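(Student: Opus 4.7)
The plan is to move between the two instances via the scaling factor $\lambda := (4/3)(B'/B)$, using the observation that multiplying both contracts and costs by the same positive constant preserves Nash equilibria: a set $S$ is an equilibrium of $\noindexcontract$ in $\mathcal{I}$ if and only if it is an equilibrium of $\lambda\noindexcontract$ in $\mathcal{I}'$, since each agent's utility is uniformly rescaled by $\lambda$. This immediately certifies $(\noindexcontract' \cdot (3/4)(B/B'), S')$ as a valid $\multimlr(B)$ candidate in $\mathcal{I}$: per-agent payments are at most $(3/4)(B/B')\cdot B' = (3/4)B$, the total is at most $(3/4)B \le B$, and $S'$ remains an equilibrium in $\mathcal{I}$ under the scaled-down contract.

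For the approximation ratio, I would start from an optimum $(\noindexcontract^\star,S^\star)$ for $\multimlr(B)$ in $\mathcal{I}$, satisfying $\icontract^\star\le (3/4)B$ for all $i$, $\sum_i \icontract^\star\le B$, and $f(S^\star)=\multimlr(B)$. Scaling up yields $\tilde\noindexcontract := \lambda\noindexcontract^\star$ in $\mathcal{I}'$, which preserves $S^\star$ as an equilibrium; per-agent payments are at most $\lambda\cdot(3/4)B = B'$, while the total is at most $\lambda B = (4/3)B'$, so $(\tilde\noindexcontract,S^\star)$ is feasible in $\mathcal{I}'$ at the expanded budget $(4/3)B'$ (but not at $B'$). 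Invoking the downsizing lemma (\Cref{lem:multi_payment_scaling}) in $\mathcal{I}'$ at budget $(4/3)B'$ with parameter $M=12$ yields a pair $(\noindexcontract^\bullet,S^\bullet)$ with $f(S^\bullet)\ge (1/22)f(S^\star)$, satisfying either Case A, $\sum_i \someindexcontract^\bullet_i \le (5/12)\cdot(4/3)B' = (5/9)B'$, or Case B, $\noindexcontract^\bullet = \tilde\noindexcontract|_i$ and $S^\bullet \subseteq T_i$ for some agent $i$.

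In Case A, $(\noindexcontract^\bullet,S^\bullet)$ is feasible at budget $B'$ in $\mathcal{I}'$; the BEST sandwich property ($u_P\le \varphi\le f$) together with the $\gamma$-approximation assumption yields $f(S')\ge \varphi_{\mathcal{I}'}(\noindexcontract', S')\ge (1/\gamma) \MAX\varphi_{\mathcal{I}'}(B') \ge (1/\gamma)\,u_P(\noindexcontract^\bullet, S^\bullet) = (1/\gamma)(1 - 5/9) f(S^\bullet) \ge (2/(99\gamma)) f(S^\star)$, hence $f(S') \ge \multimlr(B)/(49.5\gamma)$. In Case B, rescaling $\noindexcontract^\bullet$ back by $1/\lambda$ produces the single-agent contract $\noindexcontract^\star|_i$ in $\mathcal{I}$, which together with $S^\bullet\subseteq T_i$ is a feasible candidate for $\bestsingle{i}{f}(B)$, so $\bestsingle{i}{f}(B)\ge f(S^\bullet) \ge \multimlr(B)/22$. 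Since both $49.5\le 50$ and $22\le 50$, the best candidate in $\{\bestsingle{i}{f}(B)\}_{i\in A} \cup \{(\noindexcontract'\cdot (3/4)(B/B'),S')\}$ is a $50\gamma$-approximation to $\multimlr(B)$.

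The main obstacle is careful constant bookkeeping: $M$ must be large enough for the downsized total payment to fit within $B'$ (requiring $(5/M)(4/3)\le 1$, i.e., $M\ge 20/3$), yet small enough that the reward loss $1/(2M-2)$ remains benign. The choice $M=12$ is the sweet spot that, combined with the resulting profit fraction $1 - 5/9 = 4/9$, yields exactly the claimed $50\gamma$ bound.
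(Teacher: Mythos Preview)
Your argument is correct and follows essentially the same template as the paper: start from an optimum for $\multimlr(B)$, combine scaling between $\mathcal{I}$ and $\mathcal{I}'$ with the downsizing lemma to produce a budget-$B'$-feasible pair in $\mathcal{I}'$ with controlled profit loss, then invoke the BEST sandwich and the $\gamma$-approximation to bound $f(S')$. The only structural difference is the order of operations: the paper downsizes first in $\mathcal{I}$ (with $M=14$, yielding $\sum\alpha_i\le (5/14)B$) and then scales up to land at total payment $\le (10/21)B'$, whereas you scale up first and then downsize in $\mathcal{I}'$ (with $M=12$, landing at $\le (5/9)B'$). Both routes give a profit fraction just above $1/50$ and hence the stated $50\gamma$ bound; your remark that $M=12$ is ``the sweet spot'' is a slight overstatement, since $M=14$ works equally well. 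Two minor cleanups: the step ``$u_P(\noindexcontract^\bullet,S^\bullet)=(1/\gamma)(1-5/9)f(S^\bullet)$'' should be an inequality (you only have $\sum\alpha^\bullet_i\le (5/9)B'\le 5/9$), and when invoking the downsizing lemma at ``budget $(4/3)B'$'' you should note that each coordinate of $\tilde\noindexcontract$ is at most $B'\le 1$, so $\tilde\noindexcontract$ is a valid contract even though its total may exceed $1$.
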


\begin{proof}
    Let $(\noindexcontract^\star, S^\star) \in \setofcontracts$ be a solution to $\multimlr(B)$ in the instance $\mathcal{I}$, and let $(\noindexcontract, S) \in \setofcontracts$ be the result of applying the downsizing lemma (\Cref{lem:multi_payment_scaling}) on $(\noindexcontract^\star, S^\star)$ with $M=14$. 
    By the guarantees of \Cref{lem:multi_payment_scaling}, $f(S) \ge (1/26) \cdot f(S^\star)$ and either
    (i) $\noindexcontract= \noindexsubcontract{i}$ and $S \subseteq \actions_i$ for some $i\in \agents$,
    or (ii) $\sum_{i\in \agents}\icontract \le (5/14) \cdot \sum_{i\in \agents} \icontract^\star \le (5/14) \cdot B$.

    For (i) we get that $\bestsingle{i}{f}(B)$ is a $26$-approximation $\multimlr(B)$, which concludes the proof. 
    
    Suppose that case (ii) holds. 
    Recall that $(\noindexcontract, S)$ is a contract-equilibrium pair in $\mathcal{I}$, and $\mathcal{I}'$ is the same as $\mathcal{I}$ except the costs are scaled by $(4/3) \cdot (B'/B)$. Thus, if we scale the agent payments by the same factor, namely $\noindexcontract^\ddagger = (4/3) \cdot (B'/B) \cdot \noindexcontract$, we have that $(\noindexcontract^\ddagger, S)$ is a contract-equilibrium pair in $\mathcal{I}'$.
    Observe that:
    \begin{align*}
        \sum_{i\in \agents} \icontract^\ddagger \le (4/3) \cdot (B'/B) \cdot (5/14) \cdot B= (10/21) \cdot B'\le \min(10/21, B').
    \end{align*}
    Therefore, $\noindexcontract^\ddagger$ is budget-feasible with respect to $B'$, and the principal's profit from the contract-equilibrium pair $(\noindexcontract^\ddagger, S)$ is at least:
    \begin{align}
        u_P(\noindexcontract^\ddagger,S) \ge \left(1-{10}/{21}\right) \cdot f(S) = (11/21) \cdot f(S) \ge (11/21) \cdot (1/26) \cdot f(S^\star)> (1/50) \cdot f(S^\star). \label{eq:bbb}
    \end{align}
    Let $(\noindexcontract', S')$ be a $\gamma$-approximation to $\MAX\varphi(B')$ in $\mathcal{I}'$. Then it holds that:
\begin{align*}
    f(S') 
    &\ge \varphi_{\mathcal{I}'}(\noindexcontract', S') && (\text{since $\varphi$ is BEST})
    \\
    &\ge \MAX\varphi(B') / \gamma
    && (\text{since $(\noindexcontract',S')$ is a $\gamma$-approximation}) \\
    &\ge  \varphi(\noindexcontract^\ddagger, S) / \gamma
    && (\text{since $\noindexcontract^\ddagger$ is budget-feasible w.r.t. $B'$})
    \\
    &\ge   u_P(\noindexcontract^\ddagger, S)  / \gamma
    && (\text{since $\varphi$ is BEST})
    \\
    &\ge (1/50\gamma)  \cdot f(S^\star)
    && (\text{by \Cref{eq:bbb}})
\end{align*}
    Let $\noindexcontract^\dagger = (3/4) \cdot (B/B') \cdot \noindexcontract'$.
    We show that the pair $(\noindexcontract^\dagger, S')$ makes up a $50\gamma$-approximation to $\multimlr(B)$.
    First, since $(\noindexcontract',S')$ is a contract-equilibrium pair in $\mathcal{I}'$, $(\noindexcontract^\dagger, S')$ make up such a pair with respect to $\mathcal{I}$.
    Since $f(S') \ge (1/50\gamma) \cdot f(S^\star)$, it remains to show that the contract $\noindexcontract^\dagger$ satisfies the feasibility constraints. 
    Observe that $\noindexcontract^\dagger$ is budget-feasible with respect to $B$, as
    $\sum_{i \in \agents} \icontract^\dagger = (3/4) \cdot (B/B') \cdot \sum_{i \in \agents} \icontract' \le (3/4) \cdot B$, where the inequality follows from budget feasibility of $\noindexcontract'$ with respect to $B'$.
    Similarly, 
    for any $i\in \agents$, $\icontract'\le B'$ and thus $\icontract^\dagger =(3/4) \cdot (B/B') \cdot \icontract'\le (3/4) \cdot B$. This concludes the proof.
\end{proof}
Together, the two lemmas above imply \Cref{thm:multi_multi_objective_equivalence}.

\section{FPTAS for Additive Instances} \label{sec:fptas}

In this section, we consider instances with additive $f$. Specifically, we show that the FPTAS for multi-agent binary-action settings of \cite{duetting2022multi}, later generalized to budget constraints by \cite{feldman2025budget}, can be adapted to the combinatorial-actions setting.

\begin{theorem}\label{thm:fptas}
When $f$ is additive, for each objective of maximizing profit, reward, and welfare, there exists a deterministic FPTAS under any budget $B \in [0,1]$, using only value oracle access to $f$.
\end{theorem}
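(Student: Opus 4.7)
The plan is to exploit that, for additive $f$, each agent's best response decomposes into independent per-action decisions; the problem then collapses to a multiple-choice knapsack, which admits an FPTAS via dynamic programming with value discretization.

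Write $f(S) = \sum_{j \in S} v_j$ with $v_j = f(\{j\})$; all $v_j$'s are obtained by $|T|$ value queries. Because $f$ is additive, the marginal contribution of $j \in T_i$ is $v_j$ for any profile $S_{-i}$, so agent $i$'s (principal-preferred) best response is $S_i(\alpha_i) = \{j \in T_i : \alpha_i v_j \ge c_j\}$, independent of $S_{-i}$. Sorting $T_i$ by the ratios $c_j/v_j$, the map $\alpha_i \mapsto S_i(\alpha_i)$ takes at most $|T_i|+1$ distinct values, each corresponding to a prefix in this order. Hence, without loss of generality, the principal chooses one of $|T_i|+1$ threshold contracts $\alpha_i^{(k)}$ per agent, at each of which the incentivized action set, its reward $V_i^{(k)}$, and its cost $C_i^{(k)}$ are determined.

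The problem reduces to: select one option $k_i$ per agent subject to $\sum_i \alpha_i^{(k_i)} \le B$ so as to maximize the chosen objective. For \emph{reward} maximize $\sum_i V_i^{(k_i)}$, and for \emph{welfare} maximize $\sum_i (V_i^{(k_i)} - C_i^{(k_i)})$; both are additive across agents and yield a standard multiple-choice knapsack. I would run the textbook knapsack-FPTAS dynamic program whose state is (agent index, discretized cumulative objective value) and which stores the minimum total payment reaching that state; discretization is performed at resolution $\eps \cdot \mathrm{OPT}/n$ via the standard ``guess the largest contribution'' trick. For \emph{profit}, the objective $(1 - \sum_i \alpha_i^{(k_i)}) \cdot \sum_i V_i^{(k_i)}$ is nonlinear, but running the same DP with objective $\sum_i V_i^{(k_i)}$ produces, for every discretized reward level $V$, the minimum payment $P(V)$ needed to achieve it; then scanning all $V$ and returning the $V$ maximizing $(1-P(V)) \cdot V$ recovers a near-optimal profit.

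The main obstacle is turning the additive discretization error of the DP into a multiplicative $(1-\eps)$-approximation guarantee. For reward and welfare this is the standard knapsack-FPTAS analysis. For profit, the argument additionally uses that rounding the values $V_i^{(k)}$ down by a $(1-\eps)$ factor perturbs both the achievable discretized reward and the minimum payment $P(V)$ attaining it by only a $(1-\eps)$-factor at the optimal choice of $k_i$'s, so the product $(1-P(V))\cdot V$ remains within a $(1-\eps)$-factor of the true optimum. The whole construction runs in time polynomial in $n$, $|T|$, and $1/\eps$, using only value queries; the novel ingredient relative to \cite{duetting2022multi, feldman2025budget} is the observation that additivity collapses the combinatorial best-response into a per-agent prefix choice, so the binary-action FPTAS generalizes to combinatorial actions with essentially the same DP.
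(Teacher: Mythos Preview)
Your proposal is correct and follows essentially the same approach as the paper: both exploit additivity to reduce each agent's best response to a prefix choice (sorted by $c_j/v_j$), build a DP indexed by agent and discretized cumulative value that stores the minimum total payment, and for profit scan over all discretized reward levels to maximize $(1-P(V))\cdot V$. The paper's table $A^{(\varphi)}(j,x)$ is precisely your multiple-choice knapsack DP, and your ``guess the largest contribution'' discretization matches the paper's iteration over candidate values of $b=\max_{a\in S^\star}f(\{a\})$.
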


At a high level, our proof for profit maximization relies on discretizing the function $f$ and minimizing payments using the dynamic programming approach as in \cite{duetting2022multi}, while adapting the core ideas to the combinatorial-actions setting. For reward and welfare maximization, \cite{feldman2025budget} noted that these problems can be reduced to the \textsc{Knapsack} problem. However, this is not the case in the combinatorial-actions setting, and so we use the ideas from profit maximization to solve those problems as well.

Fix an additive function $\varphi : 2^T \to [0,1]$ with $\varphi(\emptyset) = 0$ and a real number $b \in [0,1]$.
We will later specify $\varphi(S)$ to be either the reward $f(S)$ or the welfare $f(S)-c(S)$.
We begin by defining a discretization $\tilde{\varphi}$ of $\varphi$. 
Let $\delta = \epsilon/|T|$. Define $\tilde{\varphi}(S) = \sum_{a \in S} \lfloor \varphi(\{a\})/(\delta b)\rfloor (\delta b)$. Note that $\tilde{\varphi}(S)$ is a multiple of $\delta b$ for every $S$.

For each $j \in \{0, \ldots, n\}$ and $x \in \{0, \delta b, 2 \delta b, \ldots, \lceil |T|/\delta \rceil \delta b\}$, we define
\begin{align*}
    A^{(\varphi)}(j,x) = \min_{S, \noindexcontract} \left\{ \sum_{i \in A} \icontract \mid \tilde{\varphi}(S) \geq x, S \in \nash(\noindexcontract), S \subseteq T_1 \cup \ldots \cup T_j \right\}.
\end{align*}
This table can be computed in polynomial time via dynamic programming, as we show below.

\begin{lemma}
    The table $A^{(\varphi)}$ can be computed in polynomial time in $|T|$ and $\epsilon$.
\end{lemma}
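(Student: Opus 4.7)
The plan is to exploit the fact that under additive $f$, each agent's best-response problem decouples from the others. Agent $j$'s utility is $\someindexcontract_j \cdot f(S_j \sqcup S_{-j}) - c(S_j)$, and by additivity of $f$ the term involving $S_{-j}$ is a constant from agent $j$'s perspective, so the (pointwise maximal) best response of agent $j$ to contract $\someindexcontract_j$ is $S_j(\someindexcontract_j) = \{a \in T_j : \someindexcontract_j\, f(\{a\}) \ge c_a\}$, independent of the other agents' actions. Consequently, any $(\noindexcontract, S) \in \setofcontracts$ with $S \subseteq T_1 \cup \cdots \cup T_j$ decomposes into per-agent choices $(\someindexcontract_k, S_k)$ for $k \le j$ (together with $\icontract = 0$ and $S_i = \emptyset$ for $i > j$, which is always a valid equilibrium response), and both $\sum_i \icontract$ and $\tilde{\varphi}(S) = \sum_k \tilde{\varphi}(S_k)$ split additively across agents.

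I would first precompute, for each agent $j$, the list of candidate best-response sets. Sorting the critical thresholds $\rho_a := c_a / f(\{a\})$ for $a \in T_j$ produces a nested chain of at most $|T_j|+1$ sets $\emptyset = S_j^{(0)} \subsetneq S_j^{(1)} \subsetneq \cdots$ such that $S_j(\someindexcontract_j)$ equals some $S_j^{(k)}$ for every $\someindexcontract_j \in [0,1]$. The minimum contract incentivizing $S_j^{(k)}$ as part of a best response is $\beta(S_j^{(k)}) = \max_{a \in S_j^{(k)}} \rho_a$. Then I would fill in the table via the recursion
\begin{align*}
A^{(\varphi)}(j, x) = \min_{k} \Bigl\{ A^{(\varphi)}\bigl(j-1,\; \max(0,\, x - \tilde{\varphi}(S_j^{(k)}))\bigr) + \beta(S_j^{(k)}) \Bigr\},
\end{align*}
with base case $A^{(\varphi)}(0, 0) = 0$ and $A^{(\varphi)}(0, x) = +\infty$ for $x > 0$. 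Correctness is immediate from the decoupling observation, since the per-agent best-response condition is the only equilibrium constraint and both the objective and $\tilde{\varphi}$ decompose additively across agents.

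For the complexity, the table has $n+1$ rows and $O(|T|/\delta) = O(|T|^2/\epsilon)$ columns, since $x$ ranges over multiples of $\delta b$ up to $\lceil |T|/\delta\rceil\delta b$. Each transition considers at most $|T_j|+1$ candidate sets for agent $j$, and after precomputing all of $\tilde{\varphi}(S_j^{(k)})$ and $\beta(S_j^{(k)})$ in total time $O(|T|^2)$ using value queries, each transition costs $O(1)$. The total running time is thus polynomial in $|T|$ and $1/\epsilon$. The only subtlety is that at $\someindexcontract_j = \rho_a$ both including and excluding action $a$ are best responses; this is harmless, because the maximal best response (which is what $S_j^{(k)}$ encodes) is itself a valid Nash equilibrium and only weakly increases $\tilde{\varphi}$ at no extra payment, so restricting the DP to the $O(|T_j|)$ nested candidate sets loses nothing.
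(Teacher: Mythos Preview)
Your proof is correct and follows essentially the same approach as the paper: both exploit additivity to decouple the agents' best responses, sort actions by the threshold $c_a/f(\{a\})$ so that best responses are prefixes of the sorted action list, and use the same dynamic-programming recursion over agents and discretized values of $\tilde{\varphi}$. Your treatment is slightly more explicit about the complexity analysis and about handling ties and the $\max(0,\cdot)$ clamp in the recursion, but these are refinements rather than a different route.
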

\begin{proof}
Observe that for $j = 0$, we have $A(0, 0) = 0$ and $A(0, x) = \infty$ for all $x > 0$.

Let us now fix $j > 0$. Let $T_j = \{a_1, \ldots, a_k\}$ and assume without loss of generality that $c_{a_1} / f(\{a_1\}) \leq c_{a_2} / f(\{a_2\}) \leq \ldots \leq c_{a_k} / f(\{a_k\})$.
Note that for a given payment $\icontract$, agent $i$'s best response $S_i$ for $S_{-i}$ belongs to:
\begin{align*}
    \argmax_{S_i \subseteq T_i} \left\{ \icontract \cdot f(S_i \sqcup S_{-i}) - c(S_i) \right\}
    &= \argmax_{S_i \subseteq T_i} \left\{ \icontract \cdot f(S_i) + \icontract \cdot f(S_{-i}) - c(S_i)  \right\}  \\
    &= \argmax_{S_i \subseteq T_i}\left\{  \icontract \cdot f(S_i) - c(S_i)   \right\} \\
    &= \argmax_{S_i \subseteq T_i} \left\{ \sum_{a \in S_i} (\icontract \cdot f(\{a\}) - c_a ) \right\}
\end{align*}
Therefore, the agent's best response includes all actions such that $c_a \leq \icontract \cdot f(\{a\})$, or equivalently $c_a / f(\{a\}) \leq \icontract$. In particular, any contract incentivizes a prefix of $\{a_1, \ldots, a_k\}$, and hence:
\begin{align*}
    A^{(\varphi)}(j,x) = \min_{\ell \in \{0, 1, \ldots, k\}} \left\{ A\left(j-1, x-\tilde{\varphi}(\{a_1, \ldots, a_{\ell}\})\right) + \frac{c_{a_\ell}}{f(\{a_{\ell}\})} \right\},
\end{align*}
where we treat the term $c_{a_{\ell}}/f(\{a_{\ell}\})$ as $0$ when $\ell = 0$.
This completes the proof.
\end{proof}

We are now ready to prove the main theorem. Below we present the proof for profit maximization, and we defer the proofs for reward and welfare maximization to \Cref{sec:apxfptas}.

\begin{proof}[Proof of \Cref{thm:fptas} for Profit Maximization]
We prove the existence of an FPTAS for profit maximization under budget $B$.
First, let $(\noindexcontract^\star, S^\star)$ be the optimal contract-equilibrium pair under budget $B$. Let $b = \max_{a \in S^\star} f(\{a\})$. Note that the algorithm does not have access to $b$, as it does not have access to the optimal solution, but there are polynomially many candidate values, and we can iterate over all of them. From now on, we assume that we know the value of $b$. 

Note that by definition $A^{(f)}(n,x)$ is increasing in $x$. 
Let:
\begin{align*}
    \overline{x} = \max_{x \in \{ 0, \delta b, 2 \delta b, \ldots, \lceil |T|/\delta \rceil \delta b\}} \left\{x \mid A^{(f)}(n,x) \leq B \right\} \quad\text{and}\quad x' \in \argmax_{x \in \{0, \delta b, \ldots, \overline{x}\}} (1-A^{(f)}(n,x)) \cdot x.
\end{align*}

Let $(\noindexcontract', S')$ be the contract-equilibrium pair (with respect to the original $f$) that minimizes the sum of payments in the definition of $A^{(f)}(n,x')$. We will argue that $(\noindexcontract', S')$ yields a $(1-\epsilon)$-approximation to the optimal profit, which will imply the theorem.

We begin by observing that
$\tilde{f}(S^\star) \leq \lceil |T|/\delta \rceil \delta b$ and that $\tilde{f}(S^\star) \leq \overline{x}$. For the first inequality, note that $\tilde{f}(S^\star) = \sum_{a \in S^\star} \tilde{f}(\{a\}) \leq |S^\star| \cdot b \leq \lceil |T|/\delta \rceil \delta b$. For the second inequality, we have $\sum_{i \in A} \icontract^\star \leq B$ by the budget-feasibility of the optimal contract. Therefore, by the choice of $x'$, we have:
\begin{align}
    (1 - A^{(f)}(n,\tilde{f}(S^\star))) \cdot \tilde{f}(S^\star) \leq (1 - A^{(f)}(n,x')) \cdot x'. \label{eq:choiceofx}
\end{align}

Moreover, we observe that:
\begin{align}
    \tilde{f}(S^\star) \geq \sum_{a \in S^\star} f(\{a\}) - |S^\star| \cdot \delta b \geq \sum_{a \in S^\star} f(\{a\}) - \epsilon \cdot \max_{a \in S^\star} f(\{a\}) \geq (1-\epsilon) \cdot f(S^\star). \label{eq:tildef}
\end{align}
It follows that:
\begin{align*}
(1-\epsilon) \cdot u_P(\noindexcontract^\star, S^\star) &= (1-\epsilon) \cdot \left(1-\sum_{i \in A} \icontract^\star \right) \cdot f(S^\star) && (\text{by the definition of $u_P$}) \\
&\leq  \left(1-\sum_{i \in A} \icontract^\star \right) \cdot \tilde{f}(S^\star) && (\text{by \Cref{eq:tildef}})\\
&\leq \left(1 - A(n,\tilde{f}(S^\star))\right) \cdot \tilde{f}(S^\star) && (\text{by the definition of $A$}) \\
&\leq \left(1 - A(n,x')\right) \cdot x' && (\text{by \Cref{eq:choiceofx}}) \\
&\leq \left( 1 - \sum_{i \in A} \icontract'\right) \cdot \tilde{f}(S') && (\text{by definition of $(\noindexcontract', S')$}) \\
&\leq \left( 1 - \sum_{i \in A} \icontract'\right) \cdot f(S') && (\text{since $\tilde{f}(S) \leq f(S)$ for all $S$}) \\
&= u_P(\noindexcontract', S') && (\text{by the definition of $u_P$})
\end{align*}
Therefore, our algorithm returns a $(1-\epsilon)$-approximation, which completes the proof.
\end{proof}

\bibliographystyle{alpha}
\bibliography{refs}

\appendix

\section{FPTAS for Single-Agent Combinatorial-Actions Instances }\label{sec:mulimulti_singleAgentFPTAS}
\newcommand{\logEpsTau}{\ceil{\log_{(1-\eps)}\tau}}

The special case of contracting a single agent with combinatorial actions, i.e., where the instance $\multiInstance$ satisfies $|A|=1$, was introduced by \cite{dutting2022combinatorial}, and is of independent interest.
Note that for a single agent, a budget-feasible contract $\noindexcontract$ is simply a scalar $\alpha \in [0,B]$, the amount transferred to the agent upon the success of the project.
In response to a contract, the agent may pick any set of actions $S \subseteq T$. 
An FPTAS for maximizing the principal's profit for any monotone $f$ was given in \cite{multimulti} for the (implicit) budget, $B=1$. We adapt their algorithm and some of the arguments to accommodate any budget $B \le 1$.

\begin{theorem}\label{thm:single-agent-fptas}
    Let $\multiInstance$ be a single agent instance, i.e. $|A|=1$, with a monotone reward function $f$. Let $B \in [0,1]$ be any budget.
    Given access to a demand oracle, \Cref{alg:SingleAgentFPTAS} is an FPTAS for the principal's profit under budget $B$.
\end{theorem}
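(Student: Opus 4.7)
The plan is to adapt the geometric-grid FPTAS of \cite{multimulti} (designed for the unbudgeted case $B=1$) to arbitrary $B\in[0,1]$ by discretizing the \emph{retention factor} $(1-\alpha)$ multiplicatively on the interval $[1-B,1]$ and explicitly including $B$ as a candidate contract. Concretely, \Cref{alg:SingleAgentFPTAS} should enumerate the grid
\[
\Gamma \;=\; \{\,1-(1-\eps)^k \;:\; k \geq 1,\ 1-(1-\eps)^k \leq B\,\} \;\cup\; \{B\},
\]
issue one demand query at each $\alpha\in\Gamma$ to obtain the agent's best response $S_\alpha \in \argmax_{S\subseteq T}\{\alpha f(S)-c(S)\}$, and return the pair $(\alpha, S_\alpha)$ that maximizes the empirical profit $(1-\alpha) f(S_\alpha)$. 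When $B<1$ the cardinality of $\Gamma$ is $O(\log(1/(1-B))/\eps)$, which is polynomial in $1/\eps$ and the bit-complexity of $B$; the degenerate case $B=1$ is handled by truncating $\Gamma$ at $\alpha \leq 1-\tau$ for $\tau=\mathrm{poly}(\eps,1/n)$, exactly as in \cite{multimulti}.

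The structural workhorse is a best-response monotonicity lemma for the single agent: if $0 \leq \alpha_2 \leq \alpha_1$ and $S_i$ is a best response at contract $\alpha_i$, then $f(S_1) \geq f(S_2)$. This follows from a standard exchange argument: the two best-response inequalities give
\[
\alpha_2(f(S_1)-f(S_2)) \;\leq\; c(S_1)-c(S_2) \;\leq\; \alpha_1(f(S_1)-f(S_2)),
\]
so $(\alpha_1-\alpha_2)(f(S_1)-f(S_2)) \geq 0$. No assumption on $f$ beyond being a set function is needed.

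For the approximation guarantee, fix an optimal budget-feasible pair $(\alpha^\star, S^\star)$. The core step is to exhibit $\tilde\alpha \in \Gamma$ with both $\tilde\alpha \geq \alpha^\star$ and $(1-\tilde\alpha) \geq (1-\eps)(1-\alpha^\star)$. In the interior regime, where $\alpha^\star \in [1-(1-\eps)^{k-1},\,1-(1-\eps)^k]$ for some $k$ with $1-(1-\eps)^k \leq B$, take $\tilde\alpha = 1-(1-\eps)^k$; then $1-\tilde\alpha = (1-\eps)^k \geq (1-\eps)(1-\alpha^\star)$. In the boundary regime, where $\alpha^\star$ exceeds the largest geometric grid point that lies below $B$, take $\tilde\alpha=B$; the maximality of the geometric portion of $\Gamma$ then forces $(1-B) \geq (1-\eps)(1-\alpha^\star)$. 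In either case, the monotonicity lemma gives $f(S_{\tilde\alpha}) \geq f(S^\star)$, so
\[
(1-\tilde\alpha)\, f(S_{\tilde\alpha}) \;\geq\; (1-\eps)(1-\alpha^\star) f(S^\star) \;=\; (1-\eps)\, u_P(\alpha^\star, S^\star),
\]
delivering the claimed $(1-\eps)$-approximation.

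The main obstacle I anticipate is the boundary case $\alpha^\star \approx B$: the natural multiplicative grid alone produces only infeasible grid points above $B$, so one must add $\alpha = B$ explicitly to $\Gamma$ and argue separately that this single point suffices in that regime. A secondary subtlety, relevant only when $B=1$, is bounding the contribution of optimal contracts $\alpha^\star$ extremely close to $1$, where the optimal profit itself is tiny; this is handled precisely as in \cite{multimulti} via the truncation threshold $\tau$, which turns the lost contribution into a small multiplicative slack against the best profit recovered elsewhere on $\Gamma$.
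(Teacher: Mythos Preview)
Your proposal differs substantively from the paper's \Cref{alg:SingleAgentFPTAS}. The paper does \emph{not} use a single geometric grid on $(1-\alpha)$ truncated at a universal $\tau$; instead it first computes $SW_B=\max_{S\in T_B}(f(S)-c(S))$ with one demand query and then, for each action $j$, tries the contracts $\alpha_{j,k}=\min\{B,\,1-(1-\eps)^{k+1}\tfrac{SW_B}{c_j+SW_B}\}$ for $k=0,\dots,\lceil\log_{1/(1-\eps)} m2^m\rceil-1$. Correctness rests on \Cref{lem:single_agent_opt_bounds}, which sandwiches $1-\alpha^\star$ between $\tfrac{SW_B}{m2^m(c_{j^\star}+SW_B)}$ and $\tfrac{SW_B}{c_{j^\star}+SW_B}$; this window has multiplicative width $m2^m$, so a geometric grid of length $O(m/\eps)$ anchored at the right endpoint suffices, uniformly in $B$. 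Your single-agent monotonicity lemma is exactly the paper's \Cref{obs:f_increase_with_alpha}.

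For $B<1$ your simpler scheme is correct and arguably cleaner: the grid $\{1-(1-\eps)^k\le B\}\cup\{B\}$ has $O(\log(1/(1-B))/\eps)$ points, your two-case analysis covers every $\alpha^\star\le B$, and the running time is polynomial in the bit-length of $B$. This is a genuinely different route that trades away the $SW_B$/$c_j$ anchoring for a direct bound via the budget itself.

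The gap is at $B=1$. Your claim that \cite{multimulti} truncates at $\tau=\mathrm{poly}(\eps,1/n)$ and that the ``best profit recovered elsewhere on $\Gamma$'' absorbs the loss is incorrect on both counts. Consider a single action $a$ with $f(\{a\})=1$ and $c_a=1-\delta$ for $\delta\ll\tau$: the unique nontrivial equilibrium requires $\alpha\ge 1-\delta$, the optimal profit is exactly $\delta$, and every contract in your truncated grid $\Gamma\subseteq[0,1-\tau]$ leaves the agent idle, yielding profit $0$. There is no ``elsewhere'' to recover from. This is precisely why \cite{multimulti} and the paper anchor the grid at $1-\tfrac{SW_B}{c_j+SW_B}$ per action rather than at a cost-independent threshold: the lower bound on $1-\alpha^\star$ in \Cref{lem:single_agent_opt_bounds} genuinely depends on the ratio $SW_B/c_{j^\star}$, which can be arbitrarily small relative to any $\mathrm{poly}(\eps,1/n)$.
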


First we introduce some notation: let $\multiInstance$ be an instance with $|A|=1$ and $|T|=m$.
For any contract $B$, let $T_B$ be the collection of \emph{budget-feasible actions}, i.e., subsets of actions which can be incentivized using a budget-feasible contract, $T_B = \{S \subseteq T \mid \exists\alpha \le B \; \text{ s.t. } S \in \nash(\alpha)\}$. 
We also denote the maximal welfare achievable under this collection by $SW_B = \max_{S \in T_B} f(S)-c(S)$.

The FPTAS uses a discretization of the space $[0,B]$. However, to ensure that the running time is polynomial in $m$, it is also required to bound the optimal contract away from 1.
To show this for the case where $B=1$, \cite{multimulti} use the result of \cite{dutting2019simple}, which lower bounds the optimal profit as a function of the number of actions, $2^m$, and the optimal social welfare.
By considering only the budget-feasible actions of $T_B$, the bound can be trivially adapted to the any budget $B \le 1$.

\begin{observation}[\cite{dutting2019simple}]
    Let $\multiInstance$ be an instance with $|A|=1$, $|T|=m$, and a monotone reward function $f: 2^T \to \reals_+$. Let $B \in [0,1]$ be any budget.
    Then, there exists a contract $\alpha \le B$ which achieves profit at least $\frac{SW_B}{2^m}$
\end{observation}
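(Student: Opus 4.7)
The plan is to deduce this observation directly from the single-agent linear contract result of~\cite{dutting2019simple}, applied with each subset $S\subseteq T$ playing the role of a ``composite action'' with reward $f(S)$ and cost $c(S)$. Since $|T|=m$, there are at most $2^m$ such composite actions, and the budget-feasible ones form the set $T_B$ of size at most $2^m$. The $1/2^m$ factor in the statement then coincides with the standard ``$1/n$'' guarantee for linear contracts with $n$ actions, instantiated at $n=2^m$.

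Concretely, I would first parameterize the agent's best response $S(\alpha)$ as $\alpha$ varies over $[0,B]$. Since the agent's utility $u(\alpha)=\max_{S\subseteq T}\{\alpha f(S)-c(S)\}$ is convex and piecewise linear, its best-response function partitions $[0,B]$ into at most $|T_B|\le 2^m$ maximal intervals. Let $S_1,\ldots,S_N\in T_B$ denote the distinct best responses (ordered by $f(S_i)$), with corresponding minimum incentivizing contracts $\alpha_1<\cdots<\alpha_N\le B$. The principal's profit at the $i$-th critical contract equals $\pi_i=(1-\alpha_i)f(S_i)$, and satisfies the standard accounting identity $\pi_i=(f(S_i)-c(S_i))-u_i$, where $u_i=\alpha_i f(S_i)-c(S_i)$ is the agent's utility at $\alpha_i$.

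Second, I would invoke the DRT-style argument on this composite-action profile: at the welfare-maximizer $S^*=S_{i^*}\in T_B$ with $SW_B=f(S^*)-c(S^*)$, we have $\pi_{i^*}=SW_B-u_{i^*}$. If $u_{i^*}\le SW_B(1-1/2^m)$, we are done. Otherwise, the agent captures nearly all of $SW_B$ at $\alpha_{i^*}$; since $u(\cdot)$ is convex with $u(0)=0$ and has at most $2^m$ pieces, a pigeonhole argument over the pieces produces some earlier critical contract $\alpha_j<\alpha_{i^*}$ whose welfare-minus-utility gap is at least $SW_B/2^m$. Budget-feasibility of the witnessing contract is automatic, since every $\alpha_i\le B$ by the definition of $T_B$.

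The main obstacle is carrying out the averaging/pigeonhole step cleanly in the budget-constrained setting: one must ensure that restricting $\alpha$ to $[0,B]$ does not invalidate the DRT counting argument (since the agent still optimizes over all $2^m$ subsets, not just those in $T_B$). The key observation to handle this is that for any $\alpha\le B$, the agent's best response already lies in $T_B$ by the very definition of $T_B$, so restricting contracts to $[0,B]$ is equivalent to running the DRT argument on the $|T_B|\le 2^m$ composite actions that can be incentivized within budget.
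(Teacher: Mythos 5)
Your approach matches the paper's: the observation is attributed to \cite{dutting2019simple}, and the paper's only proof is the one-line remark that the original DRT bound ``can be trivially adapted'' by considering only the budget-feasible actions in $T_B$. You correctly identify the two ingredients this comment is compressing---treating subsets of $T$ as composite actions, and observing that for $\alpha\in[0,B]$ the agent's best response automatically lands in $T_B$, so the critical-contract sequence has at most $|T_B|\le 2^m$ pieces.

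One small caveat on the middle step: the case split on $u_{i^\star}$ and the ``pigeonhole over the pieces of $u$'' are a somewhat loose rendering of the DRT argument, and as literally stated the case-2 bound leaks a factor. The clean version avoids the split: writing $\alpha_0=0<\alpha_1<\cdots<\alpha_{i^\star}\le B$ for the critical contracts with best responses $S_0,\ldots,S_{i^\star}$, one has $u_{i^\star}=\sum_{j=1}^{i^\star}(\alpha_j-\alpha_{j-1})f(S_{j-1})\le\sum_{j=1}^{i^\star}(1-\alpha_{j-1})f(S_{j-1})=\sum_{j=0}^{i^\star-1}\pi_j$, hence $SW_B=\pi_{i^\star}+u_{i^\star}\le\sum_{j=0}^{i^\star}\pi_j$, and averaging over the at most $2^m$ terms gives the claim. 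This is the same idea you are reaching for---telescoping the agent's utility instead of the welfare---so I regard the proposal as substantively correct; it just needs that last inequality made explicit in place of the informal pigeonhole phrasing.
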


Using the above, \cite{multimulti} have established the following lemma, which can also be applied for the case of $B < 1$. The proof is identical to that of \cite{multimulti}.

\begin{lemma}[\cite{multimulti}]\label{lem:single_agent_opt_bounds}
    Let $\multiInstance$ be an instance with $|A|=1$, $|T|=m$, and let $(\alpha^\star,S^\star)$ be the contract and equilibrium which maximize the principal's profit under budget $B \le 1$.
    Denote $j^\star = \argmax_{j \in S^\star} c_j$.
    It holds that,
    $$
    1-\frac{SW_B}{c_{j^\star}+SW_B}
    \le
    \alpha^\star
    \le 
    1-\frac{SW_B}{m\cdot 2^m(c_{j^\star}+SW_B)}
    $$
\end{lemma}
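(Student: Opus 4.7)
The plan is to prove the two inequalities separately and sequentially. The lower bound will come from the individual-rationality constraint implicit in $S^\star$ being a best response to $\alpha^\star$; the upper bound will come from plugging the cited observation ($OPT \ge SW_B/2^m$) into the identity $OPT = (1-\alpha^\star)\,f(S^\star)$, together with an upper bound on $f(S^\star)$.

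First I would establish the lower bound. Since $S^\star$ is a best response to $\alpha^\star$, it is in particular weakly preferred to $\emptyset$, which gives $\alpha^\star \cdot f(S^\star) \geq c(S^\star)$, hence $\alpha^\star \geq c(S^\star)/f(S^\star)$. Since $S^\star \in T_B$ (witnessed by $\alpha^\star \le B$), the bound $f(S^\star)-c(S^\star) \le SW_B$ rearranges to $f(S^\star) \le c(S^\star)+SW_B$. Substituting,
\[
\alpha^\star \;\ge\; \frac{c(S^\star)}{c(S^\star)+SW_B}.
\]
The map $x \mapsto x/(x+SW_B)$ is monotone increasing in $x\ge 0$, and $c(S^\star) \ge c_{j^\star}$ since $j^\star \in S^\star$ and all costs are non-negative, so
\[
\alpha^\star \;\ge\; \frac{c_{j^\star}}{c_{j^\star}+SW_B} \;=\; 1 - \frac{SW_B}{c_{j^\star}+SW_B}.
\]

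Next I would prove the upper bound. Combining the cited observation with $OPT = (1-\alpha^\star)\,f(S^\star)$ gives $1-\alpha^\star \ge SW_B / (2^m\, f(S^\star))$, so it suffices to upper bound $f(S^\star)$ by $m(c_{j^\star}+SW_B)$. Reusing $f(S^\star)\le c(S^\star)+SW_B$ from the previous step, and noting that $c(S^\star) = \sum_{j \in S^\star} c_j \le |S^\star| \cdot c_{j^\star} \le m\cdot c_{j^\star}$ (since $j^\star$ attains the max cost in $S^\star$), we get
\[
f(S^\star) \;\le\; m \cdot c_{j^\star} + SW_B \;\le\; m\,(c_{j^\star}+SW_B),
\]
hence $1-\alpha^\star \ge SW_B / \bigl(m\cdot 2^m\,(c_{j^\star}+SW_B)\bigr)$, which is exactly the stated upper bound.

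The main subtlety lies in the lower bound step: one must resist the tempting shortcut of trying to bound $f(S^\star) \le c_{j^\star}+SW_B$ directly (which in general fails when $|S^\star|>1$), and instead keep $c(S^\star)$ on both sides of the fraction $c(S^\star)/(c(S^\star)+SW_B)$ before invoking monotonicity in the numerator. Once this is handled, everything reduces to elementary algebra and I do not expect a further obstacle.
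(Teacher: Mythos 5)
Your proof is correct and is essentially the standard derivation that the paper implicitly defers to (it cites the proof as ``identical to that of \cite{multimulti}'' without spelling it out). The lower bound follows cleanly from the agent's IR constraint $\alpha^\star f(S^\star) \ge c(S^\star)$, the welfare bound $f(S^\star) \le c(S^\star) + SW_B$ (valid since $S^\star \in T_B$), and monotonicity of $x \mapsto x/(x+SW_B)$ together with $c(S^\star) \ge c_{j^\star}$; the upper bound follows from plugging the cited $SW_B/2^m$ profit lower bound into $(1-\alpha^\star)f(S^\star) = \textsc{Opt}$ and bounding $f(S^\star) \le c(S^\star) + SW_B \le m\,c_{j^\star} + SW_B \le m(c_{j^\star}+SW_B)$. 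Your caveat about keeping $c(S^\star)$ in both numerator and denominator before invoking monotonicity is also well placed --- directly substituting $c_{j^\star}$ for $c(S^\star)$ in the welfare bound would be unsound.
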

To prove the correctness of \Cref{alg:SingleAgentFPTAS}, we use the following observation made in \cite{dutting2022combinatorial}.
\begin{observation}[\cite{dutting2022combinatorial}]\label{obs:f_increase_with_alpha}
Consider a single agent, combinatorial-actions setting with a monotone reward function $f$.
Let $\alpha < \alpha'$ be two contracts such that $S_\alpha \subseteq T$ maximizes the agent's utility for contract $\alpha$, and $S_{\alpha'}\subseteq T$ maximize the utility for $\alpha'$, then $f(S_\alpha) \le f(S_{\alpha'})$.
Moreover, if $S_\alpha \ne S_{\alpha'}$, then $f(S_\alpha) < f(S_{\alpha'})$ and $c(S_\alpha) < c(S_{\alpha'})$.
\end{observation}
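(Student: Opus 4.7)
The plan is to carry out a standard revealed-preference (incentive-compatibility) argument of the Topkis/monotone-comparative-statics flavor. Since $S_\alpha$ maximizes the agent's utility $\alpha f(\cdot) - c(\cdot)$ and $S_{\alpha'}$ maximizes $\alpha' f(\cdot) - c(\cdot)$, I would first write down the two IC inequalities,
\begin{align*}
\alpha \cdot f(S_\alpha) - c(S_\alpha) &\ge \alpha \cdot f(S_{\alpha'}) - c(S_{\alpha'}), \\
\alpha' \cdot f(S_{\alpha'}) - c(S_{\alpha'}) &\ge \alpha' \cdot f(S_\alpha) - c(S_\alpha).
\end{align*}
Rearranging each inequality produces the ``sandwich'' chain
$$\alpha \cdot \bigl(f(S_{\alpha'}) - f(S_\alpha)\bigr) \;\le\; c(S_{\alpha'}) - c(S_\alpha) \;\le\; \alpha' \cdot \bigl(f(S_{\alpha'}) - f(S_\alpha)\bigr).$$
Comparing the two ends of this chain gives $(\alpha' - \alpha) \cdot \bigl(f(S_{\alpha'}) - f(S_\alpha)\bigr) \ge 0$, and since $\alpha' > \alpha$ by hypothesis this immediately yields the weak inequality $f(S_\alpha) \le f(S_{\alpha'})$.

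For the ``moreover'' part I would argue by contradiction. Suppose $S_\alpha \ne S_{\alpha'}$ yet $f(S_\alpha) = f(S_{\alpha'})$. The sandwich chain then collapses to $c(S_\alpha) = c(S_{\alpha'})$, so the two sets yield identical agent utility $\alpha'' f(S) - c(S)$ for \emph{every} value of $\alpha''$. Under the standard convention that the agent's best response is selected by a consistent (principal-favoring) tie-breaking rule, two utility-equivalent best responses are identified as the same selected set, contradicting $S_\alpha \ne S_{\alpha'}$. Therefore $f(S_\alpha) < f(S_{\alpha'})$. The strict cost inequality then follows by plugging this into the lower bound of the sandwich: for $\alpha > 0$ we get $c(S_{\alpha'}) - c(S_\alpha) \ge \alpha \cdot (f(S_{\alpha'}) - f(S_\alpha)) > 0$. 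For the boundary case $\alpha = 0$ the agent's utility reduces to $-c(S) \le 0$ with equality only at $S = \emptyset$, so $S_\alpha = \emptyset$; since $f(\emptyset) = 0 < f(S_{\alpha'})$ forces $S_{\alpha'} \ne \emptyset$, we again obtain $c(S_\alpha) = 0 < c(S_{\alpha'})$.

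The main obstacle I anticipate is the tie-breaking subtlety in the ``moreover'' step: without a canonical selection rule among equal-utility best responses, two formally distinct sets $S_\alpha \ne S_{\alpha'}$ could in principle share identical $f$- and $c$-values, and the sandwich argument alone would not rule that out. I would handle this by stating the canonical tie-breaking convention explicitly at the start of the proof (as is standard in the combinatorial-contracts literature, e.g. \cite{dutting2022combinatorial}) and noting that the uses of Observation \ref{obs:f_increase_with_alpha} in Algorithm \ref{alg:SingleAgentFPTAS} and Lemma \ref{lem:single_agent_opt_bounds} only ever invoke this monotonicity up to such an equivalence, so the convention is without loss.
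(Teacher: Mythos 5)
The paper does not prove this observation; it is imported from \cite{dutting2022combinatorial}, so I compare your argument to the standard one there. Your revealed-preference derivation for the weak inequality $f(S_\alpha)\le f(S_{\alpha'})$ is exactly the intended proof: write the two IC inequalities, rearrange into the sandwich $\alpha(f(S_{\alpha'})-f(S_\alpha))\le c(S_{\alpha'})-c(S_\alpha)\le\alpha'(f(S_{\alpha'})-f(S_\alpha))$, and compare endpoints. That part is airtight.

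You are also right to flag that the ``moreover'' clause, read literally, is false without a tie-breaking convention: if two distinct sets $S\ne S'$ satisfy $f(S)=f(S')$ and $c(S)=c(S')$, they are both optimal at every $\alpha$, and choosing $S_\alpha=S$, $S_{\alpha'}=S'$ violates strictness. A consistent selection rule (as in \cite{dutting2022combinatorial}) repairs this, and that is indeed the intended reading. However, your handling of the boundary case $\alpha=0$ is not quite right: the model only assumes $c_j\ge 0$, so zero-cost actions are allowed, and then ``$-c(S)\le 0$ with equality only at $S=\emptyset$'' fails, as does ``$S_{\alpha'}\ne\emptyset\Rightarrow c(S_{\alpha'})>0$.'' That said, the $\alpha=0$ case is already covered by the same tie-breaking argument you gave in the preceding paragraph (any two zero-cost best responses at $\alpha=0$ that remain optimal at $\alpha'$ yield identical utility everywhere and are thus identified), so the separate sub-argument is superfluous and could simply be dropped. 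Finally, note that the paper only invokes the weak inequality in \Cref{lem:single_agent_opt_bounds} and \Cref{alg:SingleAgentFPTAS}, so the strict clause and its tie-breaking caveat are not load-bearing for the results here.
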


We are now ready to prove \Cref{thm:single-agent-fptas}.

\begin{proof}[Proof of \Cref{thm:single-agent-fptas}]
First, as an immediate corollary of \Cref{obs:f_increase_with_alpha}, $SW_B = f(S^\dagger) - c(S^\dagger)$, for $S^\dagger$ computed in line 2 of the algorithm. Note that computing $S^\dagger$ requires a single demand query.

Let $\alpha^\star, S^\star$ be the pair of budget-feasible contract and set of actions which maximizes the principal's profit. 
Let $j^\star = \argmax_{j \in S^\star} c_j$.
By \Cref{lem:single_agent_opt_bounds}, the optimal budget-feasible contract, $\alpha^\star$, satisfies 
$$
1-\frac{SW_B}{c_{j^\star}+SW_B}
\le
\alpha^\star
\le 
\min\left\{ B, 1-\frac{SW_B}{m\cdot 2^m(c_{j^\star}+SW_B)} \right\}
$$
Consider the iteration in which $j=j^\star$ in line 4.
It holds that, 
$1-\alpha_{j,\log_{1/(1-\eps)} m\cdot 2^m} \le 1-\alpha^\star \le \frac{1-\alpha_{j,0}}{1-\eps}$. To see that, observe that for $k=0$, we have
$\frac{1-\alpha_{j,0}}{1-\eps}=\frac{SW_B}{c_j+SW_B} \ge 1-\alpha^\star$.
On the other hand, for $k=\log_{1/(1-\eps)} m\cdot 2^m$, we have
$$
1-\alpha_{j,k} =
\max\left\{1-B, (1-\eps)\frac{SW_B}{m\cdot 2^m(c_j+SW_B)} \right\}
\le
\max\left\{1-B, \frac{SW_B}{m\cdot 2^m(c_j+SW_B)} \right\}
\le 
1-\alpha^\star
$$

As $(1-\alpha_{j,k})(1-\eps) \le 1-\alpha_{j,k+1}$, there must be $\alpha_{j,k}$ such that $1-\alpha_{j,k}  \le 1-\alpha^\star \le \frac{1-\alpha_{j,k}}{1-\eps}$.
For this $\alpha_{j,k}$, it holds that $\alpha_{j,k} \ge \alpha^\star$, and by \Cref{obs:f_increase_with_alpha}, $f(S_{j,k}) \ge f(S^\star)$. We can conclude that
$$
(1-\hat{\alpha})f(\hat{S}) \ge (1-\alpha_{j,k})f(S_{j,k}) \ge (1-\eps)(1-\alpha^\star)f(S_{j,k})
\ge (1-\eps)(1-\alpha^\star)f(S^\star),
$$
which concludes the proof.
\end{proof}

\begin{algorithm}[h]
\caption{FPTAS For Single Agent}\label{alg:SingleAgentFPTAS}
\SetAlgoLined
\KwIn{Instance $\multiInstance$ with $|A|=1$, budget $B\le 1$, $\eps\in (0,1)$}
\KwOut{contract $\alpha$ which achieves a $(1-\eps)$-approximation to the optimal profit}
$\hat{\alpha} \gets 0, \hat{S} \gets \emptyset$\;
$S^\dagger \gets \argmax_{S \subseteq T} B\cdot f(S)-c(S)$;

$SW_B \gets f(S^\dagger) - c(S^\dagger)$;

\For{$j \in T$ with $c_j>0$}{
\For{$k = 0\dots \lceil\log_{1/(1-\eps)}m\cdot 2^m\rceil-1$}{
$\alpha_{j,k}\gets \min\{B, 1-(1-\eps)^{k+1}\cdot \frac{SW_B}{c_j + SW_B}$\}\;
$S_{j,k} \gets \argmax_{S \subseteq T}(\alpha_{j,k} f(S) - \sum_{j \in S}c_j)$\;

\If{$(1-\alpha_{j,k})f(S_{j,k}) \ge (1-\hat{\alpha})f(\hat{S})$}{
$\hat{\alpha} \gets \alpha_{j,k}$, $\hat{S} \gets S_{j,k}$\;
}
}
}
\Return{$\hat{\alpha}$}\;
\end{algorithm}

\section{Downsizing Algorithm for Multi-Agent Combinatorial Actions}\label{apx:downsizing}

In this section we provide a downsizing algorithm for the multi-agent combinatorial-actions setting with gross substitutes $f$. 
\cite{feldman2025budget} established a similar downsizing lemma for the multi-agent binary-actions setting with submodular (and even XOS) $f$.
A downsizing lemma specifies how, given a contract $\noindexcontract$ and an equilibrium $S \in \nash(\noindexcontract)$, one can reduce the agents' payments while preserving a fraction of the reward $f(S)$.

\multimultidownsizing*

Note that while the downsizing lemma of \cite{feldman2025budget} for the binary-action setting applies to a broader class of reward functions, in our combinatorial-actions model it is impossible to downsize any set of actions when $f$ is submodular, as shown by our construction in \Cref{sec:inapprox_submodular}. In particular, the set of agents $A'$ together with agent $n+1$ performing $\goodaction$ cannot be incentivized with lower payments, and the reward from any strict subset of these agents fails to approximate $f(A' \cup \goodaction)$.

Even if we restrict $f$ to be gross substitutes, proving the downsizing lemma for the combinatorial-actions setting introduces significant challenges.
Assume we receive a contract $\noindexcontract$ and a set of actions $\bigsqcup_{i\in \agents} S_i = S \in \nash(\noindexcontract)$.
Our goal is to find a contract $\noindexcontract'$ which only pays a subset of the agents $U \subseteq \agents$ and has an equilibrium $S'$ in which for each $i \in U$, performs a (weak) superset of her previous actions, i.e., $S'_i \supseteq S_i$.
In the binary-actions setting, the 
taking $\noindexcontract' = \noindexsubcontract{U}$ trivially yields the above property.
In the combinatorial actions case, this does no longer hold and we need to use the doubling lemma of \cite{multimulti} (\Cref{lem:doubling}) to overcome this. Also, incentivizing a single agent is not as straight-forward as in the binary-actions case, and we utilize the key property of gross substitutes (\Cref{lem:gs_single_agent}) to do so.

\begin{algorithm}[t]
\caption{Downsizing Algorithm for GS Reward}\label{alg:budget_scaling_multimulti}
\SetAlgoLined
\KwIn{integer $M \geq 3$, $(\noindexcontract,S) \in \setofcontracts$, and value oracle access to $f$}
\KwOut{$(\noindexcontract',S') \in \setofcontracts$ with $f(S') \geq f(S)/(M-1)$ and either $\sum_{i \in \agents} \icontract' \leq (5/M)\cdot \sum_{i \in \agents} \icontract'$ or $S'\subseteq T_{i'}$ for some $i' \in \agents$}
set $p \gets \sum_{i \in \agents} \icontract$\;
set $Z \gets \{i \in \agents \mid \icontract > p/M\}$\; 
\If{$f(S_i) \ge (1/(M-1)) \cdot f(S)$ for some $i \in Z$}{\label{line:singleton_if_multimulti}
let $p_a = c_a/\icontract$ for $a \in T_i$ and $p_a = \infty$ otherwise\;
let $S'_i \supseteq S_i$ be a set in the demand for prices $p$\footnotemark\;
\Return{$(\noindexsubcontract{i}, S'_i)$}\; \label{line:return_singleton_multimulti}
}
let $U \gets A \setminus Z$\;
\For{$r = 1, \ldots, M-|Z|-2$}{\label{line:for_r_multimulti} 
    set $W_r \gets \emptyset$\;
    \While{$U$ is non-empty \textbf{and} $\sum_{j \in W_r} \someindexcontract_k \le (p/M)$}{\label{line:while_u_multimulti}
        choose any agent $i \in U$\;
        $U \gets U \setminus \{i\}$\;
        $W_r \gets W_r \cup \{i\}$\;\label{line:add_agent_multimulti}
    }
    \If{$f(W_r) \ge (1/(M-1)) \cdot f(S)$}{\label{line:second_if_multimulti}   
        let $U\gets W_r$\;\label{line:Wr_is_good_multimulti}
        \textbf{break}\;
    }
}
let $\eps \gets (1/(n\cdot M)) \cdot \sum_{i\in \agents} \icontract$\; \label{line:eps_multimulti}
let $\noindexcontract' \gets 2 \cdot \noindexsubcontract{U} +\epsilonvec$\; \label{line:contract_multimulti}
let $S'$ be the result of a demand query to $f$ with prices $p_a = c_a/\icontract'$ for $a\in \actions_i$ and $i \in \agents$\;
\Return{$(\noindexcontract', S')$}\;\label{line:return_multimulti}
\end{algorithm}
\footnotetext{
By \Cref{lem:individual_incentivization}, there exists $S'_i \supseteq S_i$, which is in demand for prices $p$. To find this set (and not another set in demand), we can use a greedy algorithm to find $ S \subseteq T_i \setminus S_i$, which maximizes $f(S \mid S_i) - \sum_{j \in S} p_j$. 
It is well-known that $f(\cdot \mid S_i)$ is gross substitutes and thus this greedy approach will result in a set $S'_i = S_i \cup S$ satisfying the required conditions \cite{PaesLeme17}.
}

In order to prove \Cref{lem:multi_payment_scaling}, we use a lemma which was proved implicitly in \cite{multimulti}. The lemma roughly states that for submodular $f$ subset-stability (see \Cref{def:subsetstable}) is maintained if we restrict the contract to a subset of the agents.
\begin{lemma}[\cite{multimulti}]\label{lem:subStableDownwards}
    Let $\multiInstance$ be a multi-agent combinatorial-actions contract instance, where $f$ is submodular.
    Let $S=\bigsqcup_{i \in \agents} S_i$ be a subset-stable profile of actions with respect to contract $\noindexcontract$.
    For any subset of agents $G \subseteq A$, 
    it holds that $S_G=S\cap \left(\bigcup_{i\in G} T_i\right)$ is subset stable with respect to the contract $\noindexsubcontract{G}$. 
\end{lemma}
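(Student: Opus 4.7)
The plan is to verify the subset-stability inequality of Definition~\ref{def:subsetstable} for the restricted profile $S_G$ under the restricted contract $\noindexsubcontract{G}$ by a simple case split on whether the agent in question belongs to $G$, and then to reduce the non-trivial case to the original subset-stability of $S$ via a single application of submodularity (the diminishing-returns form). No new machinery beyond Definition~\ref{def:subsetstable} and submodularity of $f$ is needed.

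Fix an agent $i \in A$ and any $S''_i \subseteq (S_G)_i$. Denote the restricted contract by $\noindexcontract' = \noindexsubcontract{G}$, so that $\someindexcontract'_i = \icontract$ when $i \in G$ and $\someindexcontract'_i = 0$ otherwise; and note $(S_G)_i = S_i$ for $i \in G$ while $(S_G)_i = \emptyset$ for $i \notin G$. The case $i \notin G$ is immediate: the only candidate $S''_i \subseteq \emptyset$ is $\emptyset$ itself, and both sides of the required inequality equal $0$. For $i \in G$, the desired inequality becomes
\[
\icontract \cdot f(S_i \sqcup (S_G)_{-i}) - c(S_i) \;\ge\; \icontract \cdot f(S''_i \sqcup (S_G)_{-i}) - c(S''_i),
\]
where $(S_G)_{-i} = \bigsqcup_{j \in G \setminus \{i\}} S_j \subseteq S_{-i}$.

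The hypothesis (subset-stability of $S$ under $\noindexcontract$) applied to agent $i$ and $S''_i \subseteq S_i$ gives
\[
\icontract\bigl[f(S_i \sqcup S_{-i}) - f(S''_i \sqcup S_{-i})\bigr] \;\ge\; c(S_i) - c(S''_i).
\]
It therefore suffices to show that the marginal value of upgrading from $S''_i$ to $S_i$ is (weakly) larger when the other agents' actions shrink from $S_{-i}$ to $(S_G)_{-i}$. Setting $A := S''_i \sqcup (S_G)_{-i}$, $B := S''_i \sqcup S_{-i}$, and $X := S_i \setminus S''_i$ (which is disjoint from $B$ because the action sets $T_j$ are pairwise disjoint), we have $A \subseteq B$, $A \cup X = S_i \sqcup (S_G)_{-i}$, and $B \cup X = S_i \sqcup S_{-i}$. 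Submodularity of $f$ yields
\[
f(S_i \sqcup (S_G)_{-i}) - f(S''_i \sqcup (S_G)_{-i}) \;\ge\; f(S_i \sqcup S_{-i}) - f(S''_i \sqcup S_{-i}),
\]
and multiplying by $\icontract \ge 0$ and chaining with the previous inequality gives exactly the required subset-stability inequality.

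I do not expect a real obstacle: the only subtle point is checking that submodularity is being invoked in the diminishing-returns form on disjoint sets of actions, which is clean here because the sets $T_i$ are pairwise disjoint so $S_i \setminus S''_i$ is automatically disjoint from any combination of $S_{-i}$ and $(S_G)_{-i}$. The argument in fact never uses that the original $S_{-i}$ came from a specific profile; it only uses the inclusion $(S_G)_{-i} \subseteq S_{-i}$, which is immediate from the definition of restriction to $G$.
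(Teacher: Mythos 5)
Your proof is correct and follows essentially the same route as the paper: split on whether $i\in G$, note the $i\notin G$ case is vacuous, and for $i\in G$ combine the subset-stability of $S$ under $\noindexcontract$ with one application of submodularity to compare the marginal value of $X_i = S_i\setminus S''_i$ against the smaller background set $S''_i \sqcup (S_G)_{-i}$ versus $S''_i \sqcup S_{-i}$. The only difference is presentational — the paper phrases the submodularity step directly in terms of $f(X_i\mid\cdot)$, while you unpack it into a difference-of-values inequality — but the underlying argument is identical.
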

\begin{proof}
For any $i \notin G$, the claim trivially holds. 
For $i \in G$, let $S_i'\subseteq S_i$, and denote $X_i = S_i \setminus S'_i$.
It follows that:
\begin{align*}
c(S_i) - c(S_i') = c(X_i) &\le \icontract \cdot f(X_i \mid S'_i \cup S_{-i}) && (\text{by subset stability of $S$ w.r.t. $\noindexcontract$}) \\
&\le \icontract \cdot f(X_i \mid S'_i \cup S_{G\setminus \{i\}}) && (\text{by submodularity of $f$})    
\end{align*}
which concludes the proof.
\end{proof}

We also use the following observation made in \cite{multimulti}: given a contract $\noindexcontract$ one can set prices such that a demand set also form a Nash equilibrium with respect to $\noindexcontract$. In the case where $f$ is gross substitutes, such a demand set can be computed efficiently with value queries to $f$.

\begin{observation}[Claim B.1, \cite{multimulti}]\label{obs:deamnd_implies_NE}
    Fix an instance $\multiInstance$ and a contract $\noindexcontract$. 
    Let $S \subseteq T$ be a set in the demand with respect to prices $p_a = c_a/\icontract$, i.e., $S \in \argmax_{S' \subseteq T} f(S') - \sum_{a \in S'} p_a$.
    Then $S$ is a Nash equilibrium with respect to $\noindexcontract$, i.e., $S \in \nash(\noindexcontract)$.
\end{observation}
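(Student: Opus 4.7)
The plan is to verify the Nash-equilibrium condition one agent at a time. The key observation is that, with the outside profile $S_{-i}$ held fixed, the dependence of the demand objective $f(\cdot) - \sum_{a \in \cdot} p_a$ on agent $i$'s own choice $S_i' \subseteq \actions_i$ coincides, up to a positive rescaling and an additive constant, with agent $i$'s utility $\icontract \cdot f(S_i' \sqcup S_{-i}) - c(S_i')$ under contract $\noindexcontract$. So global optimality of $S$ in the demand automatically forces $S_i$ to be a best response to $S_{-i}$ for every agent.

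To make this precise, fix an agent $i \in \agents$ and first assume $\icontract > 0$. Compare the demand values at $S$ and at any deviation $S_i' \sqcup S_{-i}$. Using $p_a = c_a/\icontract$ for $a \in \actions_i$, the $\sum_{a \in S_{-i}} p_a$ contributions are identical on both sides, so the demand inequality $f(S) - \sum_{a\in S} p_a \ge f(S_i' \sqcup S_{-i}) - \sum_{a \in S_i' \sqcup S_{-i}} p_a$ simplifies to
\[
f(S_i \sqcup S_{-i}) - \frac{c(S_i)}{\icontract} \;\ge\; f(S_i' \sqcup S_{-i}) - \frac{c(S_i')}{\icontract}.
\]
Multiplying through by $\icontract > 0$ yields exactly the best-response inequality $\icontract \cdot f(S_i \sqcup S_{-i}) - c(S_i) \ge \icontract \cdot f(S_i' \sqcup S_{-i}) - c(S_i')$, which is the Nash condition for agent $i$.

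For the degenerate case $\icontract = 0$, one interprets $p_a = c_a/\icontract$ as $+\infty$ for every $a \in \actions_i$, so that any finite-value demand set must satisfy $S_i = \emptyset$; on the agent's side, with zero payment and non-negative costs, $S_i = \emptyset$ is trivially a best response, and the Nash condition holds for free. Running the same argument for every $i \in \agents$ gives $S \in \nash(\noindexcontract)$. The proof is essentially bookkeeping, and the only mild subtlety is handling the $1/\icontract$ rescaling cleanly together with the $\icontract = 0$ corner case; there is no real combinatorial difficulty here.
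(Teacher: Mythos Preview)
Your proof is correct and follows the standard argument: global optimality in the demand problem restricts to optimality over each agent's coordinate, and after cancelling the common $\sum_{a\in S_{-i}} p_a$ term and rescaling by $\icontract$, this is exactly the best-response condition. The paper does not give its own proof of this observation---it is quoted from \cite{multimulti}---but the same reasoning appears in the paper's proof of \Cref{lem:bestresponsemonotonicity}, where the equivalence between agent $i$'s best response and a demand bundle (with respect to prices $c_a/\icontract$ on $\actions_i$) is invoked explicitly.

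One tiny quibble on the $\icontract=0$ case: your convention ``$p_a=+\infty$ for every $a\in\actions_i$'' forces $S_i=\emptyset$, which is indeed a best response, but if some $c_a=0$ the quotient $c_a/\icontract$ is formally $0/0$; under the alternative convention $p_a=0$ the demand set could include such zero-cost actions, and they would still constitute a best response (agent $i$ is indifferent). Either way the conclusion holds, and in the paper's actual applications (e.g., \Cref{alg:budget_scaling_multimulti}) the contract satisfies $\icontract'\ge\eps>0$ for all $i$, so the degenerate case never arises.
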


We are ready to prove \Cref{lem:multi_payment_scaling}.

\begin{proof}[Proof of \Cref{lem:multi_payment_scaling}]
We show that \Cref{alg:budget_scaling_multimulti} satisfies the conditions of \Cref{lem:multi_payment_scaling}.
The fact that the returned pair $(\noindexcontract',S')$ is such that $S' \in \nash(\noindexcontract')$ follows from \Cref{obs:deamnd_implies_NE}, as the set $S'$ is a demand set with respect to prices $p_a=c_a/\icontract$.

We now move to prove that the payment and reward guarantees are also met. 
This trivially holds in the case that the algorithm returns a singleton $\{i\}$ in \Cref{line:return_singleton_multimulti}.
If the algorithm returns a set $W_r$ (i.e., it reached \Cref{line:Wr_is_good_multimulti}), then by the if condition, we have $f(W_r) \geq (1/(M-1)) \cdot f(S)$. Let $i$ be the last agent added to $W_r$ in \Cref{line:add_agent_multimulti}. By the while-loop condition, we have $\sum_{j \in W_r \setminus \{i\}} \someindexcontract_j \leq (1/M) \cdot \sum_{j \in S} \someindexcontract_j$. Additionally, since $i \notin Z$, it follows that $\icontract \leq (1/M) \cdot \sum_{j \in S} \someindexcontract_j$.    
We conclude that
    \begin{align*}
        \sum_{j \in W_r} \someindexcontract_j 
        = \sum_{j \in W_r \setminus \{i\}} \someindexcontract_j + \icontract  
        \leq \frac{1}{M} \cdot \sum_{j \in S} \someindexcontract_j + \frac{1}{M} \cdot \sum_{j \in S} \someindexcontract_j 
        = \frac{2}{M} \cdot \sum_{j \in S} \someindexcontract_j
        \le
        = \frac{2}{M} \cdot \sum_{j \in \agents} \someindexcontract_j.
    \end{align*}
Thus, the returned contract, $\noindexcontract' = 2\noindexcontract + \epsilonvec$ (\Cref{line:contract_multimulti}), satisfies,
$$
\sum_{i \in \agents} \icontract' 
= 
2 \cdot \sum_{i \in \agents} \icontract' + n \cdot \eps
\le 
\frac{4}{M} \sum_{i \in \agents} \icontract + \frac{1}{M} \sum_{i \in \agents} \icontract
=
\frac{5}{M} \sum_{i \in \agents} \icontract,
$$
where the inequality follows from the definition of $\eps$ in \Cref{line:eps_multimulti}.
We conclude that the conditions of the lemma are satisfied in this case.

Suppose that the algorithm returns the remaining agents $U$, i.e., it did not reach \Cref{line:Wr_is_good_multimulti}. We have:
\begin{align*}
    f(U) &\geq f(S) - \sum_{i \in Z} f(\{i\}) - \sum_{r=1}^{M-|Z|-2} f(W_r) && (\text{by subadditivity of $f$})\\
    &\geq f(S) - \frac{M-2}{M-1} \cdot f(S) && (\text{by \Cref{line:singleton_if_multimulti} and \Cref{line:second_if_multimulti}}) \\
    &= \frac{1}{M-1} \cdot f(S) 
\end{align*}
Since each element added to $W_1, \ldots, W_{M-|Z|-2}$ comes from $U$ and is simultaneously removed from $U$, these sets are pairwise disjoint. Thus, we have:
\begin{align*}
    \sum_{j \in U} \someindexcontract_j 
    &= \sum_{j \in S} \someindexcontract_j - \sum_{i \in Z} \icontract - \sum_{r=1}^{M-|Z|-2} \sum_{i \in W_r} \icontract \\
    &\leq \sum_{j \in S} \someindexcontract_j - (|Z| + (M-|Z|-2)) \cdot (1/M) \cdot \sum_{j \in S}\someindexcontract_j \\
    &= (2/M) \cdot \sum_{j \in S} \someindexcontract_j,
\end{align*}
where the second inequality follows by the definition of $Z$ and the while-loop condition.
As in the previous case, this is enough to conclude that the returned contract, $\noindexcontract'$, satisfies $\sum_{i \in \agents} \icontract' \le (5/M) \cdot \sum_{i \in \agents} \icontract$.
This means that both of the conditions of the lemma are satisfied if the algorithm executes \Cref{line:return_multimulti}, which concludes the proof.
\end{proof}

\section{Missing Proofs}\label{app:missing_proofs}

\subsection{Proof of \Cref{lem:monotone_and_submodular}}\label{app:monotone_and_submodular}

\monotoneAndSubmodular*

\begin{proof}
    Fix a subset $A' \subseteq [n]$ with $|A'| = n/2$.
    Observe that $f_1$ is obviously monotone and submodular, thus it is enough to show that $f':=f_2 - f_3$ is monotone and submodular.
    Additionally, $f_2$ is monotone and submodular, as a uniform $\left({n}/{2}+1\right)$-demand function.

    \textbf{Monotonicity:} Fix $S \subseteq [n] \cup \{\badaction,\goodaction\}$, and $a \notin S$. We will show that $f'(S\cup \{a\}) \ge f'(S)$.
    Consider first the case where $S=\specialAgents \cup \{\badaction\}$.
    Note that $f_3(S) = \varepsilon/2$ and $f_3(S \cup \{a\}) = 0$.
    Therefore, we have $f'(S\cup \{a\})=f_2(S\cup \{a\}) \ge f_2(S) > f'(S)$. 
    
    Next, consider the case where $S\cup \{a\}=\specialAgents \cup \{\badaction\}$.
    Note that $f_3(S) = 0$ and $f_3(S \cup \{a\}) = \varepsilon/2$.
    Thus:
    \begin{align*}
        f'(S\cup \{a\}) 
        = f_2(S\cup \{a\})-{\eps}/{2} 
        =f_2(\specialAgents \cup \{\badaction\}) - \eps/2
        =\eps \cdot \left({n}/{2}+1\right) - \eps/2 
        \ge \eps \cdot ({n}/{2}) = f_2(S) = f'(S).
    \end{align*}
    Finally, if $\specialAgents \cup \{\badaction\} \notin \{S,S\cup \{a\}\}$, then $f_3(S\cup\{a\})=f_3(S)=0$, and  monotonicity is implied by the monotonicity of $f_2$.
    
    \textbf{Submodularity:}
    Let $S\subseteq [n]\cup \{\badaction,\goodaction\}$ and let $a,b\in [n]\cup \{\badaction, \goodaction\}$ such that $a,b \notin S$. We will show $f'(a\mid S) \ge f'(a\mid S\cup \{b\})$, or equivalently
    \[
    f'(\{a\}\cup S) - f'(S) \ge f'(\{a,b\}\cup S) - f'(\{b\}\cup S).
    \]
    Observe that unless one of $S, \{a\}\cup S, \{a,b\} \cup S$ equals $\specialAgents \cup \{\badaction\}$, this inequality is implied by submodularity of $f_2$, as in this case $f_3$ always evaluates to $0$.  
    
    If $S = \specialAgents \cup \{\badaction\}$, the inequality follows from submodularity of $f_2$:
    $$
    f'(a \mid S)= f_2(a \mid S) + {\eps}/{2} \ge f_2(a \mid S \cup \{b\})
    $$

    If $\{a\}\cup S=\specialAgents\cup \{\badaction\}$ (and similarly if $\{b\} \cup S = \specialAgents\cup \{\badaction\}$), then
    \begin{align*}
        f'(\{a\}\cup S) - f'(S) 
        = \eps \cdot \left({n}/{2}+1\right) - {\eps}/{2} - \eps \cdot ({n}/{2}) 
        = {\eps}/{2} \ge 0
        = f'(\{a,b\}\cup S) - f'(\{b\}\cup S).
    \end{align*}
    
    Finally, if $\{a,b\}\cup S=\specialAgents\cup \{\badaction\}$, 
    \begin{align*}
        f'(\{a\} \cup S') - f'(S) 
        &=
        f_2(\{a\} \cup S') - f_2(S) \\
        &=
        \eps 
        >
        {\eps}/{2} \\
        &= f_2(\{a,b\} \cup S') - {\eps}/{2} - f_2(\{b\} \cup S) \\
        &=
        f'(\{a,b\} \cup S') - f'(\{b\} \cup S).
    \end{align*}
    This concludes the proof.
\end{proof}

\subsection{Proof of \Cref{lem:gs_single_agent}} \label{app:gs_single_agent}
\gssingleagent*
In order to prove \Cref{lem:gs_single_agent}, we use the following result of \cite{dutting2022combinatorial}.
\begin{proposition}[\cite{dutting2022combinatorial}]\label{prop:gs_multoiaction_defk}
    In a multi-action problem with a single agent and gross substitutes $f$, let $0 \le \alpha_1 < \dots < \alpha_k \le 1$ be the minimal contracts in which the set of actions maximizing the agent's utility changes. Then, $k = O(n^2)$ and $\alpha_1,\dots,\alpha_k$ can be computed in poly-time with value oracle access to $f$.
\end{proposition}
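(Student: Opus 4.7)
The plan is to recast the agent's problem at contract $\alpha\in(0,1]$ as a parametric demand problem and then exploit gross substitutes to bound the number of breakpoints by $O(n^2)$, writing $n := |T|$ for the number of actions. Rewriting $\alpha f(S)-c(S)=\alpha\bigl[f(S)-\sum_{j\in S}c_j/\alpha\bigr]$, an agent-optimal set $S^\star(\alpha)$ is exactly a demand bundle for $f$ at prices $p_j(\alpha)=c_j/\alpha$; as $\alpha$ grows from $0$ to $1$, each coordinate of $p(\alpha)$ decreases monotonically along a ray through the origin. The agent's maximum utility $U(\alpha):=\max_S \alpha f(S)-c(S)$ is the upper envelope of finitely many affine functions of $\alpha$ (one per subset $S\subseteq T$), hence piecewise linear and convex. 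The critical contracts $\alpha_1<\dots<\alpha_k$ are precisely its breakpoints, so bounding $k$ reduces to bounding the number of distinct affine pieces of this envelope and, algorithmically, to walking through them in order of increasing $\alpha$.

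The core structural step is to show that crossing any breakpoint $\alpha_i$ changes the optimal set by a one-element add, one-element delete, or one-element swap; equivalently, there exist sets $S$ and $S'$ optimal just below and just above $\alpha_i$ with $|S\triangle S'|\le 2$. This should follow from the GS single-improvement (exchange) axiom applied at the common price vector $p(\alpha_i)$: any two demand bundles $S,S'$ there are connected by a sequence $S=U_0,U_1,\dots,U_m=S'$ of demand bundles at $p(\alpha_i)$ with $|U_\ell\triangle U_{\ell+1}|\le 2$ for every $\ell$. Each consecutive pair corresponds to two affine pieces of $U(\alpha)$ that cross at $\alpha_i$, so the transition decomposes into one-element moves contributing one breakpoint each. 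To then obtain $k=O(n^2)$, I would run a parametric-matroid-style potential argument: for each ordered pair of actions $(j,j')$, use GS to show that $j'$ can ``swap in for'' $j$ only $O(1)$ times as $\alpha$ sweeps upward, so each element participates in $O(n)$ exchanges and the total number of breakpoints is $O(n^2)$.

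Given this structural result, the algorithm is straightforward: initialize $\alpha_0\to 0^+$ with $S^\star(\alpha_0)=\emptyset$; at each step, given the current optimal $S=S^\star(\alpha_i)$, enumerate the $O(n^2)$ one-element neighbors $S'$ (add some $j\notin S$, remove some $j\in S$, or swap $j\in S$ for $j'\notin S$) and, for each, solve the linear equation $\alpha f(S')-c(S')=\alpha f(S)-c(S)$ to obtain a candidate threshold; the next breakpoint $\alpha_{i+1}$ is the smallest such threshold that exceeds $\alpha_i$ and whose neighbor actually becomes agent-optimal there. Each candidate requires $O(1)$ value queries to evaluate $f(S')$, so after $O(n^2)$ iterations every breakpoint in $(0,1]$ has been found. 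The main obstacle I expect will be the potential/charging argument sketched above: GS cleanly yields the single-element exchange structure at any fixed price vector, but translating this into a global $O(n^2)$ bound over the parametric sweep requires a carefully chosen monotone potential and a consistent tie-breaking rule at breakpoints (e.g., always selecting the lex-minimal demand bundle among the tied optima) to prevent the walk from oscillating among tied bundles at a common $\alpha_i$.
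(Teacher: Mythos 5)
This proposition is cited by the paper from \cite{dutting2022combinatorial} and is not re-proved here, so there is no in-paper proof to compare against; I will therefore assess your proposal on its own merits.

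Your high-level framing is on the right track and mirrors the structure one would expect from the original source: recasting the agent's problem as a parametric demand problem with prices $p_j(\alpha)=c_j/\alpha$, observing that $U(\alpha)=\max_S\{\alpha f(S)-c(S)\}$ is the upper envelope of finitely many affine functions (hence piecewise linear and convex, with slopes $f(S^\star(\alpha))$ non-decreasing — this is essentially \Cref{obs:f_increase_with_alpha}), and invoking the single-improvement characterization of GS demand to argue that consecutive optimal sets across a breakpoint can be connected by one-element moves. Those ingredients are all correct.

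The genuine gap is exactly where you flag it: the $O(n^2)$ bound. You assert that ``for each ordered pair of actions $(j,j')$, use GS to show that $j'$ can swap in for $j$ only $O(1)$ times as $\alpha$ sweeps upward,'' but you offer no potential function, no charging scheme, and no GS-specific lemma that would deliver it. Nothing you wrote rules out a fixed pair oscillating many times: convexity of the envelope gives monotonicity of $f(S^\star(\alpha))$, but that controls the \emph{slopes}, not which specific elements sit inside the demand bundle, and a priori a single element could enter and leave repeatedly while $f(S^\star(\alpha))$ keeps climbing. This counting argument is the actual content of the proposition — it is precisely what separates gross substitutes from general submodular rewards — and sketching it as a ``parametric-matroid-style potential argument'' without specifying the potential leaves the result unproved. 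Relatedly, the correctness of your walk algorithm silently depends on the same missing structure: you take the smallest threshold at which a one-element neighbor of the \emph{current} set ties with it, but you would also need to argue (i) that at this threshold no set farther than one swap away overtakes first, and (ii) that after crossing, a single local move (rather than a chain of them) already lands on a demand set for the next interval. Both follow if one has the GS structural lemma you are alluding to, but until it is stated and proved, the algorithm's termination in $O(n^2)$ iterations — and hence the polynomial query complexity — does not follow.
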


\begin{proof} [Proof of \Cref{lem:gs_single_agent}]
    Fix $i \in \agents$, let $\icontract^\star \in [0,B]$ be the optimal budget-feasible contract with respect to $\varphi$, and let $S^\star \subseteq T_i$ be the set of actions picked by $i$ in response to $\icontract^\star$.
    As $\varphi$ is weakly decreasing in $\icontract^\star$ (\Cref{def:goodobj_multi_multi}\ref{def:weakly_decreasing_property}), we can assume without loss that $\icontract^\star$ is minimal such that $S^\star$ is incentivized. 

    By \Cref{prop:gs_multoiaction_defk}, the set of contracts for which the agent's best-response changes can be computed in poly-time and  there are at most $O(n^2)$ such contracts. Let $\alpha_1,\dots,\alpha_{\ell}$ be those contracts that are also budget-feasible.
    As $f$ is gross substitutes, computing the agent's best response for $\alpha_j$, namely $S_j \in \argmax_{S \subseteq T_i} \{ \alpha_jf(S)-\sum_{l \in S} c_l \}$, can be done efficiently with value queries to $f$.
    Using value oracle access to $\varphi$, one can efficiently find the optimal set of actions among the $k=O(n^2)$ alternatives.
\end{proof}

\subsection{Missing Proofs from \Cref{sec:model}} \label{sec:proofs_BEST}

\begin{observation}\label{obs:common_objs}
    Any monotone subadditive set function of the actions that is sandwiched between profit and expected reward, including welfare and reward, is a \goodobj\ objective. 
    Additionally, whenever $f$ is subadditive, profit is a \goodobj\ objective.
\end{observation}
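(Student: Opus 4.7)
The plan is to verify the four BEST properties of \Cref{def:goodobj_multi_multi} separately for each claim. For the first claim, fix any monotone subadditive set function $\varphi:2^{T}\to[0,1]$ that is sandwiched between profit and reward, viewed as a contract-independent objective $\varphi(\noindexcontract,S)=\varphi(S)$. Property (i) holds by assumption. Property (iii) is immediate from monotonicity of $\varphi$. Property (iv) is vacuous, since the value does not depend on the contract. For property (ii), I would apply subadditivity of $\varphi$ to the disjoint partition $S=S_{i}\sqcup S_{-i}$ to get $\varphi(S)\le \varphi(S_{i})+\varphi(S_{-i})$, and then use the sandwich inequality $\varphi(S_{-i})\le f(S_{-i})$ to conclude
\[
\varphi(\noindexcontract,S)\le\varphi(\noindexsubcontract{i},S_{i})+f(S_{-i}).
\]

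For the second claim, profit under subadditive $f$, I would work directly with the formula $u_{P}(\noindexcontract,S)=(1-\sum_{j}\alpha_{j})f(S)$. Property (i) is immediate since profit sandwiches itself between $u_{P}$ and $f$. Property (iii) follows from monotonicity of $f$ together with the fact that $1-\sum_{j}\alpha_{j}\ge 0$ for any budget-feasible $\noindexcontract$. Property (iv) holds because raising $\noindexcontract$ coordinate-wise only shrinks the factor $1-\sum_{j}\alpha_{j}$. The only step that actually uses subadditivity of $f$ is property (ii); here I would chain
\[
u_{P}(\noindexcontract,S)\le(1-\alpha_{i})f(S)\le(1-\alpha_{i})\bigl(f(S_{i})+f(S_{-i})\bigr)\le(1-\alpha_{i})f(S_{i})+f(S_{-i})=u_{P}(\noindexsubcontract{i},S_{i})+f(S_{-i}),
\]
where the first step uses $\sum_{j}\alpha_{j}\ge\alpha_{i}$ together with $f(S)\ge0$, the second uses subadditivity of $f$, and the third uses $1-\alpha_{i}\le 1$.

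The remaining task is to check that reward $f$ and welfare $f-c$ fall into the first claim. For reward this is immediate under the standing assumption that $f$ is monotone and (at worst) subadditive, and the sandwich property is trivial. For welfare, subadditivity with respect to the agent-based decomposition $S_{i}\sqcup S_{-i}$ follows from subadditivity of $f$ together with additivity of $c$ across the disjoint action sets $T_{i}$, so the decomposition step of property (ii) carries over verbatim. The main obstacle I anticipate is pointwise monotonicity of welfare in $S$, which does not literally hold once an action's cost exceeds its marginal contribution; I would handle this either by restricting attention to the action profiles relevant to $\MAX\varphi(B)$ (Nash equilibria, at which no agent strictly benefits from dropping actions, so welfare is locally monotone in the agent-wise sense needed here) or by replacing $\varphi(S)$ with the manifestly monotone $\tilde\varphi(S)=\max_{S'\subseteq S}(f(S')-c(S'))$, which coincides with welfare on every equilibrium and therefore defines the same maximization problem.
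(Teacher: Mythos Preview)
Your proof of the two core claims matches the paper's. For profit, you verify property~(ii) via subadditivity of $f$ just as the paper does (your chain bounds the prefactor by $1-\alpha_i$ first and then applies subadditivity, whereas the paper applies subadditivity first and then bounds the prefactor on $f(S_{-i})$ by $1$; both are equivalent). For a general monotone subadditive $h$ sandwiched between profit and reward, your decomposition $h(S)\le h(S_i)+h(S_{-i})\le f(S_{-i})+h(S_i)$ is identical to the paper's.

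Where you go further than the paper is in scrutinizing welfare. The paper's proof stops at the abstract statement and simply asserts that welfare is ``included,'' without verifying that $f-c$ is monotone in $S$ or that it is pointwise bounded below by $u_P$ for arbitrary (non-equilibrium) pairs. Your observation that $f(S)-c(S)$ need not be pointwise monotone is correct, and this is a wrinkle the paper glosses over. Of your two proposed fixes, the first---noting that property~(iii) is only ever invoked in the paper's arguments at pairs where the larger set is a best response containing the smaller one, in which case the needed inequality does hold---is the cleaner resolution and matches how the BEST axioms are actually used downstream; the monotone-hull fix $\tilde\varphi$ is natural but your claim that it coincides with welfare on every equilibrium would itself require an argument.
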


\begin{proof}
    Fix an instance $\multiInstance$, a contract $\noindexcontract$, a set of actions $S\subseteq\actions$, and an agent $i\in \agents$. 
    We begin by showing that profit satisfies properties $(i)$-$(iv)$ of  \Cref{def:goodobj_multi_multi}.
    Properties $(i)$, $(iii)$, and $(iv)$ follow directly from the definition of profit, so it remains to show that profit satisfies property $(ii)$ whenever $f$ is subadditive.
    Indeed,
    \[
    \begin{split}
    u_P(\noindexcontract, S) &= \left(1-\sum_{j\in \agents} \someindexcontract_j\right)f(S) \le  \left(1-\sum_{j\in \agents} \someindexcontract_j\right)f(S_{-i})+\left(1-\sum_{j\in \agents} \someindexcontract_j\right)f(S_i) \\
     &\le f(S_{-i})+\left(1-\icontract\right)f(S_i) = f(S_{-i}) + u_P(\noindexsubcontract{i}, S_i),
    \end{split}
    \]
    where the first inequality follows from subadditivity of $f$, and the second holds since $\noindexcontract \ge 0$.
    
    Let $h:2^\actions\rightarrow \reals_{\ge 0}$ be a monotone subadditive set function that is sandwiched between profit and the expected reward, and let $\varphi$ be the objective defined by $\varphi(\noindexcontract, S) = h(S)$ for any contract $\noindexcontract$. 
    Clearly, properties $(i)$, $(iii)$ and $(iv)$ of \Cref{def:goodobj_multi_multi} are satisfied. It remains to show property (2). By subadditivity of $h$, 
    \[
    \varphi(\noindexcontract, S) = h(S) \le h(S_{-i})+h(S_i) \le f(S_{-i})+\varphi(\noindexsubcontract{i}, S_i),
    \]
    as needed.
\end{proof}

\begin{observation}\label{obs:convex_comb_objs}
    Let $\varphi^1,\dots,\varphi^k$ be \goodobj\ objectives, and let $\lambda_1,\dots,\lambda_k\in (0,1)$ be such that $\lambda_1 + \ldots + \lambda_k =1$. It holds that the objective $\varphi$ defined as $\varphi(\noindexcontract, S) = \sum_{i=1}^k\lambda_j \varphi^j(\noindexcontract,S)$ is \goodobj. 
\end{observation}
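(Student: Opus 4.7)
The plan is to verify each of the four defining properties of \goodobj\ (\Cref{def:goodobj_multi_multi}\ref{def:sandwichproperty}--\ref{def:weakly_decreasing_property}) for $\varphi = \sum_{j=1}^k \lambda_j \varphi^j$ by exploiting the fact that each inequality defining BEST is linear in the objective, and the coefficients $\lambda_j$ are nonnegative and sum to $1$. In other words, every BEST property is preserved under nonnegative linear combinations that preserve the constant $1$, so the result should follow by taking the $\lambda_j$-weighted sum of the corresponding inequalities for each $\varphi^j$.

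Concretely, I would fix an instance $\multiInstance$, a contract $\noindexcontract$, an action set $S\subseteq \actions$, and an agent $i \in \agents$, and then proceed property by property. For \ref{def:sandwichproperty}, summing $u_P(\noindexcontract,S)\le \varphi^j(\noindexcontract,S)\le f(S)$ with weights $\lambda_j$ and using $\sum_j \lambda_j=1$ immediately yields $u_P(\noindexcontract,S)\le \varphi(\noindexcontract,S)\le f(S)$. For \ref{def:decompositionproperty}, from $\varphi^j(\noindexcontract,S)\le f(S_{-i})+\varphi^j(\noindexsubcontract{i},S_i)$ take the same weighted sum, again using $\sum_j \lambda_j =1$ to absorb the $f(S_{-i})$ term, giving $\varphi(\noindexcontract,S)\le f(S_{-i})+\varphi(\noindexsubcontract{i},S_i)$. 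For \ref{def:weakly_increasing_property}, for any $S\subseteq S'$ sum $\varphi^j(\noindexcontract,S)\le \varphi^j(\noindexcontract,S')$ with weights $\lambda_j\ge 0$, giving $\varphi(\noindexcontract,S)\le \varphi(\noindexcontract,S')$. Likewise, for \ref{def:weakly_decreasing_property}, for any $\noindexcontract\le \noindexcontract'$ coordinate-wise, sum $\varphi^j(\noindexcontract,S)\ge \varphi^j(\noindexcontract',S)$ with the weights to conclude $\varphi(\noindexcontract,S)\ge \varphi(\noindexcontract',S)$.

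Finally, one should check that $\varphi$ still qualifies as an objective in the sense of \Cref{def:obj_multi_multi}, i.e., is computable in polynomial time with value-oracle access to $f$. This is immediate: evaluating $\varphi(\noindexcontract,S)$ requires running each of the $k$ poly-time algorithms for $\varphi^j$ and taking the weighted sum, which is itself poly-time. There is no real obstacle in this proof; the only thing to be careful about is that $\sum_j \lambda_j = 1$ is used exactly in properties \ref{def:sandwichproperty} and \ref{def:decompositionproperty} to keep the $f(S)$ and $f(S_{-i})$ terms unscaled, while $\lambda_j \geq 0$ alone suffices for the monotonicity properties \ref{def:weakly_increasing_property} and \ref{def:weakly_decreasing_property}.
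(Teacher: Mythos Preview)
Your proposal is correct and follows essentially the same approach as the paper: verify each of the four BEST properties by taking the $\lambda_j$-weighted sum of the corresponding inequality for each $\varphi^j$, using $\sum_j \lambda_j = 1$ where needed. You are slightly more thorough in that you also verify poly-time computability of $\varphi$ (\Cref{def:obj_multi_multi}), which the paper's proof leaves implicit.
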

\begin{proof}
    Fix an instance $\multiInstance$, a contract $\noindexcontract$ and a set of actions $S \subseteq T$. The lower bound of property $(i)$ holds,
    \begin{align*}
        \sum_{j=1}^k\lambda_j \varphi^j(\noindexcontract,S)
        \ge
        \sum_{j=1}^k\lambda_j (1-\sum_{i \in \agents} \icontract)f(S)
        =
        (1-\sum_{i \in \agents} \icontract)f(S),
    \end{align*}
    and also the upper bound
    \begin{align*}
        \sum_{j=1}^k\lambda_j \varphi^j(\noindexcontract,S)
        \le
        \sum_{j=1}^k\lambda_j f(S)
        =
        f(S).
    \end{align*}
    To see that property (ii) holds, fix some agent $i \in \agents$. 
    \begin{align*}
        \sum_{j=1}^k\lambda_j \varphi^j(\noindexcontract,S)
        &\le
        \sum_{j=1}^k\lambda_j (f(S_{-i})+ \varphi^j(\noindexsubcontract{i},S_i))\\
        &=
        f(S_{-i})+\sum_{j=1}^k\lambda_j \varphi^j(\noindexsubcontract{i},S_i)
        =
        f(S_{-i})+ \varphi(\noindexsubcontract{i},S_i),
    \end{align*}
    Clearly, the weak monotonicity in $\noindexcontract$ and $S$ carry over to $\varphi(\noindexcontract,S)$.
    This concludes the proof.
\end{proof}

\subsection{Missing Proofs from \Cref{sec:fptas}}\label{sec:apxfptas}

\begin{proof}[Proof of \Cref{thm:fptas} for Reward and Welfare Maximization]
First, observe that for maximizing reward, 
using the same notation as in the proof for profit maximization, 
it suffices to return the 
contract-equilibrium pair $(\noindexcontract',S')$ that minimizes 
the sum of payments in the definition of $A^{(f)}(n,\overline{x})$, 
instead of $A^{(f)}(n,x')$. This contract is budget-feasible by the choice of $\overline{x}$, and we have
\begin{align*}
    f(S') \geq \tilde{f}(S') \geq \overline{x} \geq \tilde{f}(S^\star) \geq (1-\epsilon) \cdot f(S^\star)
\end{align*}
by the observations made in the proof for profit maximization.

The proof for welfare maximization follows analogously by considering $A^{(f-c)}$ instead of $A^{(f)}$ and taking $b = \max_{a \in S^\star} (f(\{a\}) - c(\{a\}))$.
\end{proof}

\section{Linear Contracts are Without Loss of Generality}\label{sec:linear}

In this section, we demonstrate that when optimizing any \goodobj\ objective in the budgeted multi-agent combinatorial-actions setting, it suffices to consider linear contracts without loss of generality.

\newcommand{\gencontract}{t:\agents \times \{0,1\} \to \reals_{\ge 0}}
\newcommand{\gencontractP}{t':\agents \times \{0,1\} \to \reals_{\ge 0}}

A general contract $\gencontract$ specifies the payment to each agent $i$ in case the project fails, which we denote by $t_i(0)$, and in case it succeeds, denoted by $t_i(1)$.

We begin by defining objectives and \goodobj\ objectives for general contracts, akin to definitions \ref{def:obj_multi_multi} and \ref{def:goodobj_multi_multi}. 
The other definition such as restricted contracts generalize naturally to any $t$.

\begin{definition} [Objectives in the Multi-Agent Combinatorial-Actions Model, General Contracts]
    An \emph{objective} $\varphi$ is defined by a poly-time algorithm that, given a problem instance $\multiInstance$, a contract $\gencontract$, and a subset of actions $S\subseteq T$, outputs a non-negative real number, denoted $\varphi_{\multiInstance}(t,S)$. This algorithm has value oracle access to $f$. 
    We omit the subscript when the instance is clear from context.
\end{definition}

\begin{definition}[Beyond Standard (BEST) Objectives in the Multi-Agent Combinatorial-Actions Model, General Contracts]  \label{def:goodobj_multi_multi_general}
    An objective $\varphi$ belongs to the class of  beyond standard (BEST) objectives if, for any instance $\multiInstance$, it is:
    \begin{enumerate}[label=(\roman*)]
        \item \emph{Sandwiched between profit and reward:} For any $\gencontract$ and $S \subseteq \actions$, it holds that
        $u_P(t,S)\le \varphi(t, S) \le f(S)$.
        \item \emph{Decomposable:} For any $\gencontract$, any $S \subseteq \actions$, and any $i\in \agents$, it holds that $\varphi(t, S) \le f(S_{-i})+\varphi(t|_{i}, S_i)$.
        \item \emph{Weakly increasing in $S$:} For any $\gencontract$ and any $S\subseteq S'\subseteq \actions$, $\varphi(t, S) \le \varphi(t, S')$.
        \item \emph{Weakly decreasing in $t$:} For any $S\subseteq \actions$, any $\gencontract$ and $\gencontractP$ such that $t \le t'$ (coordinate-wise), $\varphi(t, S) \ge \varphi(t', S)$.
    \end{enumerate}
\end{definition}

\begin{claim}\label{cla:linear_wlog}
For any general contract $\gencontract$, and an equilibrium $S \in \nash(t_i)$, the linear contract defined by $\icontract = \max\{0,t_i(1)-t_i(0)\}$ satisfies $S \in \nash(\noindexcontract)$.
\end{claim}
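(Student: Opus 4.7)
The plan is to reduce the claim to a per-agent statement, using the fact that each agent $i$'s best response to $S_{-i}$ depends only on their own payment scheme ($t_i$ or $\alpha_i$). Fix $t$, an equilibrium $S\in\nash(t)$, and define $\alpha_i = \max\{0, t_i(1)-t_i(0)\}$. Expanding agent $i$'s expected utility under $t$, observe that
\[
u_i^t(S'_i \sqcup S_{-i}) \;=\; t_i(0) + (t_i(1)-t_i(0)) \cdot f(S'_i \sqcup S_{-i}) - c(S'_i),
\]
so $u_i^t$ is an affine function of $f(\cdot)$. I will split into two cases according to the sign of $t_i(1)-t_i(0)$.

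\textbf{Case 1 ($t_i(1) \ge t_i(0)$).} Here $\alpha_i = t_i(1)-t_i(0)$, and comparing the expressions gives $u_i^\alpha(S'_i \sqcup S_{-i}) = u_i^t(S'_i \sqcup S_{-i}) - t_i(0)$ for every $S'_i \subseteq T_i$. Since the two utility functions differ by an additive constant independent of $S'_i$, the best-response sets coincide, and the equilibrium condition under $t$ transfers directly to $\alpha$.

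\textbf{Case 2 ($t_i(1) < t_i(0)$).} Here $\alpha_i = 0$, so under the linear contract agent $i$'s utility becomes $-c(S'_i)$, which is maximized by any $S'_i$ with $c(S'_i)=0$. The key subclaim is that $c(S_i)=0$. To see this, apply the equilibrium condition under $t$ to the deviation $S'_i = \emptyset$:
\[
(t_i(1)-t_i(0)) \cdot \bigl(f(S) - f(S_{-i})\bigr) \;\ge\; c(S_i).
\]
The left-hand side is non-positive, because $t_i(1)-t_i(0)<0$ and $f(S)\ge f(S_{-i})$ by monotonicity of $f$; since $c(S_i) \ge 0$, both sides must equal zero. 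In particular $c(S_i)=0$, which makes $S_i$ a best response under $\alpha_i=0$, since for every $S'_i \subseteq T_i$ we have $-c(S_i) = 0 \ge -c(S'_i)$.

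\textbf{Obstacle.} The only subtlety is Case 2: when the general contract actually penalizes success (a ``negative linear slope''), the naive truncation $\alpha_i = 0$ destroys the punishment mechanism, and one has to argue that the punishment was never being used---i.e., that the agent is already playing only zero-cost actions. The monotonicity of $f$ and the non-negativity of costs are what force this conclusion, and these are the only substantive inputs to the proof. Combining the two cases across all agents (each handled independently since agent $i$'s deviations only involve $T_i$) yields $S \in \nash(\alpha)$.
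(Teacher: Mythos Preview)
Your proof is correct and follows essentially the same approach as the paper's: both rewrite agent $i$'s utility under $t$ as an affine function of $f$, split on the sign of $t_i(1)-t_i(0)$, and in the negative-slope case compare against the deviation $S'_i=\emptyset$ to force $c(S_i)=0$. Your write-up is in fact slightly more explicit than the paper's in spelling out that the conclusion $c(S_i)=0$ relies on monotonicity of $f$ (the paper just says $S_i$ would be ``dominated by $\emptyset$'').
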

\begin{proof}
Fix a contract $\gencontract$, an agent $i \in \agents$, and a profile of actions taken by all other agents $S_{-i} \subseteq \actions_{-i}$.
If agent $i$'s best-response is $S_i$, then for any $S'_i \subseteq \actions_i$,
$$
t_i(1)f(S_{-i} \sqcup S_i) + t_i(0)(1-f(S_{-i} \sqcup S_i)) - c(S_i) 
\ge
t_i(1)f(S_{-i} \sqcup S'_i) + t_i(0)(1-f(S_{-i} \sqcup S'_i)) - c(S'_i).
$$
Equivalently,
\begin{equation}\label{eq:t1-t0}
(t_i(1)-t_i(0))f(S_{-i} \sqcup S_i) - c(S_i) 
\ge
(t_i(1)-t_i(0))f(S_{-i} \sqcup S'_i) - c(S'_i).
\end{equation}

If $t_i(0) > t_i(1)$, clearly the agent's best-response $S_i$ must satisfy $c(S_i)=0$, otherwise it is dominated by $\emptyset$.
Thus, $S_i$ is also a best-response for the contract $\icontract = 0$, as any set with zero cost.

Otherwise, $S_i$ is a best-response with respect to the contract $\icontract = t_i(1)-t_i(0)$.

We conclude that any profile $S$ which is incentivized by the contract $t$, is also incentivized by the linear contract $\icontract = \max \{t_i(1)-t_i(0),0\}$.
\end{proof}

\begin{corollary}
    Let $\varphi$ be any \goodobj\ objective, for any contract $\gencontract$ and a profile $S \in \nash(t)$, there exists a linear contract $\noindexcontract$ 
    such that $S \in \nash(\noindexcontract)$ and $\varphi(\noindexcontract,S_{\noindexcontract} ) \ge \varphi(t,S)$.
\end{corollary}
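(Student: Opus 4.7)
The plan is a short two-step argument that combines \Cref{cla:linear_wlog} with property (iv) of \Cref{def:goodobj_multi_multi_general} (weak monotonicity in the contract). First, I would apply \Cref{cla:linear_wlog} to the given general contract $t$ and the equilibrium $S\in \nash(t)$ to obtain a linear contract $\noindexcontract$ with $\icontract = \max\{0, t_i(1) - t_i(0)\}$. This already supplies the desired candidate and guarantees $S\in \nash(\noindexcontract)$, handling the equilibrium part of the statement.

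Second, I would show $\varphi(\noindexcontract, S)\ge \varphi(t,S)$. The natural way to do so is to identify the linear contract $\noindexcontract$ with the general contract $t^\star$ defined coordinate-wise by $t^\star_i(0)=0$ and $t^\star_i(1)=\icontract$, and then argue $t^\star\le t$ pointwise. At the failure outcome, $t^\star_i(0)=0\le t_i(0)$ because payments are non-negative. At the success outcome, $t^\star_i(1)=\max\{0,\,t_i(1)-t_i(0)\}\le t_i(1)$, again using $t_i(0)\ge 0$. Applying property (iv) of \Cref{def:goodobj_multi_multi_general} to $t^\star\le t$ then yields $\varphi(t^\star,S)\ge \varphi(t,S)$, which is precisely the desired inequality (reading $\varphi(\noindexcontract,S)$ as $\varphi(t^\star,S)$ under the embedding of linear contracts into general contracts).

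The main conceptual point, rather than an obstacle, is to be explicit that a linear contract is simply a general contract with zero failure payment, so that the BEST monotonicity property applies to it uniformly. Once that identification is made, both the equilibrium preservation (via \Cref{cla:linear_wlog}) and the objective-value inequality (via property (iv)) are immediate, and no further computation is required.
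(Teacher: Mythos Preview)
Your proposal is correct and follows essentially the same approach as the paper's proof: define $\icontract=\max\{0,t_i(1)-t_i(0)\}$, invoke \Cref{cla:linear_wlog} for $S\in\nash(\noindexcontract)$, and apply property~(iv) of \Cref{def:goodobj_multi_multi_general} after observing that the linear contract is coordinate-wise below $t$. You are simply more explicit than the paper about the embedding $t^\star_i(0)=0$, $t^\star_i(1)=\icontract$ and the verification of $t^\star\le t$.
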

\begin{proof}
    Observe that the contract $\icontract = \max \{0, t_i(1)-t_i(0)\}$ satisfies $\noindexcontract \le t$ coordinate-wise. Thus, by property $(iv)$ of \Cref{def:goodobj_multi_multi_general}, $\varphi(\noindexcontract,S) \ge \varphi(t,S)$.
    By \Cref{cla:linear_wlog}, $S \in \nash(\noindexcontract)$.
\end{proof}

\end{document}